\newtheorem{thm}{Theorem}[section]
\newtheorem{lemma}[thm]{Lemma}
\newtheorem{prop}[thm]{Proposition}
\newtheorem{definition}[thm]{Definition}
\theoremstyle{definition}
\numberwithin{equation}{section}
\def\={\  =\  }
\def\<{ \  \leq \  }
\def\0{\textbf{0}}
\newcommand{\R}{\mathbb{R}}
\newcommand{\Z}{\mathbb{Z}}
\newcommand{\E}{\mathbb{E}}
\newcommand{\1}{\mathbbm{1}}
\def\eps{\varepsilon}
\renewcommand\P{\mathbb{P}}
\newcommand{\extB}{\partial_{\mathrm{v}}}
\renewcommand{\vert}{\text{V}}
\newcommand{\edge}{\text{E}}
\newcommand{\e}{\cancel{e}}
\DeclareMathOperator\zero{\mathbf{0}}
\DeclareMathOperator{\Leb}{Leb}
\def\={\  = \  }
\def\be{\begin{equation}}
\def\ee{\end{equation}}
\def\ba{\be \begin{aligned}}
\def\ea{\end{aligned} \ee}
\def\G{\mathbb{G}}
\def\L{\mathcal{L}}
\def\zero{\textbf{0}}
\date{\normalsize Version of \today}
\begin{document}

\author{Michael Aizenman}
\address{Michael Aizenman\hfill\break
    Departments of Mathematics and Physics,
    Princeton University,
    Princeton, NJ 08544,
    USA.
\footnote{Weston Visiting Professor, Weizmann Institute of Science, Rehovot Israel.}}
\email{aizenman@princeton.edu}

\author{Matan Harel}
\address{Matan Harel\hfill\break
    School of Mathematical Sciences,
    Tel Aviv University,
    Israel.}
\email{matanharel8@tauex.tau.ac.il}

\author{Ron Peled}
\address{Ron Peled\hfill\break
    School of Mathematical Sciences,
    Tel Aviv University,
    Israel.}
\email{peledron@tauex.tau.ac.il}

\title[Exponential decay of correlations in the $2D$ RFIM]
{Exponential decay of correlations \\ in the $2D$ random field Ising model}

\date{\today}

\begin{abstract}
An extension of the Ising spin configurations to continuous functions is used for an exact
representation of  the Random Field Ising Model's order parameter in terms of disagreement percolation.   This  facilitates an extension of the recent analyses of the decay of correlations to positive temperatures, at homogeneous but arbitrarily weak disorder.
 \end{abstract}
\maketitle

\section{Introduction}

\subsection{The Imry-Ma phenomenon and the question discussed here}
\mbox{} \\[-1ex]

In the Imry-Ma phenomenon, first-order phase transitions of two-dimensional statistical mechanics systems are turned  continuous under the incorporation of quenched disorder in the suitably conjugate field variable~\cite{IM75}. This is exemplified by the rounding of the Ising model's famed discontinuity of magnetization as a function of the external field $h$, through the addition of a random field of arbitrarily weak strength $\eps >0$. \\

The Random-Field Ising Model (RFIM) is a spin system on a finite graph with the  Hamiltonian
\be \label{H}
  H(\sigma) := - \sum_{\substack{\{u,v\}\in\edge(\G) }}
J_{u,v}\,  \sigma_u \sigma_v  - \sum_{v\in \vert(\G)} (h+ \eps\,\eta_v) \sigma_v \,.
\ee
where $\{\sigma_v\}_{v\in \vert(\G)} $ are the Ising $\pm 1$-valued variables, $\{J_{u,v}\}$ are positive coupling constants, $\{\eta_v\}_{v\in \vert(\G)}$ are independent, identically distributed, non-constant external fields, $\edge(\G)$ is the graph's edge set, and $\eps$ is the disorder strength parameter.

For a fixed realization of the random field, $\eta$, the Gibbs equilibrium state is described by the probability measure
\be\label{eq:MeasureDef}
\mathbb{P}_{\G}(\sigma) := \frac{1}{Z_{\G}} \exp\left(-\beta H(\sigma)\right),
\ee
where $\beta  = 1/T \in [0, \infty)$ is the inverse temperature and $Z_{\G}$ is  the normalizing factor, which is referred to as the partition function.   In finite volumes the Gibbs measure depends also on the boundary conditions, e.g. the imposition of specific  values on the spins along the boundary.   Given $A \subset \G$, and a function $\tau: A \rightarrow \{-1,+1\}$, we denote by $\mathbb{P}_{\G}^{A, \tau}$ the Gibbs measure on $\G$ restricted to configurations such that $\sigma = \tau$ on $A$; its normalizing factor is
\be
Z_{\G}^{A,\tau} := \sum_{\sigma \in \{-1,+1\}^{\vert(\G)}} \exp\left(- \beta H(\sigma)\right) \cdot \1[\sigma = \tau \text{ on A}].
\ee
If $\G$ is a subset of $\mathbb{Z}^2$, $A$ is its internal vertex boundary (denoted $\extB \G$), and $\tau$ is identically $+1$, we will abbreviate $\mathbb{P}_{\G}^{A, \tau}$ by $\mathbb{P}_{\G}^{+}$; the negative case is defined analogously. The associated expectation operators will be denoted by $\langle \cdot \rangle_{\G}^{+}$ and $\langle \cdot \rangle_{\G}^{,-}$.

For the RFIM, the $+$ and $-$ boundary conditions are of particular interest, since, for any $A$, the Gibbs states corresponding to all other values of $\tau$ are bracketed by these two (through the FKG inequality).    In this case, the formal expression of the rounding effect is the statement that, in two dimensions and for arbitrarily weak disorder strength $\eps\neq 0$, there is no residual symmetry breaking, in the sense that
\be
\lim_{\Lambda \to \Z^2}  \mathbb{P}_{\Lambda}^{ +} \ = \  \lim_{\Lambda \to \Z^2}  \mathbb{P}_{\Lambda}^{ -}
\ee
as probability measures (or, equivalently, as expectation value functionals) describing the system in the infinite volume limit.

For the Ising model, equality of the limiting $\pm$ states is equivalent to the uniqueness of the infinite volume Gibbs state.
To place this in a broader context, let us add that, for systems which are not endowed with an FKG-type monotonicity property, one has only the weaker statement that the free energy function is differentiable in $h$, or, equivalently, that all Gibbs states assign a common value to the volume-averaged magnetization~\cite{AW89, AW90}.  However, the  robustness of the thermodynamic version of the statement has enabled its extension to quantum systems~\cite{AGL12}.

For spin models with rotational symmetry, such as the $O(N)$ models, the rounding effect (at any $\eps>0$) extends  to three and four dimensions, provided the probability distribution of $\eta \in \R^N$ is rotationally invariant \cite{AW89,AW90}. Furthermore, all the statements also hold in the limit $\beta \to \infty$, in which case they refer to the model's infinite-volume ground state(s).  \\

An intuitive explanation of the effect's dependence on dimension can be found by comparing the contribution of two possibly conflicting terms to the difference of free energies between two ordered states in a box of linear size $L$.  One term is the  coupling with the random field throughout the volume of the box, and  the other is the  coupling with the neighboring regions.  The first produces a local magnetization bias, while the second tends to keep the local magnetizations aligned. It seems natural to guess that the effect of the local field is a Gaussian variable of scale $\eps L^{d/2}$, while the boundary pull is of the order of $L^{d-1}$, or, in the presence of rotational symmetry, $L^{d-2}$.
At the critical dimension, which is $d=2$ in case of the RFIM~\cite{Imb85,BK87,BK882}, the two scale similarly.  However, for any given site $u\in \Z^2$, fluctuations will cause the random field in a box centered at $u$ to eventually exceed the boundary pull.  This suggests that, even at weak disorder ($\eps \ll J$), the Gibbs state resembles a patchwork of locally ordered states. \\

Since the early works on the subject, questions have been raised concerning the decay rate of the resulting state's correlation functions.  These relate to the decay of correlations among the local spin magnetizations, which are quasi-local functions of the quenched random field, as well as to the typical decay of spin correlations within the quenched state.  As presented in~\cite{AP18}, both are dominated by the function
\begin{equation} \label{eq:m}
  \begin{split}
  m(L) \equiv         m(L;T, \mathcal J, h,  \eps)    \,  := \, \frac 1 2 \left[
    \E[\langle\sigma_\0\rangle_{\Lambda(L)}^{+} ] \ - \      \E[\langle\sigma_\0\rangle_{\Lambda(L)}^{ -}]
    \right] \,
  \end{split}
\end{equation}
where
\begin{equation}
  \Lambda_u(L):=\{v\in\Z^2\,\colon\, d(u,v)\le L\}\,\quad \mbox{,}\quad \  \Lambda(L) :=   \Lambda_\zero(L) \,,
\end{equation}
with $d(u,v)$  the graph distance on  $\Z^2$ and $\zero:=(0,0)$.

 It has been early recognized, and since then proven in a variety of ways,  that at \emph{high disorder} and/or high temperatures, $  m(L) $ decays exponentially fast \cite{ImbFro85,Ber85,CJN18}.  The more interesting question is the nature of the decay at weak disorder, and in particular, whether there is a qualitative difference between \emph{weak} and \emph{very weak} disorder, with transition from exponential decay at weak disorder to power-law decay at very weak disorder \cite{GM82,DS84,BK88}.  It is convenient and natural to focus on the case of Gaussian disorder.  This special case is fully expected to be indicative of more general distributions.

Recent results on the subject include:  i) an initial bound $m(L) \leq C(\eps)/\sqrt{\log \log L}$ \cite{C17},  ii) power-law bounds for all $\eps >0$ and  $T\geq0$, including also all finite-range ferromagnetic interactions \cite{AP18},
and, more recently,  iii) exponentially-decaying upper bounds for all  $\eps>0$~\cite{DX19}, limited  to the nearest-neighbor case and $T=0$.   The latter result answers (in the negative)  the question of possible transition in $\eps$  for the RFIM's ground state, but stops short of addressing the additional effect of thermal fluctuations.

The main result of this paper, which is being produced in parallel to an ongoing extension of~\cite{DX19} by its authors to positive temperatures, is the proof of the following statement.

\begin{thm}\label{thm:exponential_bound}
In the random-field Ising model on $\mathbb{Z}^2$ with the standard nearest-neighbor interaction of strength $J$ and
independent standard Gaussian random field  $(\eta_v)$,  for any temperature $T \geq 0$, uniform external field $h \in \mathbb{R}$, and disorder intensity $\eps>0$,
  \begin{equation} \label{eq:exponential_bound}
m(L; T , J, h, \eps) \le C(J/\eps) \cdot e^{-c(J/\eps) L} \, ,
  \end{equation}
with  $c,C >0$ that do not depend on $T, h$ or $L$.
\end{thm}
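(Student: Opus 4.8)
The plan is to express the order parameter $m(L)$ as an $\eta$-averaged probability, in a suitable coupling, that the origin is joined to $\extB\Lambda(L)$ by a ``disagreement'' path, and then to show that any such path forces the random field to realize a deviation whose cost is linear in $L$, with constants governed only by $J/\eps$ and in particular uniform in $T$. The first step is a reduction to a disagreement event: for a fixed realization of $\eta$, let $\mathbf{P}_\eta$ be the monotone coupling of $\mathbb{P}^+_{\Lambda(L)}$ and $\mathbb{P}^-_{\Lambda(L)}$ supplied by the FKG inequality, under which $\sigma^-\le\sigma^+$ pointwise almost surely. Since $\sigma^+_\0-\sigma^-_\0\in\{0,2\}$ in this coupling,
\[
  m(L)\= \E_\eta\big[\mathbf{P}_\eta(\sigma^+_\0\neq\sigma^-_\0)\big],
\]
and $\{\sigma^+_\0\neq\sigma^-_\0\}$ is the event that $\0$ lies on a connected disagreement cluster reaching $\extB\Lambda(L)$, where the two boundary conditions disagree. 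At $T=0$ this cluster is the rigid min-cut/flow object exploited in~\cite{DX19}; the role of the continuous extension announced in the abstract is to provide the $T>0$ substitute.

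The second step is the exact percolation representation. Following that extension, one replaces the two-valued comparison by a monotone family $\{\sigma^{(s)}\}_{s\in[0,1]}$ of $[-1,1]$-valued interpolating configurations, built on the same probability space, with $\sigma^{(0)}=\sigma^-$, $\sigma^{(1)}=\sigma^+$, and whose associated disagreement (level) set is a bona fide planar percolation configuration carrying a Markov (finite-energy) property inherited from the Ising structure. The outcome is an identity, valid for every $T\ge0$,
\[
  m(L)\=\E_\eta\big[\mathbf{Q}_\eta\big(\0\leftrightarrow\extB\Lambda(L)\ \text{in the disagreement set}\big)\big],
\]
in which the inner probability refers to this dependent planar percolation; a disagreement crossing from $\0$ to $\extB\Lambda(L)$ is the event to be made exponentially rare.

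The third step is the decay estimate. Decompose $\Lambda(L)$ into $\asymp L$ concentric annular shells of bounded width. A disagreement crossing forces, in each shell, a connected trace across it and, by planar duality, a ``contour'' whose energetic cost is of order $J$ times its length, while the compensating gain is $\eps$ times the sum of $\eta$ over a region of comparable area, i.e.\ a centered Gaussian of standard deviation of order $\eps$ times the square root of that length. Conditioning on everything outside the shell leaves the field inside i.i.d.\ standard Gaussian and the inside measure Gibbsian, so the conditional probability of such a favorable deviation in a shell at scale $r$ is at most $e^{-c(J/\eps)^2 r}$. The independence of $\eta$ across disjoint shells turns the product of these per-shell bounds into a genuine exponential in $L$, while the combinatorial entropy of the possible contour locations is absorbed by a multiscale renormalization along the lines of~\cite{DX19}; this yields $m(L)\le C(J/\eps)\,e^{-c(J/\eps)L}$ with constants independent of $T$, $h$, $L$.

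The main obstacle lies in Steps two and three at positive temperature. Unlike the ground state, the Gibbs coupling has no rigid min-cut, so the contour that the random field must pay for has to be extracted from the continuous extension while retaining both a deterministic-given-$\eta$ lower bound on its energetic cost and enough Markov/independence structure for the shell-by-shell conditioning. Keeping the per-shell large-deviation estimate and the entropy accounting uniform as $T\to\infty$ — so that thermal fluctuations only help — is the crux; the fact that the $\eps=0$ Ising model at $h=0$ does not round shows that the argument must genuinely use $\eps>0$ precisely at this point.
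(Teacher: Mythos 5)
Your outline diverges from the paper's actual argument in both structure and substance, and the central step as you describe it would not go through.

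First, two representational points. The paper's disagreement representation (Proposition \ref{prop:DisagreementRep}, leading to \eqref{eq:order parameter}) is taken with respect to the \emph{independent} product measure $\bar{\P}^{+}\otimes\bar{\P}^{-}$, not the FKG monotone coupling; the exactness of the identity depends on the cluster-swap symmetry (Proposition \ref{prop:swap}), which is a symmetry of the product measure. Also, the ``continuous extension'' announced in the abstract is a spatial extension of the spin configuration to the metric graph — additional $\{-1,0,+1\}$-valued variables at edge midpoints, with hard compatibility constraints (\eqref{eq:Wdef}) — not a one-parameter monotone interpolation $\{\sigma^{(s)}\}_{s\in[0,1]}$ between $\sigma^-$ and $\sigma^+$. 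Its role is to make the disagreement cluster boundary an exact free boundary condition (the mid-edge value $0$ on the boundary decouples the two sides), which is what makes the swap an exact symmetry at positive temperature.

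The genuine gap is in your Step three. You assert a per-shell bound of $e^{-c(J/\eps)^2 r}$ for a favorable field deviation against a contour cost, and then multiply over $\asymp L$ shells. But the Imry--Ma balance you invoke does not yield anything of this form: a contour of length $\sim \ell$ costs $\sim J\ell$ and can be compensated by the field in a region of area $\sim \ell^2$, a Gaussian of standard deviation $\sim \eps\,\ell$. The probability of a favorable deviation is therefore a \emph{constant} of order $e^{-c(J/\eps)^2}$, independent of $\ell$; this is precisely what the paper's Lemma~\ref{lem:lasso} proves, and it is far from a per-shell exponential. One also cannot multiply per-shell estimates directly: the Gibbs measure couples the shells, and only the random field is independent across them. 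The paper's route is necessarily indirect — (i) the lasso bound establishes the Aizenman--Burchard criterion (Lemma~\ref{lem:criterion}), giving \emph{tortuosity}: disagreement crossings of scale $\ell$ have length $\gtrsim \ell^{1+\alpha}$ (Theorem~\ref{thm:Fractality}); (ii) tortuosity upgrades the surface-tension upper bound obtained through non-anticipatory separating sets (Proposition~\ref{prop:NonAnticipatory}, Proposition~\ref{prop:GoodSTUpperBound}); (iii) the improved surface tension is played against a conditional concentration inequality for the number of disagreements (Proposition~\ref{prop:var_bound}) and an anti-concentration bound (Proposition~\ref{prop:AntiConcentration}) to force $L\,m(L)\to 0$ (Proposition~\ref{prop:fast power law}); and only then (iv) a finite-range-dependence percolation bootstrap (Proposition~\ref{thm:exp1}) converts the fast power law into exponential decay. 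Your proposal contains none of steps (i)--(iii): in particular the fractality of disagreement paths, which is the mechanism that breaks the $J\ell$ versus $\eps\ell$ stalemate, is absent, so the argument as outlined cannot reach an exponential bound.
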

\medskip
\subsection{Key ingredients}

Following is an informal summary of the key concepts used in the proof.  These are presented more explicitly in the sections which follow. \\

\noindent \emph{Continuous extension of the Ising Model.}
We introduce an extension of the model in which the original binary spin variables are extended to random continuous functions supported on the graph's edges. The extended functions are restricted to $\pm1$ values only at the graph's vertices, and their measure's restriction to this collection of variables yields the Ising model's standard Gibbs state.   \\

\noindent \emph{Disagreement percolation.}
The term has already been invoked as offering a clarifying perspective on questions of the type considered here.
In this paper, we will invariably refer to a version which is based on the extended formulation of the model. The continuity of the extended spin function allows to identify the order parameter $m(L)$ as the suitably averaged probability that, in $\Lambda(L)$, the origin is connected to the boundary by a path along edges on which $\sigma^+_u >  \sigma^-_u$, with the pair $(\sigma^+, \sigma^-)$ sampled independently with the indicated boundary conditions.
 \\

 \noindent \emph{Surface tension.}
 We extend the past observations of~\cite{AP18} on the different relations of the surface tension with disagreement percolation. In particular, we present an upper bound on the surface tension between two surfaces in terms of the amount of the disagreement percolation flux through an arbitrary non-anticipatory random set separating the two.
 \\

 \noindent \emph{Tortuosity of the disagreement paths.}
 The improvement presented in this paper over \cite{AP18} is based on the quantification of the observation, which was expressed in \cite{AP18} at only a heuristic level, that the disagreement percolation is weakened by the fractality of its connected clusters.
A path from this suggestion to a proof was pointed out, in a somewhat modified context, in the aforementioned work of Ding-Xia~\cite{DX19}.  As there, this part of the argument makes an essential use of the sufficient condition for tortuosity of Aizenman-Burchard~\cite{AB99}.
Verifying the condition's assumptions is a somewhat technical part of the analysis, and the only step for which  the arguments employed here are limited to the nearest-neighbor models, i.e. restricted to strictly planar interactions, and not just short-range $2D$ models.

Through the tortuosity condition, we establish that, for some $\alpha>0$, the probability that a regular annulus of scale $\ell$ is traversed by a path of disagreement percolation with less  than $\ell^{1+\alpha}$ lattice steps vanishes as $\ell$ tends to infinity.
 \\

 \noindent \emph{A bootstrap argument.}   The last step in the proof is a bootstrap argument which extracts an exponential upper bound on the function $m(L)$ from the above estimates. Use is made of the fact that, like in the case of other percolation models, fast enough power law decay implies exponential decay (cf.~\cite{AP18}).  Anti-concentration bounds play a role in the analysis, as well as concentration bounds which control the contribution of exceptional random field configurations. \\

The above outline offers also the plan of the paper.

\section{A continuous extension of the Ising model}

While spins of a general Ising model are naturally associated with the vertices of a graph, which will generically be denoted $\G$, we find that, for a faithful disagreement percolation picture, it is convenient to extend their definition to functions supported along the graph's edges. For ease of notation, this construction is presented for the standard nearest-neighbor interaction. We extend $\G$ into a metric graph, i.e. substitute metric intervals for the edges. The resulting metric space of lines linked at the vertices of $\G$ will be denoted  $\mathcal L_\G$. We shall use $\e$ to denote the midpoint of the edge $e$.

The construction given below extends the notion of spin configurations on $\G$ to a set of highly-constrained  continuous functions on $\mathcal L_\G$, each being completely determined by its values on the vertices and the edge midpoints of   $\G$, where the function is allowed to take values only in $\{-1,0,+1\}$.\\

Formally, we let $ \Omega_\G$  be the set of continuous functions $\bar{\sigma}$ such that
\begin{enumerate}[i)]
\item  at the graph's vertices $\bar{\sigma} \in \{-1,+1\}$,
\item at edge midpoints, $\bar{\sigma} \in \{-1,0,+1\}$, and
\item along each edge, $\bar{\sigma}$ is given by the linear interpolation between its value at the center and at the nearest of the two edge ends.
\end{enumerate}

Thus, $\bar{\sigma}$ is determined by its restriction to the vertices of $\G$ and to the midpoints of its edges, which we denote by $\{\sigma_v\}_{v \in \vert(\G)}$ and $\{\kappa_{\e}\}_{e \in \edge(\G)}$, respectively. Let $\bar{\G}$ be the graph whose vertex set is the disjoint union of $\vert(\G)$  and the midpoints of the edges of $\G$. The edge set of $\bar{\G}$ is $\{\{v,\e\} : v \in e\}$. We refer to it as the extended graph.

On a finite graph $\G$, given a realization $\eta$ of the external field, let $\bar{\P}_{\bar{\G}}$ be the unique probability measure on the set of functions $ \Omega_\G$ with
\begin{equation}\label{eq:P_Lambda_tau_def}
  \bar{\P}_{\bar{\G}}(\bar{\sigma}) := \frac{1}{\bar{Z}_{\bar{\G}}} \prod_{ \substack{v\in \vert(\G) \\ e \in \edge(\G) \\ v \in e}}  W(\sigma_v,\kappa_{\e}) \cdot  e^{-\beta U(\sigma)}
%
\end{equation}
where
\be
 U(\sigma) : = \sum_{v \in \vert(\G)} (h + \eps \eta_v) \sigma_v,
\ee
and, for each $a \in\{-1,1\}$ and $ b\in\{-1,0,1\}$,
\begin{equation}\label{eq:Wdef}
    W(a,b):= \lambda \Big( \delta _{a,b}  \ + \ t \delta_{b,0}  \Big)\ = \  \lambda  \cdot \begin{cases}
    1& b=a,\\
    t& b =0,\\
    0&b =-a
  \end{cases}
\end{equation}
and $(t, \lambda)$ are related to the inverse temperature by
\begin{equation} \label{t_beta}
  t:=(e^{2 J \beta}-1)^{-\frac{1}{2}} \, , \qquad \lambda =  2 \sinh(J \beta)^{1/2} \,.
\end{equation}
One may note that if $e =\{u,v\}$ and $\sigma_u \neq \sigma_v$, the value of $\kappa_{\e}$ is constrained to be zero; furthermore, $\kappa_{\e} = +1$ implies that $\sigma_u = \sigma_v = +1$, with an analogous statement holding for~$-1$. We call these conditions the `hard' constraints of the extended Ising model.

    \begin{figure}
\begin{center}
\includegraphics[width=0.5\textwidth]{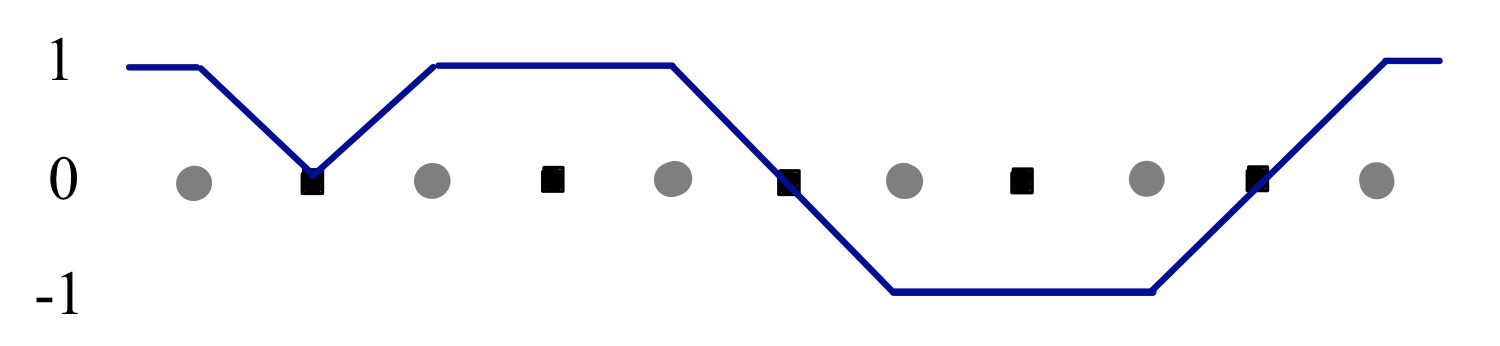}
\caption{A  possible configuration of the extended spin function $\bar \sigma$.  Dots represent  lattice sites, on which
$\bar\sigma_v$ takes only the values  $\pm1$.  Squares represent edge midpoints, on which $\bar \sigma_{\e}$ takes values in $\{-1,0,1\}$.}
\label{fig:extended}
\end{center}
\end{figure}

Given a subset $A \subset \bar{\G}$, we call $\tau: A \rightarrow \{-1,0,1\}$ an allowed configuration if there exists $\bar{\sigma}$ which is equal to $\tau$ on $A$ and satisfies the hard constraints described above.  For any $A$ and allowed configuration $\tau$, we denote the corresponding restricted probability measure $\bar{\P}_{\bar{\G}}^{ A , \tau}$. When $\beta < \infty$, this measure is equivalent to conditioning $\bar{\P}_{\bar{\G}}$ on the event $\{ \sigma = \tau \text{ on A} \}$. The partition function $\bar{Z}_{\bar{\G}}^{A,\tau}$ is defined analogously.

\begin{lemma}\label{lem:DMPExt} For any finite graph $\G$ and a realization of the external field $\eta$, at positive temperatures:
\begin{enumerate}[i)]
\item The probability measure $ \bar{\P}_{\bar{\G}}$ has the domain Markov property on the extended graph $\bar \G$.
\item The measure's restriction to $\vert(\G)$ coincides with the Gibbs measure $\mathbb{P}^{\G}$ on $\G$.
\item
For any $A_v \subset \vert(\G)$ and $\tau: A_v \rightarrow \{-1,+1\}$, the marginal distribution of the restricted measure $\bar{\P}_{\bar{\G}}^{ A_v , \tau}$ on $\vert(\G)$ coincides with the correspondingly restricted measure $\mathbb{P}_{\G}^{A_v,\tau}$.
\item
For any $A_e \subset \vert(\G)$, and allowed configuration $\tau$ such that $\tau |_{A_e} = 0$, the marginal distribution of $\bar{\P}_{\bar{\G}}^{A , \tau}$ on $\vert(\G)$ coincides with the corresponding Gibbs measure on the reduced graph obtained by limiting the edge set to $\edge(\G)\backslash A_e$ (without changing the vertex set).
\item The extension does not affect the partition function, i.e.
\be \label{Z_Zbar}
Z_{\G} = \bar{Z}_{\bar{\G}} \quad \text{ and } \quad Z_{\G}^{A,\tau} = \bar{Z}_{\bar{\G}}^{A,\tau} \text{ for any } A \subset \vert(\G).
\ee
\end{enumerate}
\end{lemma}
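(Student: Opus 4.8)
The plan is to reduce the whole lemma to a single edge-wise identity, combined with the manifestly product structure of the density \eqref{eq:P_Lambda_tau_def}. The identity I would establish first is that, for every edge $e=\{u,v\}\in\edge(\G)$ and all $\sigma_u,\sigma_v\in\{-1,+1\}$,
\begin{equation*}
\sum_{b\in\{-1,0,1\}} W(\sigma_u,b)\,W(\sigma_v,b) \ = \ \lambda^2\bigl(t^2+\delta_{\sigma_u,\sigma_v}\bigr) \ = \ e^{\beta J\sigma_u\sigma_v}\,.
\end{equation*}
The middle equality is the direct evaluation of \eqref{eq:Wdef} — the cross terms vanish because $\delta_{\sigma_v,0}=0$ for $\sigma_v\in\{-1,+1\}$ — while the last equality uses $\lambda^2 t^2=e^{-\beta J}$ and $\lambda^2(1+t^2)=e^{\beta J}$, both immediate from \eqref{t_beta} and valid for every $\beta\in(0,\infty)$. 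Since a midpoint value $\kappa_{\e}$ appears in the product of \eqref{eq:P_Lambda_tau_def} only through the two factors $W(\sigma_u,\kappa_{\e})$ and $W(\sigma_v,\kappa_{\e})$ attached to $e=\{u,v\}$, the sum over all midpoint variables factorizes over edges, and by the identity above
\begin{equation*}
\sum_{(\kappa_{\e})_{e\in\edge(\G)}}\ \prod_{\substack{v\in\vert(\G)\\ e\in\edge(\G)\\ v\in e}} W(\sigma_v,\kappa_{\e}) \ = \ \exp\!\Bigl(\beta\!\!\sum_{\{u,v\}\in\edge(\G)}\!\! J\,\sigma_u\sigma_v\Bigr)\,.
\end{equation*}

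Parts (ii), (iii) and (v) are then immediate. The marginal weight $\bar{\P}_{\bar{\G}}$ assigns to a vertex configuration $\sigma\in\{-1,+1\}^{\vert(\G)}$ is $\bar{Z}_{\bar{\G}}^{-1}$ times the product just computed times $e^{-\beta U(\sigma)}$, which is precisely $\bar{Z}_{\bar{\G}}^{-1}\,e^{-\beta H(\sigma)}$; summing over $\sigma$ yields $\bar{Z}_{\bar{\G}}=Z_{\G}$ — the unconditioned case of (v) — and hence (ii). For (iii) and the conditioned case of (v) with $A\subset\vert(\G)$ one repeats the same summation after restricting the vertex sum to $\{\sigma=\tau\text{ on }A\}$: this restriction affects only the vertex variables and commutes with the midpoint summation, so $\bar{Z}_{\bar{\G}}^{A,\tau}=Z_{\G}^{A,\tau}$ and the vertex marginal of $\bar{\P}_{\bar{\G}}^{A,\tau}$ equals $\mathbb{P}_{\G}^{A,\tau}$. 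Here one also invokes the elementary fact that every $\sigma\in\{-1,+1\}^{\vert(\G)}$ extends to some $\bar{\sigma}\in\Omega_{\G}$ (take $\kappa_{\e}=0$ on all edges), so that the constraint $\bar{\sigma}\in\Omega_{\G}$ imposes nothing on the vertex spins.

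For (iv), note that $\kappa_{\e}=0$ is an allowed value at any midpoint (it constrains neither $\sigma_u$ nor $\sigma_v$), and that $W(\sigma_u,0)\,W(\sigma_v,0)=\lambda^2 t^2=e^{-\beta J}$ is independent of $\sigma_u,\sigma_v$. Hence, with $\kappa_{\e}=0$ fixed on the midpoints of $A_e$ (and the vertex spins conditioned as in (iii) if that is also part of the data), summing out the remaining midpoints produces the constant $e^{-\beta J|A_e|}$ times $\exp\!\bigl(\beta\sum_{e\in\edge(\G)\setminus A_e}J\,\sigma_u\sigma_v\bigr)$; the constant cancels against the partition function, and what remains is exactly the Gibbs measure on the graph with vertex set $\vert(\G)$ and edge set $\edge(\G)\setminus A_e$. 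Finally, (i) follows by reading \eqref{eq:P_Lambda_tau_def} as a nearest-neighbour Markov random field on the extended graph $\bar{\G}$: the density is a product of factors $W(\sigma_v,\kappa_{\e})$, one for each edge $\{v,\e\}$ of $\bar{\G}$, and single-site factors $e^{-\beta(h+\eps\eta_v)\sigma_v}$, one for each $v\in\vert(\G)$, with the hard constraints of $\Omega_{\G}$ carried by the zeros of $W$ (which are themselves $\bar{\G}$-edge constraints). For $\beta\in(0,\infty)$ all these weights are finite, and the domain Markov property on $\bar{\G}$ is the standard consequence of such a product form: conditioned on $\bar{\sigma}$ along a vertex set $B\subset\vert(\bar{\G})$, the restrictions of $\bar{\sigma}$ to the connected components of $\bar{\G}\setminus B$ are mutually independent, each depending on the conditioning only through the values of $\bar{\sigma}$ on the vertices of $B$ adjacent to it.

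I do not expect a genuine obstacle; once the edge identity is in hand the rest is bookkeeping. The two points that deserve care are (a) checking that restricting to $\Omega_{\G}$, or to an allowed boundary configuration supported on edge midpoints, does not covertly constrain the vertex spins — so that the asserted vertex marginals are the genuine Ising marginals rather than marginals of a conditioned measure — and (b) stating the Markov property on the extended graph precisely, since the separating sets to be used later live in $\bar{\G}$ and mix vertices of $\G$ with edge midpoints; both are handled directly by the explicit product form of \eqref{eq:P_Lambda_tau_def}.
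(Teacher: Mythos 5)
Your proof is correct and follows essentially the same route as the paper's: the single edge-wise identity $\sum_{b}W(\sigma_u,b)W(\sigma_v,b)=e^{\beta J\sigma_u\sigma_v}$ (the paper's \eqref{eq:weightEquality}) together with the product structure of \eqref{eq:P_Lambda_tau_def} yields (ii), (iii), (v); the spin-independence of $W(\sigma_u,0)W(\sigma_v,0)$ yields (iv); and the factorized form of the weight gives (i). The only difference is that you spell out slightly more of the bookkeeping (e.g. that every vertex configuration extends via $\kappa_{\e}\equiv 0$), which the paper leaves implicit.
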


\begin{proof} The statement can be readily deduced from the definition, paying attention to the following key observations, listed by the claim number.\\
i) The domain Markov property follows from the multiplicative structure of $ \bar{\P}_{\bar{\G}}$. \\
ii) For each edge of $\G$, $e=\{u,v\} $, the summation over the  values of $\kappa_{\e}$ yields
\be
\sum _{{\kappa_{\e}}} \left( \delta _{\sigma_u,{\kappa_{\e}} } + t \delta_{\kappa_{\e},0}   \right)  \left(  \delta _{\sigma_v,{\kappa_{\e}} } + t \delta_{{\kappa_{\e}},0 }  \right)
= \delta _{\sigma_v,\sigma_u }  + t^2.
\ee
A consultation with \eqref{t_beta}
implies that
\be\label{eq:weightEquality}
\sum _{{\kappa_{\e}}} W(\sigma_u,\kappa_{\e}) W(\sigma_v,\kappa_{\e}) = e^{\beta J \sigma_u \sigma_v}.
\ee
iii) Since all hard constraints involve at least one mid-edge variable, restricting the value of $\sigma$ on $A \subset \vert(\G)$ cannot violate any of them. \\
iv) For each edge $\{u,v\}\in \edge(\G)$ the  factor $[W(\sigma_u,0) W(\sigma_v,0)]$ in \eqref{eq:P_Lambda_tau_def},  is spin independent  and thus can be omitted.   \\
v) The  last claim (and the only one for which the value of $\lambda$ in \eqref{eq:Wdef} is of relevance) follows readily from \eqref{eq:weightEquality}.
\end{proof}

As a corollary, we note that, in the limit $\beta \to \infty$, the probability distribution of $\{\sigma_v\}_{v\in \vert (\G)}$ concentrates on the ground-state configurations of $H(\sigma)$, and at edge midpoints the values are rigidly constrained to equal the mean of the value of the two endpoints.

Another corollary is that the extended Ising model inherits the standard FKG-type monotonocity properties of the (non-extended) Ising model. For instance, if $\Lambda_1 \subset \Lambda_2 \subset \G$, the marginal of the measure $\bar{\mathbb{P}}_{\bar{\Lambda}_2}^{\extB\Lambda_2, +}$ on $\bar{\Lambda}_1$ is stochastically dominated by $\bar{\mathbb{P}}_{\bar{\Lambda}_1}^{\extB\Lambda_1, +}$.

\section{Disagreement percolation}
The  extension of the Ising model to $\bar\G$ facilitates the following   exact relations in which the mean spin - spin correlation, and  related measures of influence propagation, are cast in terms of disagreement percolation.

With that end in mind, let $\G$ be a finite graph, $A \subset \vert(\G)$, and consider $\bar{\sigma}^+$ and $\bar{\sigma}^-$ to be sampled independently from $\bar{\P}_{\bar{G}}^{A,+}$ and $\bar{\P}_{\bar{G}}^{A,-}$, respectively. The {\em disagreement set} of $\bar{\sigma}^+$ and $\bar{\sigma}^-$ is
\be
\mathcal{D}:=\mathcal{D}(\bar{\sigma}^+,\bar{\sigma}^-) = \{u \in \bar{\G} : \bar{\sigma}_u^+ \neq \bar{\sigma}_u^- \}.
\ee

We say that $A,B \subset \bar{\G}$ are connected in $\mathcal{D}$, denoted by $A \stackrel{\mathcal{D}}{\longleftrightarrow} B$, if there is a path connecting them which stays in $\mathcal{D}$.

For a set $S \subset \bar{\G}$, let $\mathcal{C}_{S}$ be the union of the connected components of $\mathcal{D}$ that intersect $S$.

For pairs of configurations, we denote by $\langle \cdot \rangle_{\G}^{A,+/-}$, or just $\langle \cdot \rangle^{A,+/-}$ if $\G$ is clear from context, the expectation with respect to the product measure $\bar{\P}_{\bar{G}}^{A,+} \otimes\bar{\P}_{\bar{G}}^{A,-}$, and by  $\left\langle \!\left\langle \cdot \right\rangle \!  \right\rangle$
the expectation with respect to the product of two identically distributed copies of $\bar{\P}_{\bar{\G}}$. The truncated correlation function $\langle \sigma_u ; \sigma_v\rangle$ denotes $\langle \sigma_u \sigma_v\rangle  - \langle \sigma_u\rangle\cdot \langle \sigma_v\rangle$.

\begin{prop}\label{prop:DisagreementRep}
For any random-field Ising model on a finite graph $\G$ and any realization of the external field $\eta$:
\begin{enumerate}
\item for all $u,v \in \vert(\G)$,
\be
\langle \sigma_u ; \sigma_v\rangle= 2 \cdot \left\langle \!\!\!\left\langle \1\left[u \stackrel{\mathcal{D}}{\longleftrightarrow} v\right]\right\rangle
\!\!\!\right\rangle.
\ee
\item For any  $A\subset  \vert(\G) $ and $u \in \vert(\G)$
\begin{eqnarray}\label{eq:DisagreementRepresentation}
\langle \sigma_u \rangle_{\G}^{A,+} &= &
\frac{1}{2} \cdot \left(
  \langle \sigma_u \rangle_{\G}^{A,+} -
  \langle \sigma_u \rangle_{\G}^{ A,-} \right)
  \notag \\[2ex]
  &=&
  \left\langle \1\left[u \stackrel{\mathcal{D}}{\longleftrightarrow} A\right]\right\rangle^{A,+/-} .
\end{eqnarray}
\end{enumerate}
\end{prop}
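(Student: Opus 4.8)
The plan is to obtain both representations from a single ``cluster‑switching'' symmetry of the extended model, applied to two independent samples of the extended spin. First I would isolate two structural facts. \emph{(I) The sign of disagreement is constant on connected components of $\mathcal{D}$.} Setting $\epsilon_w:=\mathrm{sign}(\bar{\sigma}^{(1)}_w-\bar{\sigma}^{(2)}_w)\in\{-1,+1\}$ for $w\in\mathcal{D}$, the hard constraints of the extended model force $\epsilon_w=\epsilon_{w'}$ whenever $w,w'\in\mathcal{D}$ are adjacent in $\bar{\G}$; e.g.\ if $w$ is a vertex with $\bar{\sigma}^{(1)}_w=+1$, $\bar{\sigma}^{(2)}_w=-1$ and $w'=\e$ is an incident midpoint, then $\kappa^{(1)}_{\e}\in\{0,1\}$ and $\kappa^{(2)}_{\e}\in\{0,-1\}$, so $\bar{\sigma}^{(1)}_{\e}-\bar{\sigma}^{(2)}_{\e}\ge 0$, hence $>0$ as $\e\in\mathcal{D}$; the other adjacency type is symmetric. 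Thus $\epsilon$ is constant along any $\mathcal{D}$‑path --- the discrete avatar of the continuity of $\bar{\sigma}$, which makes $\mathcal{D}$ ``open'' with locally constant sign. \emph{(II) Switching on a disagreement cluster preserves the (conditioned) product measure.} For a connected component $C$ of $\mathcal{D}$ disjoint from the conditioned set, the involution $\Phi_C$ that swaps $\bar{\sigma}^{(1)}\leftrightarrow\bar{\sigma}^{(2)}$ on $C$ and is the identity off $C$ maps allowed pairs to allowed pairs, leaves $\mathcal{D}$ and each of its components invariant, reverses $\epsilon$ on $C$ and fixes it off $C$, and --- crucially --- preserves the weight: the $W$‑factors of edges meeting $\partial C$ are unchanged because the two configurations coincide on $\partial C$ (those points lie outside $\mathcal{D}$), while $e^{-\beta U(\sigma^{(1)})}e^{-\beta U(\sigma^{(2)})}$ depends on the pair only through $\sigma^{(1)}_v+\sigma^{(2)}_v$. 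This last point is exactly where the multiplicative structure and domain Markov property of Lemma~\ref{lem:DMPExt} enter.

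For part~(1), I would write, using independence and the exchange symmetry of the two copies, $\langle\sigma_u;\sigma_v\rangle=\langle\!\langle\sigma^{(1)}_u(\sigma^{(1)}_v-\sigma^{(2)}_v)\rangle\!\rangle=\tfrac12\langle\!\langle(\sigma^{(1)}_u-\sigma^{(2)}_u)(\sigma^{(1)}_v-\sigma^{(2)}_v)\rangle\!\rangle$. The integrand vanishes unless $u,v\in\mathcal{D}$, where it equals $4\,\epsilon_u\epsilon_v$; by (I) this is $+4$ when $u\stackrel{\mathcal{D}}{\longleftrightarrow}v$, while on the event that $u,v\in\mathcal{D}$ but are not $\mathcal{D}$‑connected, conditioning on the component $\mathcal{C}_u$ of $u$ (which then excludes $v$) and on the configurations outside it, fact~(II) makes $\Phi_{\mathcal{C}_u}$ a measure‑preserving involution reversing $\epsilon_u$ and fixing $\epsilon_v$, so that event integrates to zero. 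Hence $\langle\sigma_u;\sigma_v\rangle=2\langle\!\langle\1[u\stackrel{\mathcal{D}}{\longleftrightarrow}v]\rangle\!\rangle$.

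For part~(2), I would take $\bar{\sigma}^+\sim\bar{\P}_{\bar{\G}}^{A,+}$ and $\bar{\sigma}^-\sim\bar{\P}_{\bar{\G}}^{A,-}$ independent, so that $A\subset\mathcal{D}$ and $\epsilon\equiv+1$ on the component(s) of $A$, and write $\langle\sigma_u\rangle_{\G}^{A,+}-\langle\sigma_u\rangle_{\G}^{A,-}=\langle\sigma^+_u-\sigma^-_u\rangle^{A,+/-}$. Splitting on whether $u\stackrel{\mathcal{D}}{\longleftrightarrow}A$: if so, (I) forces $\sigma^+_u-\sigma^-_u=2$; on the event that $u\in\mathcal{D}$ but is not $\mathcal{D}$‑connected to $A$, the cluster $\mathcal{C}_u$ misses $A$, and by the domain Markov property the conditional law of $(\bar{\sigma}^+,\bar{\sigma}^-)$ inside $\mathcal{C}_u$, given the outside, is that of two i.i.d.\ extended configurations with the common boundary data on $\partial\mathcal{C}_u$ (further conditioned to disagree throughout $\mathcal{C}_u$), which is symmetric under exchanging the copies, so $\Phi_{\mathcal{C}_u}$ reverses $\sigma^+_u-\sigma^-_u$ and this event integrates to zero; the case $u\notin\mathcal{D}$ gives zero as well. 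This yields $\langle\sigma_u\rangle_{\G}^{A,+}-\langle\sigma_u\rangle_{\G}^{A,-}=2\langle\1[u\stackrel{\mathcal{D}}{\longleftrightarrow}A]\rangle^{A,+/-}$, the second equality of \eqref{eq:DisagreementRepresentation}; the first equality I would obtain from the reflection symmetry $\sigma\mapsto-\sigma$ (accompanied by $\eta\mapsto-\eta$, $h\mapsto-h$), under which $\langle\sigma_u\rangle^{A,-}\mapsto-\langle\sigma_u\rangle^{A,+}$ while the disagreement event is unaffected.

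The step I expect to require the most care is fact~(II): that switching the two configurations on a single disagreement cluster is an exact symmetry of the conditioned product measure. One must verify that the conditioned product, restricted to the exterior of a disagreement cluster, induces inside the cluster a law that is symmetric under exchanging the two copies, and that $\Phi_C$ respects every hard constraint across $\partial C$. This is precisely the payoff of passing to the continuous extension: continuity guarantees that each disagreement cluster is wrapped in a layer on which $\bar{\sigma}^{(1)}=\bar{\sigma}^{(2)}$, so the domain Markov property decouples the cluster from its complement with \emph{matching} boundary data --- in the bare Ising model the disagreement region can be ``one‑sided'', and the exact identities degrade to inequalities.
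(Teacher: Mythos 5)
Your proposal is correct and follows essentially the same route as the paper: both parts rest on the cluster-swap symmetry, which the paper isolates as Proposition~\ref{prop:swap} and you re-derive as your fact~(II), applied to kill the contribution of the event that $u$ and $v$ (resp.\ $u$ and $A$) lie in distinct disagreement clusters; your fact~(I) usefully makes explicit the claim (left implicit in the paper) that the sign of disagreement is constant on $\mathcal{D}$-components, which is what guarantees the product equals $+4$ rather than $-4$ on the connection event. One small caveat: your reflection argument for the first equality in~\eqref{eq:DisagreementRepresentation} establishes $\langle\sigma_u\rangle_{\G}^{A,-}(\eta,h)=-\langle\sigma_u\rangle_{\G}^{A,+}(-\eta,-h)$, which yields the pointwise identity $\langle\sigma_u\rangle_{\G}^{A,+}=-\langle\sigma_u\rangle_{\G}^{A,-}$ only after averaging over a sign-symmetric disorder at $h=0$; for a fixed realization $\eta$ and general $h$ that first equality does not hold, and the paper in fact only invokes the second equality (e.g.\ in~\eqref{eq:order parameter}).
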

Applying equation \eqref{eq:DisagreementRepresentation}  with $\G = \Lambda(L) \subset \mathbb{Z}^2$ and $A = \extB \G$, we obtain the following geometric representation of the order parameter of equation \be \label{eq:order parameter}
m(L) =   \mathbb{E}\left[\left\langle \1\left[\zero \stackrel{\mathcal{D}}{\longleftrightarrow} \extB \Lambda(L)\right]\right\rangle^{\extB \Lambda(L),+/-} \right].
\ee

The identities of Proposition \ref{prop:DisagreementRep} are proven using a general principle of symmetry under the following class of \emph{swap} operation.  For any pairs of sets $A, S\subset \vert (\G)$,  let $R^A_S$ be the mapping
$R^A_S:  (\bar{\sigma},\bar{\sigma}') \mapsto   (\bar{\phi},\bar{\phi}')$ which acts as the identity in case $\mathcal{C}_S \cap A \neq \emptyset$, and otherwise swaps the configurations in $\mathcal{C}_S$, in the sense that
\be\label{eq:ClusterSwap}
(\bar\phi_u,\bar\phi'_u) := \begin{cases}  (\bar{\sigma}'_u,\bar{\sigma}_u) & u \in \mathcal{C}_S \\
(\bar{\sigma}_u,\bar{\sigma}'_u) & u \not \in \mathcal{C}_S \\ \end{cases}  \quad \mbox{(provided $\mathcal{C}_S \cap A = \emptyset$)}.
\ee

\begin{prop}\label{prop:swap}
For any $A,S \subset \vert(\G)$, and pair $\tau,\tau'$ of maps from $A$ to $\{-1,+1\}$,  the independent product measure $\bar{\P}_{\bar{\G}}^{A, \tau} \otimes\bar{\P}_{\bar{\G}}^{A, \tau'}$ (interpreted as $\bar{\P}_{\bar{\G}} \otimes\bar{\P}_{\bar{\G}}$ if $A=\emptyset$) is invariant under the action of $R_S^A$.
\end{prop}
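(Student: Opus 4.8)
The plan is to exhibit $R^A_S$ as a measure-preserving involution and then to verify the invariance atom by atom, exploiting the product structure of the weight defining $\bar{\P}_{\bar{\G}}$. First I would record that the disagreement set is insensitive to $R^A_S$: at each $x\in\bar{\G}$ the map leaves the unordered pair $\{\bar{\sigma}_x,\bar{\sigma}'_x\}$ intact, so $\mathcal{D}$, together with its connected components, the cluster $\mathcal{C}_S$, and the indicator $\1[\mathcal{C}_S\cap A\neq\emptyset]$, is the same for $(\bar{\sigma},\bar{\sigma}')$ and for its image. Hence a second application of $R^A_S$ either again does nothing or undoes the swap on the very same set $\mathcal{C}_S$, so $R^A_S\circ R^A_S=\mathrm{id}$ on the finite set $\Omega_\G\times\Omega_\G$. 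For an involution of a finite set, to conclude that $\mu:=\bar{\P}_{\bar{\G}}^{A,\tau}\otimes\bar{\P}_{\bar{\G}}^{A,\tau'}$ is $R^A_S$-invariant it suffices to check that $R^A_S$ maps the support of $\mu$ into itself and preserves the mass of each atom in it.

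So fix $(\bar{\sigma},\bar{\sigma}')$ in the support of $\mu$, meaning both configurations obey the hard constraints and agree with $\tau,\tau'$ on $A$. If $\mathcal{C}_S\cap A\neq\emptyset$ then $R^A_S$ is the identity and there is nothing to prove; otherwise put $(\bar{\phi},\bar{\phi}')=R^A_S(\bar{\sigma},\bar{\sigma}')$. Since $\mathcal{C}_S$ is disjoint from $A$, the swap does not change the values on $A$, so the boundary data — and therefore the normalizers $\bar{Z}_{\bar{\G}}^{A,\tau},\bar{Z}_{\bar{\G}}^{A,\tau'}$ — are common to the two pairs. By \eqref{eq:P_Lambda_tau_def}, $\mu(\{(\bar{\sigma},\bar{\sigma}')\})$ equals that common constant times
\[
\prod_{\{v,\e\}\in\edge(\bar{\G})} W(\sigma_v,\kappa_{\e})\,W(\sigma'_v,\kappa'_{\e})\ \cdot\ \prod_{v\in\vert(\G)} e^{-\beta(h+\eps\eta_v)(\sigma_v+\sigma'_v)},
\]
recalling that each factor $W(\sigma_v,\kappa_{\e})$ is attached to an edge $\{v,\e\}$ of the extended graph and that $U(\sigma)=\sum_v(h+\eps\eta_v)\sigma_v$. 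The vertex factors are unaffected by $R^A_S$, since at each $v$ the swap replaces $(\sigma_v,\sigma'_v)$ either by itself or by $(\sigma'_v,\sigma_v)$, leaving $\sigma_v+\sigma'_v$ fixed. It therefore remains to show that each edge factor $E_{v,\e}:=W(\sigma_v,\kappa_{\e})W(\sigma'_v,\kappa'_{\e})$ is preserved; granting this, $(\bar{\phi},\bar{\phi}')$ has the same positive mass as $(\bar{\sigma},\bar{\sigma}')$ and hence again lies in the support, and the first step completes the argument.

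For a fixed extended edge $\{v,\e\}$ I would split according to which of $v,\e$ lie in $\mathcal{C}_S$. If both do, the swap turns $E_{v,\e}=W(\sigma_v,\kappa_{\e})W(\sigma'_v,\kappa'_{\e})$ into $W(\sigma'_v,\kappa'_{\e})W(\sigma_v,\kappa_{\e})$, the same number; if neither does, $E_{v,\e}$ is untouched. The two mixed cases are where the cluster boundary cuts through the edge, and here the key input is that $\mathcal{C}_S$ is a union of \emph{entire} connected components of $\mathcal{D}$, so any $\bar{\G}$-neighbour of a point of $\mathcal{C}_S$ that is not itself in $\mathcal{C}_S$ must be a point of agreement of $\bar{\sigma}$ and $\bar{\sigma}'$. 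If $v\in\mathcal{C}_S$ while $\e\notin\mathcal{C}_S$, then $\e$ is a point of agreement, $\kappa_{\e}=\kappa'_{\e}$, whereas $\sigma_v\neq\sigma'_v$; since $(\bar{\sigma},\bar{\sigma}')$ obeys the hard constraints, $W(\sigma_v,\kappa_{\e})\neq 0\neq W(\sigma'_v,\kappa_{\e})$, which by \eqref{eq:Wdef} is compatible with $\sigma_v\neq\sigma'_v$ only if $\kappa_{\e}=\kappa'_{\e}=0$, whence all relevant weights equal $W(\cdot,0)=\lambda t$ and $E_{v,\e}=(\lambda t)^2$ both before and after the swap. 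If instead $\e\in\mathcal{C}_S$ while $v\notin\mathcal{C}_S$, then $v$ is a point of agreement, $\sigma_v=\sigma'_v=:a$, while $\kappa_{\e}$ and $\kappa'_{\e}$ may differ, and the swap takes $E_{v,\e}=W(a,\kappa_{\e})W(a,\kappa'_{\e})$ to $W(a,\kappa'_{\e})W(a,\kappa_{\e})$, again the same number. Thus $E_{v,\e}$ is invariant in every case. The main obstacle is precisely this last step: one must use the explicit form \eqref{eq:Wdef} of $W$ — the vanishing $W(a,-a)=0$, which forces a disagreement endpoint to carry a zero midpoint, together with the spin-independence of $W(a,0)=\lambda t$ — to see that a partial swap along an extended edge can only reshuffle, or render constant, the local weight.
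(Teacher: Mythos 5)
Your proof is correct and follows essentially the same route as the paper's: show that $R_S^A$ is an involution, then verify weight-preservation factor by factor, using the fact that the two configurations agree on the external boundary of the swapped cluster. The only minor inefficiency is in the case $v\in\mathcal{C}_S$, $\e\notin\mathcal{C}_S$, where you invoke the hard constraints to force $\kappa_{\e}=\kappa'_{\e}=0$; since $\kappa_{\e}=\kappa'_{\e}$ already holds, the two factors $W(\sigma_v,\kappa_{\e})$ and $W(\sigma'_v,\kappa_{\e})$ are simply exchanged under the swap (just as in the other mixed case), so no appeal to the explicit zeros of $W$ is needed there.
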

\begin{proof}
Since $R_S^A$ is one-to-one (indeed, $R_S^A \circ R_S^A=\1$), it suffices to verify that it preserves the product of the weights of the two configurations.
That is so since, for each pair $(v,\e)$,  the factors in \eqref{eq:P_Lambda_tau_def} corresponding to the two configurations are either left unaffected or exchanged. The essential observation here is that the swap affects entire disagreement clusters, along whose external boundary the two configurations agree. \end{proof}
\noindent{\bf Remark:} The symmetry operation described in Proposition~\ref{prop:swap} is essentially equivalent to the `cluster swap' operation of Sheffield~\cite[Chapter 8]{S05} (with earlier uses by van den Berg in his disagreement percolation~\cite{B93}) applied to the (non-extended) RFIM (see~\cite[Section 2]{CAP17} for a review).

\begin{proof}[Proof of Proposition \ref{prop:DisagreementRep}]
The first item follows from the observation that
\be
\begin{split}
2\cdot \langle \sigma_u ; \sigma_v\rangle & = \left\langle \!\!\left\langle (\sigma_u - \sigma'_u) \cdot (\sigma_v - \sigma'_v)\right\rangle
\!\!\right\rangle  \\ & =\left\langle \!\!\!\left\langle (\sigma_u - \sigma'_u) \cdot (\sigma_v - \sigma'_v)\cdot \1\left[u \stackrel{\mathcal{D}}{\longleftrightarrow} v\right] \right\rangle
\!\!\!\right\rangle \\ & \quad \quad \quad+ \left\langle \!\!\!\left\langle (\sigma_u - \sigma'_u) \cdot (\sigma_v - \sigma'_v)\cdot \1\left[u \centernot{\stackrel{\mathcal{D}}{\longleftrightarrow}} v\right] \right\rangle
\!\!\!\right\rangle \\ & = \left\langle \!\!\!\left\langle (\sigma_u - \sigma'_u) \cdot (\sigma_v - \sigma'_v) \cdot \1\left[u \stackrel{\mathcal{D}}{\longleftrightarrow} v\right] \right\rangle
\!\!\!\right\rangle + 0,
\end{split}
\ee
where the final equality follows from the observation that, whenever $u$ and $v$ are disconnected in $\mathcal{D}$, we may `swap' the connected component of $u$ in $\mathcal{D}$ without affecting the value of $(\sigma_v,\sigma'_v)$ by Proposition \ref{prop:swap}. The proof is completed by noting that $(\sigma_u - \sigma'_u) \cdot (\sigma_v - \sigma'_v) = 4$ on the event $\{u \stackrel{\mathcal{D}}{\longleftrightarrow} v\}$.

The second item is proved analogously.
\end{proof}

\medskip

\section{The surface tension}  \label{sec:T}

\subsection{Definition and relation with disagreement percolation} \mbox{} \\[-1ex]

A useful role in the analysis is played by the following quantity.
\begin{definition} Given a finite graph $\G$ the surface tension between a pair of subset $A_1, A_2 \in  \vert(\G)$  at a given
 realization $\eta$, and at temperature $\beta^{-1}$ (which will often be omitted in our notation) is
\be \label{eq:def_T}
  \mathcal{T}_{A_1,A_2}(\eta, \beta)  :=   \frac{1}{\beta}  \cdot \log\left(
    \frac{Z^{+,+} \cdot Z^{-,-} } {Z^{+,-} \cdot Z^{-,+} }\right)\,.
\ee
where $ Z^{+,+} \equiv Z_\G^{A_1, A_2; s_1,s_2}$, at $s_1,s_2 =\pm$,   is the partition function with the spins restricted to the indicated values on $A_1$ and $A_2$.
\end{definition}

Of particular interest will be the situation where $A_j$ are the vertex boundaries of two subgraphs $ \Lambda_1 \subset \Lambda_2 \subset \G$, with  $\extB\Lambda_j = A_j$  (and $A_1$ and $A_2$ are disjoint).  In a slight abuse of notation, in such cases we shall refer to $  \mathcal{T}_{A_1,A_2}(\eta)$  also by the symbol
 $\mathcal{T}_{\Lambda_1,\Lambda_2}(\eta)$.  \\

The surface tension bears a number of relations with the disagreement percolation process
$\mathcal{D}(\bar{\sigma}^+,\bar{\sigma}^-)$, where $(\bar{\sigma}^+,\bar{\sigma}^-)$ are independently sampled from $\bar{\P}^{\extB \Lambda_2,+} \otimes \bar{\P}^{\extB\Lambda_2,-}$.
One of these is an identity, which appears in~\cite{AP18}, relating the surface tension to the thermal average of the number of disagreements in $\Lambda_1$. We define \be \label{eq:D_l_def}
  D_{\Lambda_1,\Lambda_2}(\eta)  \ := \frac{1}{2} \sum_{v\in \Lambda_1} \left[\langle \sigma_v \rangle_{\Lambda_2}^{+} - \langle \sigma_v \rangle_{\Lambda_2}^{-} \right],
 \ee
and observe that, by Proposition \ref{prop:DisagreementRep},
 \be\label{eq:disagreement representation for D}
D_{\Lambda_1,\Lambda_2}(\eta) = \langle |\Lambda_1 \cap \mathcal{C}_{\extB\Lambda_2}|  \rangle^{\extB \Lambda_2,+/-},
\ee
where we recall that $\mathcal{C}_{\extB \Lambda_2} := \mathcal{C}_{\extB \Lambda_2} (\bar{\sigma}^+,\bar{\sigma}^-)$
is the connected component of $\mathcal{D}$ containing $\extB \Lambda_2$. In these terms, one has:

\begin{thm} \label{thm:T2} Let $\Lambda_1 \subset \Lambda_2 \subset \G$ be subgraphs with disjoint internal vertex boundaries. For the RFIM with IID standard Gaussian random fields,
the surface tension bears  the following relation with disagreement percolation:
\be \label{DD2}
 \mathcal{T}_{\Lambda_1,\Lambda_2}(\eta)   \=  2 \eps\,
 \int_\R D_{\Lambda_1,\Lambda_2}(\eta^{(t)}) \, dt \ = \
 \ \frac{2 \eps  }{\sqrt{|\Lambda_1|}}\, \E _{\widehat \eta_{\Lambda_1} } \left(  \frac{D_{\Lambda_1,\Lambda_2}(\eta) }{ \phi(\widehat \eta_{\Lambda_1}) } \right)  \, ,
\ee
where: \begin{enumerate}[1)]
\item
 $\eta^{(t)}$ is defined by adding a uniform field of intensity $t$ in $\Lambda_1$
\begin{equation} \label{def:t}
  \eta^{(t)}_v := \begin{cases}
    \eta_v + t&v\in\Lambda_1\\
    \eta_v&\text{otherwise}
  \end{cases} \, ,
\end{equation}
\item the subscript on $ \E _{\widehat \eta_{\Lambda_1} } $ indicates the average over the variable
\be \label{eq:etahat}
\widehat \eta_{\Lambda_1}  := \frac 1{\sqrt{| \Lambda_1|}} \sum _{v\in  \Lambda_1} \eta_v   \,
\ee
at fixed values of the other, orthogonal, Gaussian degrees of freedom which determine $\eta$, \\

 \item  $\phi$ is the Gaussian density function $\phi(s) := \frac{1}{\sqrt{2\pi}}e^{-s^2/2}$.

 \end{enumerate}
\end{thm}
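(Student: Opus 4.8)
The plan is to prove the two equalities in \eqref{DD2} in turn. For the first, $\mathcal{T}_{\Lambda_1,\Lambda_2}(\eta) = 2\eps\int_\R D_{\Lambda_1,\Lambda_2}(\eta^{(t)})\,dt$, I would use a field-derivative identity together with the fundamental theorem of calculus — the point being that the two boundary terms at $t=\pm\infty$ reassemble into the surface tension. The second equality is then a Gaussian change of variables.

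\emph{Reduction and derivative identity.} Conditioning on the spins along $\extB\Lambda_2$ decouples $\Lambda_2$ from its exterior in $\G$, so $Z_\G^{s_1,s_2}$ (with $\extB\Lambda_1$ frozen at $s_1$ and $\extB\Lambda_2$ at $s_2$) factors as $Z_{\Lambda_2}^{s_1,s_2}\cdot Y^{s_2}$ with an exterior factor depending only on $s_2$, which cancels in the surface-tension ratio; thus $\mathcal{T}_{\Lambda_1,\Lambda_2}(\eta)$ may be computed with $\G=\Lambda_2$. Writing $Z^{\pm}_{\Lambda_2}(\eta^{(t)})$ for the partition function on $\Lambda_2$ with $\extB\Lambda_2$ frozen at $\pm$ and field $\eta^{(t)}$, and using $\tfrac{\partial}{\partial t}\bigl(-\beta H_{\Lambda_2}(\sigma;\eta^{(t)})\bigr)=\beta\eps\sum_{v\in\Lambda_1}\sigma_v$,
\[
\frac{d}{dt}\log\frac{Z^{+}_{\Lambda_2}(\eta^{(t)})}{Z^{-}_{\Lambda_2}(\eta^{(t)})}
=\beta\eps\sum_{v\in\Lambda_1}\Bigl(\langle\sigma_v\rangle^{+}_{\Lambda_2,\eta^{(t)}}-\langle\sigma_v\rangle^{-}_{\Lambda_2,\eta^{(t)}}\Bigr)
=2\beta\eps\,D_{\Lambda_1,\Lambda_2}(\eta^{(t)}).
\]
By the FKG inequality the integrand is nonnegative, and a one-spin-flip energy estimate — flipping a single $\sigma_v$, $v\in\Lambda_1$, towards the sign of $t$ lowers $H$ by at least $2\eps|t|-C(\eta,h,J)$ — yields $D_{\Lambda_1,\Lambda_2}(\eta^{(t)})\le C'(\eta,h,\beta)\,e^{-2\beta\eps|t|}$ for $|t|$ large. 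Hence $\int_\R D_{\Lambda_1,\Lambda_2}(\eta^{(t)})\,dt<\infty$ and the two one-sided limits of $\log\bigl(Z^{+}_{\Lambda_2}(\eta^{(t)})/Z^{-}_{\Lambda_2}(\eta^{(t)})\bigr)$ exist.

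\emph{Identifying the boundary terms.} As $t\to+\infty$ the weight $e^{\beta\eps t\sum_{v\in\Lambda_1}\sigma_v}$ concentrates $Z^{s_2}_{\Lambda_2}(\eta^{(t)})$ on $\sigma|_{\Lambda_1}\equiv+1$ — here one uses $\Lambda_1\cap\extB\Lambda_2=\emptyset$, which follows from the disjointness of the two internal boundaries — so $Z^{s_2}_{\Lambda_2}(\eta^{(t)})=e^{\beta\eps t|\Lambda_1|}\bigl(W^{+}_{s_2}+O(e^{-2\beta\eps t})\bigr)$, with $W^{+}_{s_2}$ the weight obtained by freezing all of $\Lambda_1$ at $+1$ and $\extB\Lambda_2$ at $s_2$. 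Conditioning on $\extB\Lambda_1$ shows that $W^{+}_{s_2}$ differs from $Z_{\Lambda_2}^{+,s_2}$ only by a factor supported on the interior of $\Lambda_1$ that does not depend on $s_2$; therefore $\lim_{t\to+\infty}\log\bigl(Z^{+}_{\Lambda_2}(\eta^{(t)})/Z^{-}_{\Lambda_2}(\eta^{(t)})\bigr)=\log\bigl(Z_{\Lambda_2}^{+,+}/Z_{\Lambda_2}^{+,-}\bigr)$, and symmetrically the $t\to-\infty$ limit equals $\log\bigl(Z_{\Lambda_2}^{-,+}/Z_{\Lambda_2}^{-,-}\bigr)$. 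Subtracting and using the reduction above,
\[
2\beta\eps\int_\R D_{\Lambda_1,\Lambda_2}(\eta^{(t)})\,dt
=\log\frac{Z_{\Lambda_2}^{+,+}\,Z_{\Lambda_2}^{-,-}}{Z_{\Lambda_2}^{+,-}\,Z_{\Lambda_2}^{-,+}}
=\beta\,\mathcal{T}_{\Lambda_1,\Lambda_2}(\eta),
\]
which is the first equality.

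\emph{Second equality.} Let $u:=|\Lambda_1|^{-1/2}\,\mathbf 1_{\Lambda_1}$ (a unit vector, with $\mathbf 1_{\Lambda_1}$ the indicator of $\Lambda_1$); the coordinate of $\eta$ along $u$ is exactly $\widehat\eta_{\Lambda_1}$, a standard Gaussian independent of the orthogonal coordinates, and $\eta\mapsto\eta^{(t)}=\eta+t\,\mathbf 1_{\Lambda_1}$ is the translation by $t|\Lambda_1|^{1/2}$ along $u$, leaving the orthogonal coordinates fixed. Writing $D_{\Lambda_1,\Lambda_2}$ as a function $g(\widehat\eta_{\Lambda_1},\eta^{\perp})$ of the $u$-coordinate and the rest, $D_{\Lambda_1,\Lambda_2}(\eta^{(t)})=g(\widehat\eta_{\Lambda_1}+t|\Lambda_1|^{1/2},\eta^{\perp})$, and the substitution $s=\widehat\eta_{\Lambda_1}+t|\Lambda_1|^{1/2}$ gives
\[
\int_\R D_{\Lambda_1,\Lambda_2}(\eta^{(t)})\,dt
=\frac{1}{\sqrt{|\Lambda_1|}}\int_\R g(s,\eta^{\perp})\,ds
=\frac{1}{\sqrt{|\Lambda_1|}}\int_\R \frac{g(s,\eta^{\perp})}{\phi(s)}\,\phi(s)\,ds
=\frac{1}{\sqrt{|\Lambda_1|}}\,\E_{\widehat\eta_{\Lambda_1}}\!\left(\frac{D_{\Lambda_1,\Lambda_2}(\eta)}{\phi(\widehat\eta_{\Lambda_1})}\right),
\]
and multiplying by $2\eps$ completes the proof.

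\emph{Main obstacle.} The delicate step is the identification of the $t=\pm\infty$ boundary terms: verifying the dominant-configuration factorization and the cancellation of the $\Lambda_1$-interior and $\Lambda_2$-exterior factors, and justifying that the limit passes through the logarithm (for which the one-spin-flip exponential bound on $D_{\Lambda_1,\Lambda_2}(\eta^{(t)})$ is precisely what is required). Everything else is bookkeeping. If one wishes to include $T=0$, where the factor $1/\beta$ in $\mathcal{T}$ is singular, the identity follows by letting $\beta\to\infty$, using that the Gaussian disorder has an almost surely unique ground state — so $D_{\Lambda_1,\Lambda_2}(\eta^{(t)})$ converges for a.e.\ $t$ — together with the uniform bound above and dominated convergence.
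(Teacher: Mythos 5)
Your proof is correct and follows essentially the same route as the one the paper cites from~\cite{AP18}: differentiate $\log(Z^{+}_{\Lambda_2}/Z^{-}_{\Lambda_2})$ with respect to a uniform field shift on $\Lambda_1$, integrate over $t$, identify the $t\to\pm\infty$ boundary terms with the four constrained partition functions in the definition of $\mathcal{T}$, and then convert the $t$-integral into the Gaussian average over $\widehat\eta_{\Lambda_1}$ by the substitution $s=\widehat\eta_{\Lambda_1}+t\sqrt{|\Lambda_1|}$. Your observations about $\Lambda_1\cap\extB\Lambda_2=\emptyset$, the cancellation of the interior-of-$\Lambda_1$ and exterior-of-$\Lambda_2$ factors, and the $\beta\to\infty$ limit are the right bookkeeping for the stated level of generality.
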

    \begin{figure}
\begin{center}
\includegraphics[width=0.25\textwidth]{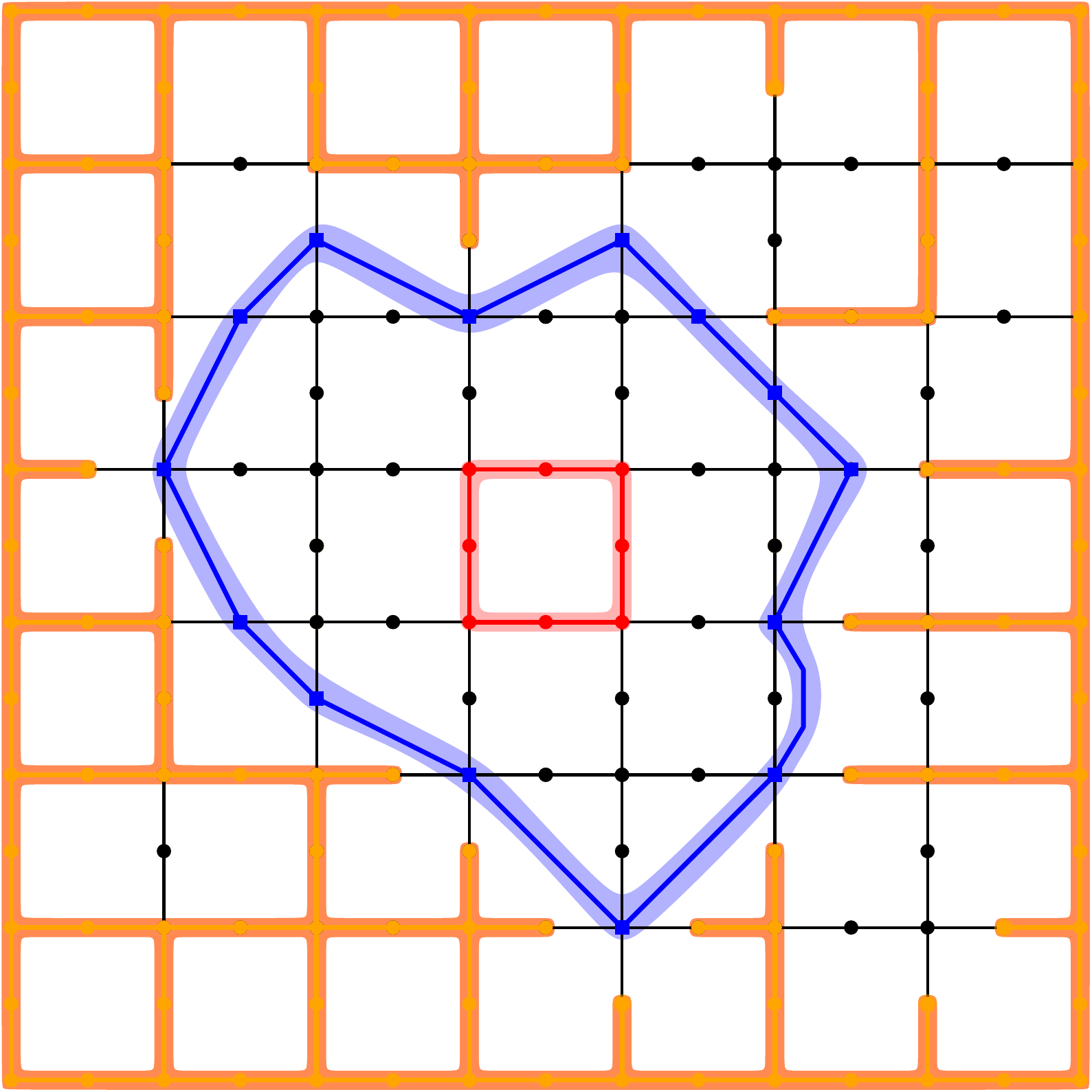}
\caption{Disagreement percolation in $\Lambda$, for a pair $(\sigma^+,\sigma^-)$.  The disagreement is marked in thick [orange] lines, along which $\bar \sigma^+> \bar \sigma^-$ and which connect to the set boundary $\partial_v \Lambda$.  The  loop [in blue] passes through sites at which $\bar \sigma^+ - \bar \sigma^- =0$, and  marks the outer vertex boundary  of the disagreement cluster of  $\extB \Lambda$. (Colors referring to the original PDF).}
\label{fig:disag}
\end{center}
\end{figure}
The proof given in~\cite{AP18} (in reference to $\Lambda_1 = \Lambda(\ell), \Lambda_2 = \Lambda(3\ell)$) extends to the presented statement with only notational modifications.\\

The second relation used here is an upper bound on $ \mathcal{T}_{\Lambda_1,\Lambda_2}(\eta)$.
For that, it  is instructive to first note the following identity.
\begin{prop}\label{prop:ExactRepST}
Let $\Lambda_1 \subset \Lambda_2 \subset \G$ such that $\extB \Lambda_1$ and $\extB \Lambda_2$ are disjoint. Then
\be \label{ZZ/ZZ}
\frac{\bar Z^{+,-} \cdot \bar Z^{-,+} } {\bar Z^{+,+} \cdot \bar Z^{-,-} } = \left\langle \1\left[\extB \Lambda_1 \stackrel{\mathcal{D}}{\centernot \longleftrightarrow} \extB \Lambda_2 \right] \right\rangle^{\extB\Lambda_1 \cup \extB \Lambda_2, +/-}.
\ee
\end{prop}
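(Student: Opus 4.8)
The plan is to read both sides of \eqref{ZZ/ZZ} as weighted counts of pairs of extended configurations and to match them by the cluster swap. Write $A_j:=\extB\Lambda_j$, $A:=A_1\cup A_2$, and let $w(\bar\sigma)$ denote the unnormalized weight $\prod_{v\in e}W(\sigma_v,\kappa_{\e})\,e^{-\beta U(\sigma)}$ from \eqref{eq:P_Lambda_tau_def}, so that $\bar Z^{s_1,s_2}$ is the sum of $w(\bar\sigma)$ over $\bar\sigma\in\Omega_\G$ with boundary value $s_1$ on $A_1$ and $s_2$ on $A_2$. The expectation $\langle\cdot\rangle^{A,+/-}$ on the right of \eqref{ZZ/ZZ} is taken under $\bar\P_{\bar\G}^{A,+}\otimes\bar\P_{\bar\G}^{A,-}$, whose normalization is $\bar Z^{+,+}\bar Z^{-,-}$; hence \eqref{ZZ/ZZ} is equivalent to the identity
\[
  N\ :=\ \sum_{\substack{\bar\sigma\equiv+1\text{ on }A\\ \bar\sigma'\equiv-1\text{ on }A}} w(\bar\sigma)\,w(\bar\sigma')\,\1\!\left[A_1\stackrel{\mathcal D}{\centernot\longleftrightarrow}A_2\right]\ =\ \bar Z^{+,-}\cdot\bar Z^{-,+}.
\]
The idea is to build a weight-preserving bijection between the pairs counted by $N$ and the set $\mathcal B$ of all pairs $(\bar\phi,\bar\phi')$ with $\bar\phi=+1$ on $A_1$, $\bar\phi=-1$ on $A_2$, $\bar\phi'=-1$ on $A_1$, $\bar\phi'=+1$ on $A_2$; since $\sum_{\mathcal B}w(\bar\phi)w(\bar\phi')=\bar Z^{+,-}\bar Z^{-,+}$ factorizes, the identity follows.

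The bijection is the swap of Proposition~\ref{prop:swap} applied to $\mathcal C_{A_2}=\mathcal C_{A_2}(\bar\sigma,\bar\sigma')$: exchange $\bar\sigma$ and $\bar\sigma'$ at every extended vertex of $\mathcal C_{A_2}$. On the event $\{A_1\stackrel{\mathcal D}{\centernot\longleftrightarrow}A_2\}$ one has $\mathcal C_{A_2}\cap A_1=\emptyset$ while $A_2\subseteq\mathcal C_{A_2}$, so the swap turns the boundary data $(+1\text{ on }A,\ -1\text{ on }A)$ into exactly $(+1\text{ on }A_1,\ -1\text{ on }A_2;\ -1\text{ on }A_1,\ +1\text{ on }A_2)$, i.e. it lands in $\mathcal B$. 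Weight preservation is the computation in the proof of Proposition~\ref{prop:swap}: along the extended outer boundary of $\mathcal C_{A_2}$ the two configurations agree, so each factor $W(\sigma_v,\kappa_{\e})$ is either untouched or transferred intact to the other copy, $U(\sigma)+U(\sigma')$ is unchanged, and $\bar\phi\in\Omega_\G$ because $\mathcal C_{A_2}$ is a union of connected components of $\mathcal D$. Finally the swap fixes the disagreement set pointwise (since $\{\bar\phi_u,\bar\phi'_u\}=\{\bar\sigma_u,\bar\sigma'_u\}$ at every extended vertex), hence fixes $\mathcal C_{A_2}$ as a set, so performing it twice is the identity; this yields injectivity on the domain of $N$ immediately.

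The one step that is not pure bookkeeping — and the main obstacle — is surjectivity onto $\mathcal B$: I must show that every $(\bar\phi,\bar\phi')\in\mathcal B$ already satisfies $A_1\stackrel{\mathcal D}{\centernot\longleftrightarrow}A_2$, for then swapping its cluster $\mathcal C_{A_2}$ lands back in the domain of $N$ and, by involutivity, recovers $(\bar\phi,\bar\phi')$. The key claim is that $u\mapsto\operatorname{sign}(\bar\phi_u-\bar\phi'_u)$ is constant on each connected component of $\mathcal D(\bar\phi,\bar\phi')$. This is precisely where the hard constraints of the extended model enter: for an edge $\{v,\e\}$ of $\bar\G$ with $v$ and $\e$ both in $\mathcal D$, a short case check over the admissible values ($\bar\phi_v\in\{\pm1\}$, $\bar\phi_{\e}\in\{-1,0,1\}$, with $\kappa_{\e}=\pm1$ forcing the incident spin, so that opposite signs are impossible) shows that $\bar\phi_v-\bar\phi'_v$ and $\bar\phi_{\e}-\bar\phi'_{\e}$ have the same (nonzero) sign — equivalently, the continuous piecewise-linear function $\bar\phi-\bar\phi'$ on $\mathcal L_\G$ cannot change sign without leaving $\mathcal D$. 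Since $A_1\subseteq\{\bar\phi-\bar\phi'>0\}$ and $A_2\subseteq\{\bar\phi-\bar\phi'<0\}$, they must lie in distinct components of $\mathcal D$, as needed. Assembling the pieces gives $N=\bar Z^{+,-}\bar Z^{-,+}$ and hence \eqref{ZZ/ZZ}; I expect the sign-constancy lemma to be the only place that demands genuine care.
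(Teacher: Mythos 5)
Your proof is correct and follows essentially the same route as the paper's: the key observation that mixed boundary conditions $(+,-)$/$(-,+)$ force disconnection (which you make rigorous via the sign‑constancy lemma, a detail the paper only gestures at with ``this is not compatible with a connection in $\mathcal D$''), combined with the cluster swap of $\mathcal C_{\extB\Lambda_2}$, which is precisely the symmetry the paper invokes to equate $\bar Z^{+,s}\bar Z^{-,s'}\langle\1[\text{disconnect}]\rangle^{(+,s),(-,s')}$ across exchanges of $s,s'$. Your bijection phrasing and the paper's partition-function identity are the same argument in two notations.
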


\begin{proof}
In this proof, we will denote by $\langle \cdot \rangle^{(s_1,s_2),(s_1',s_2')}$ the expectation operator induced by the product measure $\bar{\P}^{\extB \Lambda_2, \extB \Lambda_1, s_1,s_2} \otimes \bar{\P}^{\extB \Lambda_2, \extB \Lambda_1, s_1',s_2'}$.  In this notation
$$\langle \cdot \rangle^{(+,+),(-,-)} = \langle \cdot \rangle^{\extB\Lambda_1 \cup \extB \Lambda_2, +/-}\,.$$

Observe that
\be
\left\langle \1\left[\extB \Lambda_1 \stackrel{\mathcal{D}}{\centernot \longleftrightarrow} \extB \Lambda_2 \right] \right\rangle^{(+,-),(-,+)} =1 \, .
\ee
This is true because the pair $(\bar{\sigma},\bar{\sigma}')$ must `switch' from being identically $(+1,-1)$ on the boundary of $\Lambda_2$ to being identically $(-1,+1)$ on the boundary of $\Lambda_1$; in the extended Ising model, this is not compatible with a connection in $\mathcal{D}$. \\

Next, the symmetry argument which was used in the proof of Proposition \ref{prop:swap} implies also  that the product
\be
\bar{Z}^{+,s} \bar{Z}^{-,s'} \cdot \left\langle \1\left[\extB \Lambda_1 \stackrel{\mathcal{D}}{\centernot \longleftrightarrow} \extB \Lambda_2 \right] \right\rangle^{(+,s),(-,s')}
\ee
(which gives the contribution to the partition function of terms for which the indicated boundary conditions are met) is invariant under the exchange of $s$ with $s'$. From the resulting equality for the case $(s,s') = (-,+)$, we deduce that
\be
\begin{split}
\frac{\bar Z^{+,-} \cdot \bar Z^{-,+} } {\bar Z^{+,+} \cdot \bar Z^{-,-} } & = \frac{\left\langle \1\left[\extB \Lambda_1 \stackrel{\mathcal{D}}{\centernot \longleftrightarrow} \extB \Lambda_2 \right] \right\rangle^{(+,+),(-,-)}}{1} \\[2ex]  & \equiv \left\langle \1\left[\extB \Lambda_1 \stackrel{\mathcal{D}}{\centernot \longleftrightarrow} \extB \Lambda_2 \right] \right\rangle^{\extB\Lambda_1 \cup \extB \Lambda_2, +/-}. \qedhere
\end{split}
\ee
\end{proof}
\subsection{Bounding the surface tension via non-anticipatory sets}\label{sec:nonanticipatory}\mbox{} \\[-1ex]

The next upper bound combines a swap similar to that employed for \eqref{ZZ/ZZ}, while estimating the effects of imperfections that arise when the swap is not an exact symmetry.  It is an extension of Theorem 4.1 of~\cite{AP18} to random non-anticipatory sets, a notion which we now define.

Given $\Lambda_1 \subset \Lambda_2$ in $\bar{\G}$ and a set $S \subset \bar{\G}$, we denote the `forward' set of $S$ by
\be
S_{F} := \{ u \in \bar{\G} \setminus S: \exists \text{ a path from } u \text{ to } \extB \Lambda_1 \text{ which is disjoint from } S \}.
\ee
The complement of $S_F$ in $\bar{\Lambda}_2$ is called $S_B$, the `backward' set of $S$. We say that $S$ is a separating set if $\extB \Lambda_1 \in S_F$ and $\extB\Lambda_2 \in S_B$.

\begin{definition}  A random set $\mathcal{S}$ is called {\em non-anticipatory} if, for any set $S$, the event $\{\mathcal{S} = S\}$ is measurable with respect to the restriction of $(\bar{\sigma}^+,\bar{\sigma}^-)$ to $S_B$.
\end{definition}

\begin{prop}\label{prop:NonAnticipatory}
Let $\Lambda_1 \subset \Lambda_2 \subset \G$ be finite subsets, and fix a realization of the random field $\eta$. Let $\mathcal{S}$ be a non-anticipatory separating set such that, almost surely,
\be\label{eq:GoodSets}
\bar{\sigma}^+_{\e} = \bar{\sigma}^-_{\e} =0 \quad \forall \e \in \mathcal{S}.
\ee
Then
\begin{equation} \label{eq:Tpos_surface_tension}
\mathcal{T}_{\Lambda_1, \Lambda_2}(\eta)\leq 16 J\,\langle\, | \mathcal{S} \cap \mathcal{D} | \,\rangle^{\extB \Lambda_2 \cup \extB \Lambda_1,+/-}\,.
\end{equation}
\end{prop}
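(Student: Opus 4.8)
The plan is to reduce \eqref{eq:Tpos_surface_tension} to a weighted comparison of partition functions and to obtain that comparison from a near-symmetry swap across $\mathcal{S}$. By the definition of the surface tension and Lemma~\ref{lem:DMPExt}(v), $\beta\,\mathcal{T}_{\Lambda_1,\Lambda_2}(\eta)=\log\big(\bar{Z}^{+,+}\bar{Z}^{-,-}\big)-\log\big(\bar{Z}^{+,-}\bar{Z}^{-,+}\big)$, where $\bar{Z}^{s_1,s_2}$ is the extended-model partition function with the spins on $\extB\Lambda_1$ fixed to $s_1$ and those on $\extB\Lambda_2$ fixed to $s_2$. Hence it suffices to construct an injective map $\Psi$ from the configuration pairs $(\bar{\sigma}^+,\bar{\sigma}^-)$ of positive weight under $\bar{\P}^{\extB\Lambda_2\cup\extB\Lambda_1,+}\otimes\bar{\P}^{\extB\Lambda_2\cup\extB\Lambda_1,-}$ into the pairs of positive weight with the boundary values reversed on $\extB\Lambda_1$ (so that they contribute to $\bar{Z}^{+,-}\bar{Z}^{-,+}$), such that $\Psi$ multiplies the product of the two configuration weights by a factor at least $e^{-16J\beta\,|\mathcal{S}\cap\mathcal{D}|}$. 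Given such a $\Psi$, summing the weight inequality over all inputs, invoking injectivity, and then applying Jensen's inequality to the convex function $x\mapsto e^{-x}$ yields $\bar{Z}^{+,-}\bar{Z}^{-,+}\ \ge\ \bar{Z}^{+,+}\bar{Z}^{-,-}\cdot e^{-16J\beta\,\langle |\mathcal{S}\cap\mathcal{D}|\rangle^{\extB\Lambda_2\cup\extB\Lambda_1,+/-}}$, which is \eqref{eq:Tpos_surface_tension} after taking $\tfrac1\beta\log$.

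The map $\Psi$ is the swap across $\mathcal{S}$, in the spirit of Propositions~\ref{prop:swap} and~\ref{prop:ExactRepST}: given $(\bar{\sigma}^+,\bar{\sigma}^-)$, set $S:=\mathcal{S}(\bar{\sigma}^+,\bar{\sigma}^-)$, exchange the two configurations throughout the forward set $S_F$ and keep them on $S_B$; since $\extB\Lambda_1\subset S_F$ while $\extB\Lambda_2\subset S_B$, the output does contribute to $\bar{Z}^{+,-}\bar{Z}^{-,+}$. The essential point, as in the proofs just cited, is that this exchange is an \emph{exact} symmetry away from the disagreement set: the joint field contribution of the two copies, $\sum_v (h+\eps\eta_v)(\bar{\sigma}^+_v+\bar{\sigma}^-_v)$, is invariant because the per-vertex sum of the two spins is unchanged by any exchange; by hypothesis \eqref{eq:GoodSets} the midpoint values on $\mathcal{S}$ vanish in both copies, so by \eqref{eq:Wdef} the factor $W(\cdot,0)W(\cdot,0)=(\lambda t)^2$ of an edge whose midpoint lies in $\mathcal{S}$ is independent of the endpoint spins; and the factor of any edge that is neither of this kind nor incident to a vertex of $\mathcal{S}\cap\mathcal{D}$ is either left in place or merely transposed between the two copies. (One checks, using that $S_F$ is the set reachable from $\extB\Lambda_1$ off $S$, that an edge straddling the cut either has its midpoint in $\mathcal{S}$, or has an endpoint in $\mathcal{S}$; in the latter case, if that endpoint is not in $\mathcal{D}$ the factor is again merely transposed.) Consequently the only edges on which the bare exchange can break a hard constraint, and the only edges whose weight changes, are the at most four edges incident to a vertex of $\mathcal{S}\cap\mathcal{D}$; on each such edge one restores a legitimate configuration by a local correction acting only on the edge-midpoint variable.

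It then remains to bound the cost of the corrections and to check injectivity. After correcting such an edge, the product of its weight factors over the two copies equals $(\lambda t)^4$, whereas before the correction it equalled $\lambda^4 t^{k}$ for some $k\in\{0,\dots,4\}$ (all four factors being nonzero, the input being legitimate); the weight is thus multiplied by $t^{\,4-k}\ge \min(1,t^4)$. Since $t=(e^{2J\beta}-1)^{-1/2}$ gives $-\log t=\tfrac12\log(e^{2J\beta}-1)<J\beta$, one has $\min(1,t^4)\ge e^{-4J\beta}$, so the (at most four) edges around a vertex of $\mathcal{S}\cap\mathcal{D}$ contribute jointly at least $e^{-16J\beta}$, and $w\big(\Psi(\bar{\sigma}^+,\bar{\sigma}^-)\big)\ge e^{-16J\beta\,|\mathcal{S}\cap\mathcal{D}|}\,w(\bar{\sigma}^+,\bar{\sigma}^-)$. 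For injectivity one uses that $\mathcal{S}$ is non-anticipatory: $\{\mathcal{S}=S\}$ depends only on the restriction of the pair to $S_B$, which $\Psi$ leaves untouched, so from the output one reads off $S$ and then undoes the exchange and the corrections on $S_F$.

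The step I expect to be the main obstacle is the precise description of the local corrections together with the verification that $\Psi$ is invertible. The correction must repair every broken hard constraint; it must touch only the inexpensive midpoint degrees of freedom, so that the field term is undisturbed and, crucially, the per-edge weight cost has logarithm $O(J\beta)$ \emph{uniformly in temperature} (this is exactly what permits the temperature-independent constant $16J$); and it must preserve enough information to be undone. Arranging all three at once is the genuine content of the proposition beyond the deterministic separating layer of Theorem~4.1 in~\cite{AP18} — one is led, for instance, to correct only midpoints that the exchanged endpoint spins already pin to $0$, and to treat with care vertices of $\mathcal{S}\cap\mathcal{D}$ whose incident edges overlap or whose forward neighbours also lie in $\mathcal{D}$. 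Everything else is the exact-symmetry bookkeeping already present in the proofs of Propositions~\ref{prop:swap} and~\ref{prop:ExactRepST}.
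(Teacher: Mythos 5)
Your high-level structure is the right one and matches the paper's: reduce \eqref{eq:Tpos_surface_tension} to a ratio of extended partition functions, implement the ``swap across $\mathcal{S}$'' and estimate its cost, then finish with Jensen. Your identification of exactly which edges see a nontrivial cost (those with one endpoint in $\mathcal{S}\cap\mathcal{D}$ and midpoint in $S_F$) is correct, and the per-edge estimate $\min(1,t^4)\ge e^{-4J\beta}$ with $t=(e^{2J\beta}-1)^{-1/2}$, hence $e^{-16J\beta}$ per vertex, is also correct and reproduces the paper's constant $16J$.

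The gap you flag yourself is, however, genuine and cannot be patched within the injective-map framework: no deterministic correction acting only on the forward mid-edge variables can be invertible. Concretely, take an edge $\{u,v\}$ with $u\in\mathcal{S}\cap\mathcal{D}$, $v,\e\in S_F$, and $\bar\sigma^+_u=\bar\sigma^+_v=+1$, $\bar\sigma^-_u=\bar\sigma^-_v=-1$. Then all four legitimate pre-swap pairs $(\kappa^+_{\e},\kappa^-_{\e})\in\{+1,0\}\times\{-1,0\}$ are forced to $(0,0)$ after the swap plus correction, since the swapped endpoints $(+1,-1)$ pin the mid-edge to $0$ in both copies. The non-anticipatory property does let you read off $S$ from the output, but it gives no help recovering the lost mid-edge values on $S_F$: the space of legal mid-edge values on the cut genuinely shrinks under the swap, so $\Psi$ is at best $4$-to-$1$ per corrected edge. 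Thus the weight comparison you derive holds pointwise along $\Psi$, but summing it over inputs does not give the partition function inequality without a multiplicity factor, and absorbing that factor naively would destroy the constant.

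The paper sidesteps this precisely by \emph{not} trying to be bijective at the level of extended configurations. It first sums out all forward mid-edge variables (using \eqref{eq:GoodSets} to make the sum factorize across $\mathcal{S}$), reducing the forward contribution to the standard Gibbs weight $e^{-\beta H(\sigma_F\mid\bar\sigma_S)}$; the swap is then the relabeling $(\sigma_F,\sigma'_F)\mapsto(\sigma'_F,\sigma_F)$ on spin configurations only, which is a manifest bijection, and the cost is the Hamiltonian difference $\Delta_S H$, bounded by $16J\,|\{u\in S:\tau_u\neq\tau'_u\}|$. In effect, integrating out the forward mid-edges is the ``aux data'' your approach is missing. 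The two cost estimates ($t^{4-k}$ vs.\ $\Delta_S H$) are dual descriptions of the same quantity, so once you adopt the paper's ordering of operations your computation of the constant goes through verbatim.
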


\begin{proof}
It suffices to prove the theorem for $\beta < \infty$, as both sides of \eqref{eq:GoodSets} are for almost every $\eta$ continuous at $\beta=\infty$.

Set $\mathcal{A}:=(\Lambda_2 \setminus \Lambda_1) \cup \extB \Lambda_1$.   As was explained in Lemma~\ref{lem:DMPExt} (cf. \eqref{Z_Zbar}) the partition function of a model can be  computed by summing the weights $\bar{\P}_{\bar{\G}}(\bar{\sigma})$  over the configurations of the extended version of the model.  The partial sums, corresponding to summing first over the edge variables, reduce the expression to the more standard sum of the  Gibbs weights $e^{-\beta H(\sigma)}$ summed  over the regular spin configurations  $\sigma$.  In the following proof we shall mix the two representations.

Given a deterministic separating set $S$ and an extended Ising configuration $\bar{\sigma}$, let $\bar{\sigma}_B$ and $\sigma_F$ be the restrictions of $\bar{\sigma}$ to $S_B$ and $S_F \cap \vert(\G)$, respectively.  Upon summing  the weights over the midedge variables $\e$ in the forward set, we are left with a mixed formula for the partition function in which the remaining summation is over the pair $(\bar \sigma_{B}, \sigma_{F})$, with suitably mixed weights.  If we further assume that $S$ satisfies \eqref{eq:GoodSets}, the weight takes on a particularly simple form of
\be
\bar{\P}_{S_B}(\bar{\sigma}_{B}) \cdot  e^{-\beta H(\sigma_F|\bar \sigma_S)},
\ee
where $H(\sigma_F|\bar \sigma_B)$ is the energy of $\sigma_F$ alone (including the contribution of the external field $\eta$ on $S_F \cap \vert(\G)$), plus the energy terms for pairs of neighboring vertices one of which is in $S_F \cap \vert(\G)$ and the other in $S_B \cap \vert(\G)$.  (In other words, vertex sites of $S$ act as an external boundary condition, and under the condition \eqref{eq:GoodSets} edge vertices of $S$ turn the edge into a free boundary condition.)

This prescription extends naturally to any non-anticipatory random set $\mathcal{S}$.  Decomposing the partition function according to the particular realization $S$ of the random set, one gets:
 \begin{multline} \label{Z_mixed}
\bar{Z}^{+,-} \cdot \bar{Z}^{-,+}  =\\
 = \sum_{S \subset \bar{\mathcal{A}}} \, \sum_{(\bar{\sigma}_B,\sigma_F),(\bar{\sigma}_B',\sigma_F')} \,
 \1[ \mathcal{S} = S](\bar{\sigma}_B,\bar{\sigma}'_B) \cdot
  \1^{\extB \Lambda_2, +}(\bar{\sigma}_B) \cdot \1^{\extB \Lambda_2, -} (\bar{\sigma}'_B)\cdot  \1^{\extB \Lambda_1, -}(\sigma_F)\cdot  \1^{\extB \Lambda_1, +}(\sigma'_F) \\[1.5ex]
  \quad \cdot W_S(\bar{\sigma}_B)\cdot  W_S(\bar{\sigma}'_B)
  \cdot  e^{-\beta [U_S(\bar{\sigma}_B) + U_S(\bar{\sigma}_B')]}
  \cdot \exp\left(-\beta [H_S(\sigma_F \mid \bar{\sigma}_S) + H_S(\sigma'_F \mid \bar{\sigma}'_S)] \right) \, , \qquad
\end{multline}
where the sum is over all separating sets $S$ and correspondingly split configurations, as explained above.  The first indicator ensures that the set $S$ coincides with the value of the random variable $\mathcal{S}$; the next four indicators impose the appropriate boundary conditions on $(\bar{\sigma},\bar{\sigma}')$.  The functions $W_S$ and $U_S$ represent the product of all weights $W$ over edges of $S_B$ and $U$ restricted to $S_B \cap \vert{\G}$, respectively. We note that since $\mathcal{S}$ is non-anticipatory the input to the first indicator can indeed be written as just the pair $(\bar{\sigma}_B,\bar{\sigma}'_B)$.

Upon the relabeling the variables $(\sigma_F,\sigma_F')$ in the switched order, \eqref{Z_mixed} transforms into:
\begin{multline}
\bar{Z}^{+,-} \cdot \bar{Z}^{-,+}   = \\
= \sum_{S \subset \bar{\mathcal{A}}} \,\sum_{(\bar{\sigma}_B,\sigma_F),(\bar{\sigma}_B',\sigma_F')} \1[ \mathcal{S} = S](\bar{\sigma}_B,\bar{\sigma}'_B) \cdot \1^{\extB \Lambda_2, +}(\bar{\sigma}_B) \cdot \1^{\extB \Lambda_2, -} (\bar{\sigma}'_B)\cdot  \1^{\extB \Lambda_1, +}(\sigma_F)\cdot  \1^{\extB \Lambda_1, -}(\sigma'_F)  \\[1.5ex]
\cdot W_S(\bar{\sigma}_B)\cdot  W_S(\bar{\sigma}'_B)    \cdot  e^{-\beta [U(\bar{\sigma}_B) + U(\bar{\sigma}_B')]}
\cdot \exp\left(-\beta [H_S(\sigma_F \mid \bar{\sigma}_S) + H_S(\sigma'_F \mid \bar{\sigma}'_S)+\right) \\[1.5ex]
\cdot   \exp\left(-\beta \cdot \Delta_S H (\sigma_F,\sigma'_F\mid \bar{\sigma}_B,\bar{\sigma}_B')  \right)\, ,  \quad
\end{multline}
with the energy \emph{switching  cost} function
\be
\begin{split}
\Delta_S H (\sigma,\sigma'\mid \tau,\tau') & := [H_S(\sigma \mid \tau) - H_S(\sigma'\mid \tau)] + [H_S(\sigma' \mid \tau') - H_S(\sigma \mid  \tau')] \\ & =  \sum_{\substack{(u,v) \in \edge(\G) \\ u \in S \cap \vert(\mathcal{A}), v \in S_F \cap \vert(\mathcal{A})}} J \cdot(\tau_u - \tau'_u) \cdot (\sigma_v-\sigma'_v).
\end{split}
\ee
The energy cost is null if there are no disagreement percolation sites along $\mathcal S$, and otherwise it is clearly  of the order of $| \mathcal {S}\cap \mathcal{D}|$.   For an explicit bound one easily gets:
\be \label{eq:DeltaH}
|\Delta_S H (\sigma,\sigma'\mid \tau,\tau') | \leq 16 J |\{ u\in S: \tau_u \neq \tau_u'\}|.
\ee

The above switch in the notation translates into a change in the boundary conditions, and can be summarized in the  following relation
\be
\begin{split}
\bar{Z}^{+,-} \cdot \bar{Z}^{-,+} & = \bar{Z}^{+,+} \cdot \bar{Z}^{-,-} \left\langle \exp(- \beta \cdot \Delta_{\mathcal{S}} H(\sigma_F,\sigma'_F \mid \sigma_\mathcal{S},\sigma'_{\mathcal{S}}) \right \rangle^{\extB \Lambda_2 \cup \extB \Lambda_1,+/-} \\
\end{split}
\ee
For the surface tension this gives:
\be
\begin{split}
\mathcal{T}_{\Lambda_1,\Lambda_2}(\eta) &  = \frac{-1}{\beta} \log \left\langle \exp(- \beta \cdot
\Delta_{\mathcal{S}} H(\sigma_F,\sigma'_F \mid \sigma_\mathcal{S},\sigma'_{\mathcal{S}}
\right \rangle^{\extB \Lambda_2 \cup \extB \Lambda_1,+/-}\\ & \leq  \frac{-1}{\beta} \log \left\langle \exp(- 16 \beta J |\mathcal{S} \cap \mathcal{D}| )\right \rangle^{\extB \Lambda_2 \cup \extB \Lambda_1,+/-} \\ & \leq 16 J \left\langle |\mathcal{S} \cap \mathcal{D}|\right \rangle^{\extB \Lambda_2 \cup \extB \Lambda_1,+/-},
\end{split}
 \ee
where the first inequality is \eqref{eq:DeltaH}, and the second is by Jensen's inequality applied to the convex function $(-\log x)$.
\end{proof}

\section{Fractality of disagreement percolation }

\subsection{A sufficient condition for tortuosity} \mbox{ } \\[-1ex]

The surface tension upper bound of the previous section will be significantly boosted through the observation that the path along the clusters of disagreement percolation are tortuous, in the following sense.
\begin{lemma}\label{lem:Fractality weaker}
Set $\mathcal{A}_{1,2} :=\Lambda(2 \ell) \setminus \Lambda(\ell)$. Let $A_{\alpha,\ell}$ denote the event that the annulus $\bar{\mathcal{A}}_{1,2}$ is crossed by a path of disagreement percolation  whose length is at most $\ell^{1 + \alpha}$. Then, for some $\alpha = \alpha(J/\eps)>0$:
\be
\lim_{\ell\to\infty}\mathbb{E}\left(\langle 1_{A_{\alpha,\ell}}\rangle^{\extB \mathcal{A}_{1,2},+/-}\right) \to 0.
\ee
\end{lemma}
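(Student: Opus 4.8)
\emph{Plan of proof.} The intention is to deduce the statement from the sufficient condition for tortuosity of Aizenman--Burchard~\cite{AB99}, applied to the connected components of $\mathcal{D}$ viewed as random curves on the planar metric graph $\mathcal{L}_\G$. It suffices to verify the multi-scale, multi-arm estimate serving as the hypothesis of~\cite{AB99}: there exist $C<\infty$ and an increasing sequence $\lambda_k\to\infty$ such that, for every lattice point $x$, all radii $s<S$, and every $k\ge 1$,
\be
\mathbb{E}\!\left[\left\langle \1[\,\text{$k$ disjoint disagreement crossings of the annulus $\{\,y:s\le|y-x|\le S\,\}$}\,]\right\rangle^{+/-}\right]\ \le\ C\,(s/S)^{\lambda_k}\, ,
\ee
the inner bracket being taken with $+$ (resp.\ $-$) boundary data on the internal vertex boundary of the annulus. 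Granting this, \cite{AB99} provides uniform H\"older regularity of the disagreement curves, and the deterministic part of that argument converts such regularity into a lower bound of order $\ell^{1+\alpha}$ on the number of lattice steps in any crossing of $\bar{\mathcal{A}}_{1,2}$, outside an event whose probability a union bound over the $O(\ell^2)$ locations and $O(\log \ell)$ scales shows to vanish as $\ell\to\infty$, with $\alpha=\alpha(J/\eps)>0$ fixed by the attained H\"older exponent. This is exactly the claim of Lemma~\ref{lem:Fractality weaker}.

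I would obtain the multi-arm estimate by stacking a single-scale input. Fixing an aspect ratio $\rho_0>1$, partition the annulus $\{s\le|\cdot-x|\le S\}$ into $m\asymp\log_{\rho_0}(S/s)$ concentric sub-annuli of aspect ratio $\rho_0$, processed from the outside inward. Since the event ``$k$ disjoint disagreement crossings'' is increasing in $\bar\sigma^+$ and decreasing in $\bar\sigma^-$, the domain Markov property of $\bar{\mathbb{P}}_{\bar\G}$ (Lemma~\ref{lem:DMPExt}) together with the FKG monotonicity it inherits shows that, after conditioning on the configuration in the already-explored outer region, the conditional probability of a further $k$-fold crossing of the next sub-annulus remains bounded by the corresponding quantity under the extremal $+/-$ boundary data. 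This yields the product bound $q_k^{\,m}=(s/S)^{\lambda_k}$ with $\lambda_k\asymp\log(1/q_k)$, so it remains to show that
\be
q_k\ :=\ \sup_{x,\,r}\ \mathbb{E}\!\left[\left\langle \1[\,\text{$k$ disjoint disagreement crossings of the scale-$r$, aspect-$\rho_0$ annulus at $x$}\,]\right\rangle^{+/-}\right]\ \longrightarrow\ 0\qquad(k\to\infty)\, ,
\ee
uniformly in the scale $r$ and in the ambient domain.

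For the single-scale bound I would pass through the surface tension. On the event that a scale-$r$, aspect-$\rho_0$ annulus carries $k$ disjoint disagreement crossings, every separating circuit meets $\mathcal{D}$ in at least $k$ edges; extracting, as in~\cite{AP18}, the innermost separating circuit of sites at which $\bar\sigma^+_\e=\bar\sigma^-_\e=0$ produces a non-anticipatory separating set $\mathcal{S}$ obeying the hypothesis \eqref{eq:GoodSets} and satisfying $|\mathcal{S}\cap\mathcal{D}|\ge k$ there. Proposition~\ref{prop:NonAnticipatory} then controls $\mathcal{T}_{\Lambda(r),\Lambda(\rho_0 r)}(\eta)$ in terms of $\langle|\mathcal{S}\cap\mathcal{D}|\rangle$, Proposition~\ref{prop:ExactRepST} identifies the no-crossing probability with $e^{-\beta\mathcal{T}}$, and Theorem~\ref{thm:T2} rewrites $\mathcal{T}$ as a Gaussian average of the total disagreement $D_{\Lambda_1,\Lambda_2}$; feeding these into one another, together with the anti-concentration of $\widehat\eta_{\Lambda_1}$, with concentration estimates discarding atypical field realizations, and with the power-law decay of $m$ from~\cite{AP18}, should force $q_k\to0$.

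I expect the genuine obstacle to be the honest verification of the hypothesis of~\cite{AB99}, and it is here that strict planarity enters: one must realize the disagreement structure as a bona fide family of curves in $\mathbb{R}^2$ for which ``disjoint crossings'' is the correct planar-topological notion, translate ``$k$ disjoint disagreement crossings'' into the combinatorial data that the surface-tension bounds accept, correctly account for the edge-midpoint ``zero'' sites of $\bar\G$ when building $\mathcal{S}$, and check its non-anticipation measurability. A secondary difficulty is extracting from the surface-tension and anti-concentration inputs a decay of $q_k$ in $k$ that is uniform over the scale $r$ and over the quenched field $\eta$.
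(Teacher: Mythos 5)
The proposal takes a genuinely different route from the paper's proof, and it contains a real gap. The paper verifies the ``well-separated rectangles'' hypothesis of \cite{AB99} (Lemma~\ref{lem:criterion}): the exponential decay in the number of rectangles is obtained from a \emph{weak} single-scale input, namely the lasso estimate of Lemma~\ref{lem:lasso}, which asserts only that the probability of a disagreement circuit around a fixed-aspect annulus is bounded away from one by a constant depending on $J/\eps$; pivot points and rotation invariance then upgrade this to the rectangle bound. You instead propose the multi-arm hypothesis $\Pr[\,k\text{ disjoint crossings of an $s$--$S$ annulus}\,]\le C(s/S)^{\lambda_k}$, stacking a single-scale estimate over $\log(S/s)$ concentric sub-annuli. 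The stacking step via domain Markov plus FKG monotonicity is fine, but it shifts the burden onto a \emph{strong} single-scale input: $q_k\to 0$ as $k\to\infty$, uniformly in the scale and ambient domain. That is where the proposal breaks.

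The tools you invoke point the wrong way for $q_k\to0$. Proposition~\ref{prop:NonAnticipatory} gives $\mathcal{T}\le 16J\langle|\mathcal{S}\cap\mathcal{D}|\rangle$; this is a \emph{lower} bound on $\langle|\mathcal{S}\cap\mathcal{D}|\rangle$ in terms of $\mathcal{T}$, not an upper bound, so it cannot be fed into Markov's inequality to make $k$ crossings unlikely. Proposition~\ref{prop:ExactRepST} identifies $e^{-\beta\mathcal{T}}$ with the \emph{disconnection} probability, which controls the event of at least one crossing, not of at least $k$. And an attempted first-moment bound $\mathbb{E}\langle|\mathcal{S}\cap\mathcal{D}|\rangle\lesssim |\Lambda_1|\cdot m(\ell)$ using the \cite{AP18} power-law is circular: it is $o(1)$ only if $m(\ell)\ll\ell^{-2}$, which is precisely what Section~\ref{sec:exp decay} later \emph{deduces} from the tortuosity lemma, not something available while proving it (the paper's Section 5 uses no decay of $m$ at all). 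What is missing from the sketch is the paper's key structural device: the self-bounding inequality of Lemma~\ref{lem:self bounding inequality}, obtained by confronting the integral representation of Theorem~\ref{thm:T2} (a \emph{lower} bound for $\mathcal{T}$ in which the lasso event is inserted to decouple the inner disagreement from the outer boundary via domain Markov) against the non-anticipatory-set bound of Proposition~\ref{prop:NonAnticipatory} (an \emph{upper} bound for $\mathcal{T}$). Because the \emph{same} distributional quantity $D_v$ appears on both sides, the resulting inequality can be closed by the quantile and Gaussian-tail argument in the proof of Lemma~\ref{lem:lasso}, yielding a scale-uniform constant. Without this ``same quantity on both sides'' mechanism, I do not see how the single-scale estimate can be made uniform and quantitative, and hence how either the multi-arm decay or the rectangle decay could be established.
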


The arguments of this section are close to the zero-temperature tortuosity argument of~\cite{DX19}. The desired statement follows from~\cite[Theorem 1.3]{AB99} through an estimate on the probabilities that collections of well-separated rectangles are simultaneously traversed in the long direction.

For an explicit statement of the condition, consider configurations $(\bar{\sigma}^+,\bar{\sigma}^-)$ defined in $\bar{\mathcal{A}}_{1,2}$, and the percolation cluster $\mathcal{C}_{\extB\mathcal{A}_{1,2}}(\bar{\sigma}^+,\bar{\sigma}^-)$ which they define.
We shall refer  by $\mathcal{R}$ to collections  of rectangles of possibly varying orientations in $\R^2$, with the properties:
\begin{enumerate}\label{list:well-separated}
  \item Each $R\in\mathcal{R}$ has side lengths $\ell(R)\times5\ell(R)$ with $10\le \ell(R)\le \frac{1}{160}\ell$.
  \item Each $R\in\mathcal{R}$ is contained in the annulus $\mathcal{A}_{\frac{5}{4},\frac{7}{4}}^{\R^2} :=\{x\in\R^2\colon \frac{5}{4}\ell<\|x\|_1< \frac{7}{4}\ell\}$.
  \item The $\ell_1(\R^2)$ distance between distinct $R_1, R_2\in\mathcal{R}$ is at least $60 \max\{\ell(R_1),\ell(R_2)\}$.
\end{enumerate}
We call such a collection {\em well-separated}.

We can associate $\mathcal{C}_{\extB\mathcal{A}_{1,2}}$ with a polygonal curve in $\mathbb{R}^2$ by embedding $\bar{\Z}^2$ in $\mathbb{R}^2$ in the natural way and including the straight lines between adjacent vertices in $\mathcal{C}_{\extB\mathcal{A}_{1,2}}$. We say that the event $\textup{Cross}(R)$ occurs if the curve associated to $\mathcal{C}_{\extB\mathcal{A}_{1,2}}$ contains a $\mathbb{R}^2$ path connecting the short sides of $R$.

By the general sufficient condition of ~\cite[Theorem 1.3]{AB99},  Lemma~\ref{lem:Fractality weaker} can be deduced from the following statement.
\begin{lemma}  \label{lem:criterion}
There exists  $b >0$ such that for any set of well-separated rectangles $\mathcal{R}$,
  \begin{equation} \label{eq:criterion}
    \E\left(\left\langle \prod_{R\in\mathcal{R}}\1_{\textup{Cross}(R)}\right\rangle^{\extB\mathcal{A}_{1,2}, +/-}\right)\le \left(1-b \right)^{|\mathcal{R}|}.
  \end{equation}
\end{lemma}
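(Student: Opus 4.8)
\textbf{Proof strategy for Lemma~\ref{lem:criterion}.} The plan is to bound the probability that \emph{many} well-separated rectangles are simultaneously crossed by the disagreement cluster, by leveraging the two surface-tension relations from Section~\ref{sec:T} against each other. The key point is that a crossing of a rectangle $R$ in the long direction by a path of disagreement forces a disagreement flux across \emph{every} line cutting $R$ in the short direction. I would first observe that, by stochastic domination (the FKG property inherited by the extended model, cf.\ the corollary to Lemma~\ref{lem:DMPExt}), it suffices to prove the estimate with a fixed ambient annulus and $+/-$ boundary conditions on $\extB\mathcal{A}_{1,2}$; in particular crossing probabilities only decrease under enlargement of the domain, so we may work in a convenient reference region containing all of $\mathcal{A}^{\R^2}_{5/4,7/4}$.

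\textbf{The core mechanism.} For a single rectangle $R$, I would construct inside $R$ a family of roughly $\ell(R)$ parallel separating segments (line segments cutting $R$ in the short direction, spaced one lattice step apart). On the event $\textup{Cross}(R)$ the disagreement cluster crosses each such segment, so $|\mathcal{S}\cap\mathcal{D}|\ge \ell(R)$ for the corresponding non-anticipatory set $\mathcal{S}$ (one builds $\mathcal{S}$ by exploring the backward side, stopping at the first disagreement-free cut; the `hard constraints' of the extended model guarantee that one can choose cuts through edge-midpoints where $\bar\sigma^\pm = 0$, so that \eqref{eq:GoodSets} holds). Then Proposition~\ref{prop:NonAnticipatory} gives a \emph{lower} bound $\mathcal{T}_{\Lambda_1,\Lambda_2}(\eta)\ge c\,\ell(R)$ on the surface tension across $R$ — but only on the event that all these crossings occur. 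Simultaneously, Theorem~\ref{thm:T2} expresses $\mathcal{T}$ as $2\eps\int_\R D(\eta^{(t)})\,dt$, an integral of a disagreement count that is at most $|\Lambda_1|$, hence bounded by $C\ell(R)^2$; combined with the anti-concentration form $\mathcal{T} = \tfrac{2\eps}{\sqrt{|\Lambda_1|}}\E_{\widehat\eta}(D/\phi(\widehat\eta))$, this forces the Gaussian shift variable $\widehat\eta_{\Lambda_1(R)}$ to lie in an atypical range whenever $\mathcal{T}$ is as large as $c\ell(R)$. Quantitatively, $\Pr(\textup{Cross}(R))\le \Pr(\mathcal{T}\ge c\ell(R))$, and the latter is at most some $1-b<1$ by a one-dimensional Gaussian anti-concentration estimate applied to $\widehat\eta_{\Lambda_1(R)}$.

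\textbf{From one rectangle to many.} For the product over $\mathcal{R}$, I would exploit that the rectangles are $\ell_1$-separated by at least $60\max\{\ell(R_1),\ell(R_2)\}$, so the associated local shift variables $\widehat\eta_{\Lambda_1(R)}$ live on \emph{disjoint} vertex sets and are therefore independent Gaussians. The obstruction is that the crossing events are not themselves independent — the disagreement cluster is a single global object. I would handle this by conditioning sequentially: order the rectangles, and for each $R\in\mathcal{R}$ reveal the configuration outside a neighborhood of $R$ first; the non-anticipatory exploration used above is local to $R$ (contained in $R$ together with a bounded collar, which the separation guarantees is disjoint from the other rectangles), so conditionally on everything outside, the probability of $\textup{Cross}(R)$ is still bounded by the probability that the \emph{conditional} law of $\widehat\eta_{\Lambda_1(R)}$ — still Gaussian with bounded variance, since only orthogonal degrees of freedom have been fixed — falls in the atypical interval, which is again $\le 1-b$ uniformly in the conditioning. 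Multiplying the conditional bounds yields $(1-b)^{|\mathcal{R}|}$.

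\textbf{Main obstacle.} The delicate step is making the conditional anti-concentration argument uniform. One must verify that, after revealing the disagreement configuration outside a collar of $R$, (i) the exploration defining $\mathcal{S}$ remains non-anticipatory and satisfies \eqref{eq:GoodSets}, so Proposition~\ref{prop:NonAnticipatory} still applies with the conditioned field; (ii) the Gaussian variable $\widehat\eta_{\Lambda_1(R)}$, conditioned on the revealed $\eta$ outside, still has variance bounded below (it does, since the revealed information concerns $\eta$ on a disjoint vertex set, hence orthogonal Gaussian coordinates — but this needs the separation hypothesis to be spelled out carefully); and (iii) the upper bound $D\le|\Lambda_1(R)|\le C\ell(R)^2$ together with the representation of Theorem~\ref{thm:T2} genuinely confines $\widehat\eta_{\Lambda_1(R)}$ to an interval whose Gaussian mass is bounded away from $1$ — this is where the numerical constants ($16J$ in \eqref{eq:Tpos_surface_tension}, the $5\times 1$ aspect ratio, the factor $2\eps$) must be balanced to extract a genuine $b=b(J/\eps)>0$. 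The rest is bookkeeping on separating sets, which parallels Theorem~4.1 of~\cite{AP18} and the zero-temperature argument of~\cite{DX19}.
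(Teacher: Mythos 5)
There is a genuine gap in the core mechanism, and it stems from two related errors. First, you invoke Proposition~\ref{prop:NonAnticipatory} as if it furnished a \emph{lower} bound on the surface tension; in fact \eqref{eq:Tpos_surface_tension} is an \emph{upper} bound, $\mathcal{T}_{\Lambda_1,\Lambda_2}\le 16J\,\langle|\mathcal{S}\cap\mathcal{D}|\rangle$, so exhibiting many disagreements on separating sets yields no lower bound on $\mathcal{T}$ at all (if anything, it is consistent with a large upper bound). Second, even if the direction were fixed, the count $|\mathcal{S}\cap\mathcal{D}|\ge\ell(R)$ is not what a single crossing gives: a crossing path meets each transverse cut in $O(1)$ points, so any single separating set $\mathcal{S}$ built from one cut carries only $O(1)$ disagreements, not $\ell(R)$; and a union of $\ell(R)$ parallel cuts is not a separating set in the sense required by the forward/backward decomposition of Section~\ref{sec:nonanticipatory}, so Proposition~\ref{prop:NonAnticipatory} does not apply to it. Beyond these, the geometry is off: $\mathcal{T}_{\Lambda_1,\Lambda_2}$ measures insulation between nested boundaries, while $\textup{Cross}(R)$ is a crossing of a small rectangle sitting inside the annulus; Proposition~\ref{prop:ExactRepST} relates $\mathcal{T}$ to disconnection of $\extB\Lambda_1$ from $\extB\Lambda_2$, which a rectangle crossing neither forces nor forbids.

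What the paper does instead is to first prove a non-trivial upper bound on the probability of the \emph{lasso} event $\mathcal{L}(\ell)$ (a disagreement circuit around an annulus), via the self-bounding inequality of Lemma~\ref{lem:self bounding inequality}: the integral representation of Theorem~\ref{thm:T2} gives a lower bound on $\mathcal{T}_{\Lambda_8,\Lambda_{25}}$ in terms of $D_\zero\cdot\langle\1_{\mathcal{L}(\ell)}\rangle$ (using the circuit to decouple the inner ball), while Proposition~\ref{prop:NonAnticipatory} with a carefully chosen non-anticipatory level-set family gives a matching upper bound in terms of $\sum_{v\in\mathcal{B}}D_v$; a quantile/anti-concentration argument (the part you gesture at) then extracts $\E(\langle\1_{\mathcal{L}(\ell)}\rangle)\le 1-c\exp(-C(J/\eps)^2)$. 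Lemma~\ref{lem:criterion} is then deduced via \emph{pivot points}: for each $R$ one finds $v$ such that crossing $R$ and its three $90^\circ$ rotations about $v$ forces $\mathcal{L}(\ell)$ in the annulus $\mathcal{A}_v$, so by rotation invariance and a union bound $p_R\le 1-\tfrac14(1-p_{\mathcal{L}})$. For several well-separated rectangles the pivot annuli are disjoint, and the factorization $(1-b)^{|\mathcal{R}|}$ comes from the domain Markov property and monotonicity of the extended model, not from independence of local Gaussian shift variables. Your sequential conditioning idea for the multi-rectangle step could in principle be made to work, but it rests on the broken single-rectangle bound, and the circuit structure is really what makes the surface-tension comparison close.
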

Our goal in this section is to prove this estimate, which we shall do with
\be
b \ = \  c\exp\left(-C\left(\frac{J}{\eps}\right)^2\right)
\ee
at some absolute constants $C, c>0$. This is followed by a quantitative version of Lemma~\ref{lem:Fractality weaker}.

\subsection{Disagreement lassoes}\mbox{ } \\[-1ex]

The first step is to bound the probability that the disagreement set forms a circuit in an annulus (the lasso event).   This would allow to  bound away from $1$ the crossing probability of a single rectangle, and finally verify  \eqref{eq:criterion}.\\

For this section we denote by $\L(\ell) $ the event that the boundary component of the disagreement set in $\bar{\Lambda}(25\ell)$ contains a circuit in the annulus $\overline{\Lambda(8\ell)\setminus\Lambda(\ell)}$, and refer to it as a \emph{lasso event}.\\

\begin{lemma}\label{lem:lasso}
  There exist absolute constants $C,c>0$ with which
  for all integers $\ell$
  \begin{equation}\label{eq:lasso estimate}
    \E(\langle \1_{\L(\ell)} \rangle_{\Lambda(25\ell)}^{\extB\Lambda(25\ell), +/-})\le 1-c\exp\left(-C\left(\frac{J}{\eps}\right)^2\right).
  \end{equation}
\end{lemma}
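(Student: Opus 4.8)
The plan is to bound the \emph{complementary} probability, namely that the boundary component of the disagreement set does \emph{not} form a circuit in the annulus $\overline{\Lambda(8\ell)\setminus\Lambda(\ell)}$, from below by $c\exp(-C(J/\eps)^2)$. The key realization is that $\L(\ell)^c$ — no disagreement circuit surrounding $\Lambda(\ell)$ inside $\Lambda(8\ell)$ — is, via planar duality in the extended graph $\bar{\Z}^2$, essentially the event that there is a $\bar\sigma^+ = \bar\sigma^-$ connection in $\bar{\mathcal{A}}:=\overline{\Lambda(8\ell)\setminus\Lambda(\ell)}$ joining the two boundary components of the annulus (a path of agreement that, being continuous and crossing the annulus, blocks any disagreement circuit). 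So it suffices to produce, with probability at least $c\exp(-C(J/\eps)^2)$ under the averaged measure $\E\langle\cdot\rangle^{\extB\Lambda(25\ell),+/-}$, a crossing of the annulus along edges/midpoints where $\bar\sigma^+ = \bar\sigma^-$ — equivalently, a crossing that avoids $\mathcal{D}$. By FKG monotonicity of the extended model (the corollary after Lemma~\ref{lem:DMPExt}) one can restrict attention to the annular region with the worst-case $+/-$ boundary data, so the real content is a single lower bound on an agreement-crossing probability in the annulus $\bar{\mathcal{A}}$.

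To get that lower bound I would condition on a favorable random-field configuration. The mechanism: if the random field $\eta$ is \emph{sufficiently large and positive} on a thin annular tube $\cT$ of bounded width (say width $3$) that separates $\Lambda(\ell)$ from the complement of $\Lambda(8\ell)$, then in \emph{both} the $\bar\sigma^+$ and $\bar\sigma^-$ systems the spins on $\cT$ are forced (with high conditional probability) to be $+1$ throughout $\cT$, hence $\bar\sigma^+ = \bar\sigma^- = +1$ on a crossing of $\cT$, giving the agreement crossing. Quantitatively: conditionally on the field, the conditional law of the spins on $\cT$ given the spins outside is a finite-volume Ising measure in an external field at least $\eps\cdot(\text{field value}) - 2J\cdot(\text{bounded degree})$; choosing the field value of order $C' J/\eps$ on $\cT$ makes this effective field large enough that the probability all of $\cT$ is $+1$ is bounded below by a constant (uniformly in $T\in[0,\infty]$, including the ground state, and uniformly in the boundary conditions, using the FKG/monotonicity to reduce to extremal boundary data). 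The probability that a standard Gaussian field exceeds $C' J/\eps$ on \emph{every} site of $\cT$ is $(\bar\Phi(C' J/\eps))^{|\cT|}$; the delicate point is that $|\cT|$ grows with $\ell$, so a naive tube is too costly.

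The fix — and what I expect to be the main obstacle — is to make the agreement path \emph{local in the field} rather than requiring a global favorable tube. Following the structure of the lasso region (annulus $\Lambda(8\ell)\setminus\Lambda(\ell)$ sitting inside the much larger $\Lambda(25\ell)$), I would tile the annulus by $O(1)$-many overlapping boxes, each of a fixed scale comparable to $\ell$, and argue that it suffices for the field to be favorable on \emph{one bounded-size "plug"} in each box — a plug being a $w\times w$ square (with $w$ an absolute constant, or at worst $w = w(J/\eps)$) on which a large positive field forces a $+$-agreement region whose geometry, combined across the $O(1)$ boxes and using the surrounding $+/-$ boundary conditions plus FKG to propagate agreement, yields a full annular crossing of agreement. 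Then the cost is only $(\bar\Phi(C'J/\eps))^{w^2}\cdot(\text{const})^{O(1)}$, i.e.\ $c\exp(-C(J/\eps)^2)$, independent of $\ell$. Making "$O(1)$ plugs suffice to build an annular agreement crossing" precise is the technical heart: one must show that once a bounded plug is pinned to $+1$ in both copies, the domain Markov property reduces the remaining problem to connecting the plug to the inner and outer boundaries through agreement, and there the $+$ boundary condition on $\extB\Lambda(25\ell)$ — which is common to $\bar\sigma^+$ and $\bar\sigma^-$ near that boundary only in a weak sense — must be leveraged carefully; most likely one argues instead that the plugs can be chained so that between consecutive plugs only a bounded region needs to agree, each contributing an independent constant factor. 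A clean alternative, which I would pursue in parallel, is to invoke the known positive-temperature uniform lower bound on the RFIM agreement/connectivity in a box of a fixed mesoscopic scale (a consequence of the surface-tension estimates of \cite{AP18} or of the disagreement representation itself), which already gives a constant-probability agreement crossing of a box of scale $\delta\ell$ at the price $c\exp(-C(J/\eps)^2)$, and then patch $O(1/\delta)$ such boxes around the annulus — the patching again using FKG monotonicity and the domain Markov property of the extended model. Either way, the output is the stated bound with $b = c\exp(-C(J/\eps)^2)$ absorbed into $1 - c\exp(-C(J/\eps)^2)$.
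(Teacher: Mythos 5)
Your approach is genuinely different from the paper's and, as written, has a gap that I do not see how to close. You aim at a direct construction: produce an agreement crossing of $\overline{\Lambda(8\ell)\setminus\Lambda(\ell)}$ with probability $\geq c\exp(-C(J/\eps)^2)$ uniformly in $\ell$. You correctly identify that conditioning the field to be large on a full annular tube $\cT$ costs $\bar\Phi(C'J/\eps)^{|\cT|}$ with $|\cT|\gtrsim\ell$, which decays in $\ell$. But neither of your proposed fixes repairs this. First, the $O(1)$-plug patching requires agreement (with no field conditioning) on stretches of length $\gtrsim\ell$ separating consecutive plugs, and you invoke FKG monotonicity to propagate agreement there. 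However, the agreement event $\{\bar\sigma^+_u = \bar\sigma^-_u\}$ is \emph{not} monotone in the natural partial order on $\bar\P^+\otimes\bar\P^-$ (the maximal state $\bar\sigma^+\equiv+1,\ \bar\sigma^-\equiv-1$ and the minimal state $\bar\sigma^+\equiv-1,\ \bar\sigma^-\equiv+1$ are both \emph{total disagreement}), so FKG gives no positive-correlation patching, nor does it justify the opening reduction to a ``worst-case'' boundary datum. Worse, the outer boundary conditions on $\extB\Lambda(25\ell)$ are the disagreement pair $(+,-)$, which biases the region \emph{against} agreement, not towards it. Second, the ``clean alternative'' invokes a ``known positive-temperature uniform lower bound on RFIM agreement/connectivity at a fixed mesoscopic scale'' as a black box. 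No such bound appears in~\cite{AP18} or in this paper; it is essentially a restatement of the lemma you are trying to prove. (The surface-tension estimates of~\cite{AP18} give upper bounds on disagreement, not a uniform-in-$\ell$ lower bound on agreement crossings.)

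The paper avoids the direct construction altogether. It proves a \emph{self-bounding inequality} (Lemma~\ref{lem:self bounding inequality}) by contrasting a lower bound on the surface tension $\mathcal{T}_{\Lambda_8,\Lambda_{25}}$ --- which, through the decoupling role of the outermost lasso circuit, contains a factor of $\langle \1_{\L(\ell)}\rangle^{+/-}_{25}$ and the disagreement count $D_\zero$ --- against an upper bound obtained from the non-anticipatory set bound of Proposition~\ref{prop:NonAnticipatory}, expressed via a sum $\sum_{v\in\mathcal{B}}D_v$ of independent copies of the \emph{same} random variable. The Gaussian integral representation of Theorem~\ref{thm:T2} then provides the anti-concentration input that converts this comparison into the desired uniform-in-$\ell$ lower bound on $\E\langle\1_{\L(\ell)^c}\rangle$, with the $\exp(-C(J/\eps)^2)$ cost entering through the choice of the Gaussian tail parameter $r\sim J/\eps$ and the quantile $\delta=\phi(r/\rho)$. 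This entirely sidesteps the need to exhibit any agreement path, which is precisely where your construction gets stuck.
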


    \begin{figure}
\begin{center}
\includegraphics[width=0.4\textwidth]{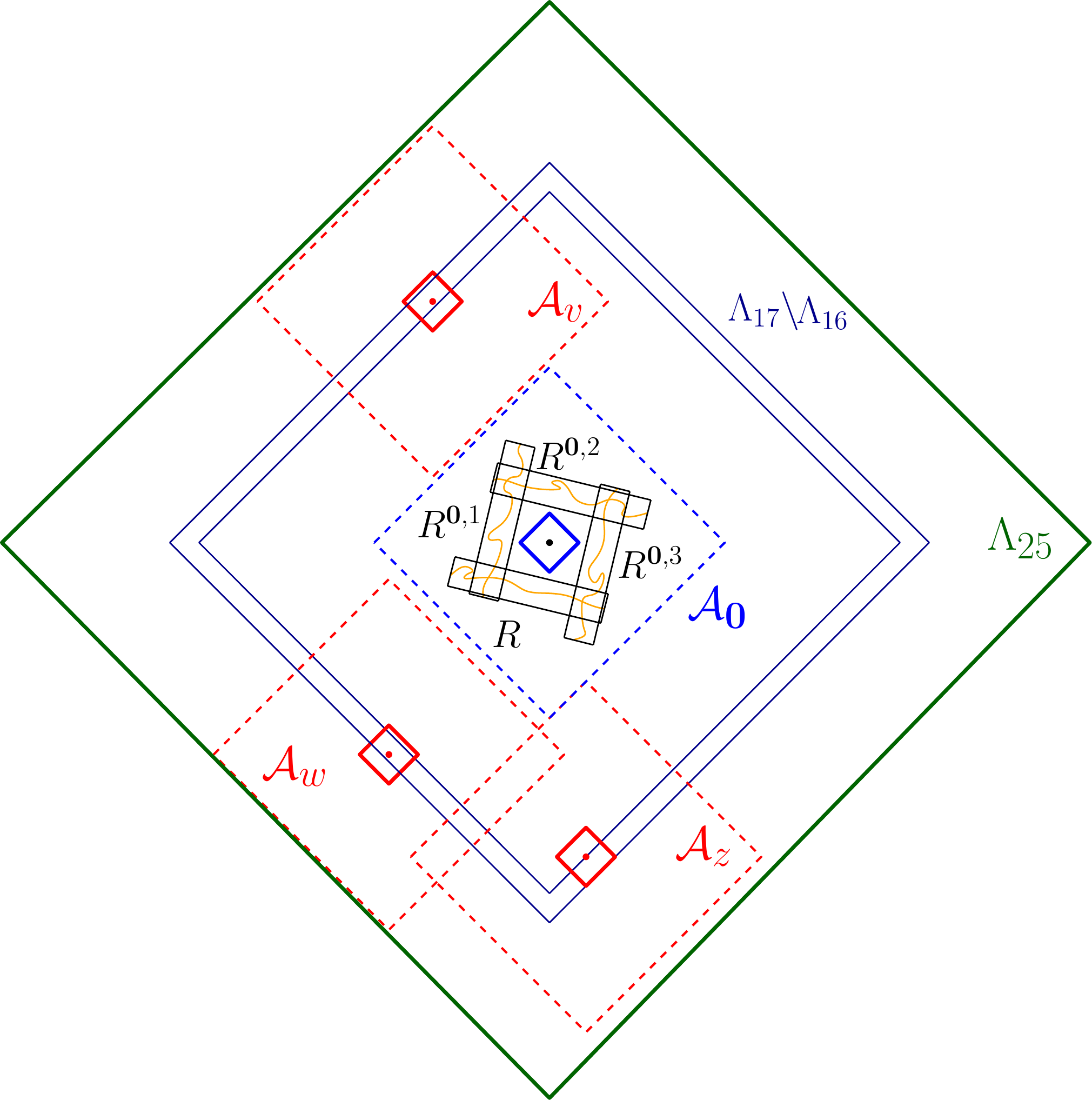}
\caption{The setup for the proof of Lemmas \ref{lem:criterion} and  \ref{lem:lasso}. The center point $\mathbf{0}$ is a pivot point for the retangle $R$. The annulus $\mathcal{A}_\mathbf{0}$ is in blue. The red annuli are translates of $\mathcal{A}_\mathbf{0}$ for $v,w,z \in \mathcal{B}$. Each red annulus is a subset of $\Lambda_{25} \setminus \Lambda_8$. (Colors referring to the original PDF).}
\label{fig:frac}
\end{center}
\end{figure}

\noindent We prove the lemma for $\beta < \infty$; the zero-temperature case follows by continuity.

A key tool towards the proof of  Lemma~\ref{lem:lasso} is a self-bounding  inequality that we now phrase.
Fix an integer $\ell>0$. For brevity, denote $\Lambda_j:=\Lambda(j\ell)$ and shorthand $\langle \cdot \rangle_{\Lambda_j}^{\extB\Lambda_j,+/-}$ to $\langle \cdot \rangle^{+/-}_j$.
Cover the annulus $\Lambda_{17}\setminus\Lambda_{16}$ by translates of $\Lambda_1$: $(v+\Lambda_1)_{v\in\mathcal{B}}$ for some $\mathcal{B}\subset \Lambda_{17}\setminus\Lambda_{16}$ chosen so that $|\mathcal{B}|$ smaller than a suitable absolute constant. Note that $v+\Lambda_8\subset\Lambda_{25}\setminus\Lambda_8$ for each $v\in\mathcal{B}$. Recalling~\eqref{eq:D_l_def} and~\eqref{eq:disagreement representation for D}, we denote the thermal average of the number of disagreements in $v+\Lambda_1$ due to boundary conditions placed at $v+\Lambda_8$ by
\begin{equation}\label{eq:D v eta}
  D_v(\eta):=D_{v+\Lambda_1, v+\Lambda_8}(\eta) = \left\langle|(v+\Lambda_1)\cap \mathcal{C}_{\extB(v+\Lambda_8)}|\right\rangle^{\extB(v+\Lambda_8), +/-}(\eta).
\end{equation}
\begin{lemma}\label{lem:self bounding inequality}
  For each realization of the random field,
  \begin{equation}\label{eq:self bounding inequality}
    \E _{\widehat \eta_{\Lambda_8} } \left(\frac{ D_\zero(\eta) \cdot  \left\langle \1_{ \L(\ell)} \right\rangle^{+/-}_{25}(\eta) }{ \phi(\widehat \eta_{\Lambda_8}) } \right) \le 100\cdot \frac{J}{\eps}\cdot \sum_{v\in\mathcal{B}} D_v(\eta).
  \end{equation}
\end{lemma}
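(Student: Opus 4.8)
The left-hand side of~\eqref{eq:self bounding inequality} is, by Theorem~\ref{thm:T2} applied with $\Lambda_1$ replaced by $\Lambda_8$ (and $\Lambda_2 = \Lambda_{25}$), nothing but $\frac{\sqrt{|\Lambda_8|}}{2\eps}$ times the expectation over $\widehat\eta_{\Lambda_8}$ of the surface tension $\mathcal{T}_{\Lambda_8,\Lambda_{25}}(\eta)$ weighted against the lasso indicator -- except that here the integrand carries the extra factor $\langle \1_{\L(\ell)}\rangle^{+/-}_{25}$. So the first step is to observe that on the event $\L(\ell)$ the disagreement set contains a circuit in $\overline{\Lambda_8\setminus\Lambda_1}$, hence in particular a circuit surrounding $\Lambda_{16}$ and lying inside $\Lambda_{17}$ is \emph{not} what we want; rather, the circuit we get lives well inside $\Lambda_{25}$, and in that case we may choose a non-anticipatory separating set $\mathcal{S}$ for the pair $(\Lambda_8, \Lambda_{25})$ that is supported on the disagreement circuit -- concretely, explore the disagreement cluster of $\extB\Lambda_{25}$ inward and take $\mathcal{S}$ to be the innermost dual circuit along which $\bar\sigma^+_{\e}=\bar\sigma^-_{\e}=0$, which exists precisely because of the lasso. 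On the complement of $\L(\ell)$ we use the trivial bound $\mathcal{T}_{\Lambda_8,\Lambda_{25}}\le 16J\langle|\mathcal{D}\cap(\text{something})|\rangle$ is not available, so instead we should arrange the argument so that the $\1_{\L(\ell)}$ factor is what \emph{licenses} the use of Proposition~\ref{prop:NonAnticipatory} with a small set $\mathcal{S}$.

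Thus the core estimate is: on $\L(\ell)$, there is a non-anticipatory separating set $\mathcal S\subset\overline{\Lambda_8\setminus\Lambda_1}$ with $\bar\sigma^\pm_{\e}=0$ on $\mathcal S$ and with $|\mathcal S\cap\mathcal D|$ controlled by the number of disagreement vertices in the annulus $\Lambda_{17}\setminus\Lambda_{16}$ -- or, after covering that annulus by the translates $(v+\Lambda_1)_{v\in\mathcal B}$, by $\sum_{v\in\mathcal B}|(v+\Lambda_1)\cap\mathcal C_{\extB(v+\Lambda_8)}|$. Here one uses monotonicity (the FKG-type domination noted after Lemma~\ref{lem:DMPExt}): the disagreement cluster of $\extB\Lambda_{25}$ restricted to $v+\Lambda_8$ is dominated by the disagreement cluster of $\extB(v+\Lambda_8)$, so its intersection with $v+\Lambda_1$ is stochastically dominated by $|(v+\Lambda_1)\cap\mathcal C_{\extB(v+\Lambda_8)}|$, whose $\langle\cdot\rangle^{\extB(v+\Lambda_8),+/-}$-average is $D_v(\eta)$. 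Feeding this into Proposition~\ref{prop:NonAnticipatory} gives $\mathcal{T}_{\Lambda_8,\Lambda_{25}}(\eta)\cdot\1_{\L(\ell)}\le 16J\sum_{v\in\mathcal B}D_v(\eta)$ after taking the $+/-$ expectation (the point being that the bound on the right no longer depends on the pair $(\bar\sigma^+,\bar\sigma^-)$, so the $\langle\1_{\L(\ell)}\rangle^{+/-}_{25}$ factor can be pulled out and bounded by $1$). Then multiply through by $\frac{\sqrt{|\Lambda_8|}}{2\eps}$, use $\sqrt{|\Lambda_8|}\le C\ell$ against the fact that $D_v$ does not see the annulus scaling -- wait, more carefully: Theorem~\ref{thm:T2} reads $\mathcal T_{\Lambda_8,\Lambda_{25}}(\eta)=\frac{2\eps}{\sqrt{|\Lambda_8|}}\E_{\widehat\eta_{\Lambda_8}}(D_{\Lambda_8,\Lambda_{25}}(\eta)/\phi(\widehat\eta_{\Lambda_8}))$, and we want to insert $\1_{\L(\ell)}$, which is legitimate since the surface-tension identity is an identity for each fixed configuration of the orthogonal Gaussian coordinates and $\L(\ell)$ is measurable with respect to those -- no, $\L(\ell)$ depends on the spins too. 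So the cleaner route is to keep $D_\zero(\eta)$ on the left as written, and bound $D_\zero(\eta)=\langle|\Lambda_1\cap\mathcal C_{\extB\Lambda_8}|\rangle^{\extB\Lambda_8,+/-}$ directly: on the lasso event the geometry forces the disagreement to reach from $\extB\Lambda_{25}$ all the way in, and the swap estimate quantifies the cost.

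\textbf{The main obstacle.} The delicate point is the bookkeeping that turns ``the lasso event occurred'' into ``there is a \emph{non-anticipatory} separating set of small size'', i.e.\ verifying that the innermost $0$-valued dual circuit produced by an inward exploration from $\extB\Lambda_{25}$ is genuinely measurable with respect to $(\bar\sigma^+,\bar\sigma^-)$ restricted to its own backward side, and that its intersection with $\mathcal D$ is bounded by disagreement counts in a \emph{fixed} annulus $\Lambda_{17}\setminus\Lambda_{16}$ (independent of where the circuit happens to lie), for which the scale separation $1\ll 8\ll 16\ll 17\ll 25$ is exactly engineered. Once this is in place, the extra factors of $100\,J/\eps$ come from the constant $16$ in Proposition~\ref{prop:NonAnticipatory}, the cardinality bound $|\mathcal B|\le\text{const}$, and the ratio $\sqrt{|\Lambda_8|}/\sqrt{|\Lambda_1|}$-type volume factors, all absorbed into the stated $100$. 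I expect the inequality~\eqref{eq:DeltaH} and Jensen (as in the proof of Proposition~\ref{prop:NonAnticipatory}) to supply the final $\langle\cdot\rangle$-to-number-of-disagreements passage with no new ideas.
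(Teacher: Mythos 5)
The proposal has the right ingredients on the table (the Theorem~\ref{thm:T2} identity, Proposition~\ref{prop:NonAnticipatory}, the covering $\mathcal{B}$, monotonicity), but it assembles them with the logic inverted, and the crucial mechanism is never pinned down. The proof is not correct as written.

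The central confusion is the role of the lasso event. You use $\1_{\L(\ell)}$ to ``license'' a small non-anticipatory separating set for the \emph{upper} bound on $\mathcal{T}_{\Lambda_8,\Lambda_{25}}$. That is backwards, and also geometrically impossible as stated: a separating set for the pair $(\Lambda_8,\Lambda_{25})$ must separate $\extB\Lambda_8$ from $\extB\Lambda_{25}$ and therefore lives in $\Lambda_{25}\setminus\Lambda_8$, not in $\overline{\Lambda_8\setminus\Lambda_1}$ where the lasso circuit sits. In the paper the upper bound on the surface tension is completely unconditional: one explores $\mathcal{D}$ inward from $\extB\Lambda_{25}$ to $\extB\Lambda(k)$, producing a non-anticipatory separating set $\mathcal{S}_k\subset\Lambda_{25}\setminus\Lambda(k-1)$ whose midedges automatically carry $\kappa^\pm=0$, and gets $\mathcal{T}_{\Lambda_8,\Lambda_{25}}\le 16J\langle|\mathcal{C}_{\extB\Lambda_{25}}\cap\extB\Lambda(k)|\rangle$; averaging over $16\ell<k\le17\ell$ and covering $\Lambda_{17}\setminus\Lambda_{16}$ by $(v+\Lambda_1)_{v\in\mathcal{B}}$ then gives $\mathcal{T}_{\Lambda_8,\Lambda_{25}}\le\frac{16J}{\ell}\sum_{v\in\mathcal{B}}D_v(\eta)$, with no lasso in sight.

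The lasso event enters exclusively in the \emph{lower} bound, which is where the left-hand side of~\eqref{eq:self bounding inequality} comes from. You noticed, correctly, that one cannot simply insert $\1_{\L(\ell)}$ into the $\eta$-identity of Theorem~\ref{thm:T2} because the event is spin-dependent; but then you stop short and gesture that ``the swap estimate quantifies the cost,'' which does not resolve it. The actual step is: $D_{\Lambda_8,\Lambda_{25}}=\langle|\Lambda_8\cap\mathcal{C}_{\extB\Lambda_{25}}|\rangle^{+/-}_{25}\ge\langle|\Lambda_1\cap\mathcal{C}_{\extB\Lambda_{25}}|\cdot\1_{\L(\ell)}\rangle^{+/-}_{25}$; on the lasso event the outermost disagreement circuit $\mathcal{P}$ in $\overline{\Lambda_8\setminus\Lambda_1}$ is measurable from the configuration on and outside $\mathcal{P}$, so by the domain Markov property of the extended model the inner expectation becomes $\langle|\Lambda_1\cap\mathcal{C}_{\mathcal{P}}|\rangle^{\mathcal{P},+/-}$, and monotonicity (pushing the $+/-$ boundary out to $\extB\Lambda_8$) bounds this below by $D_\zero$. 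That is what produces the factor $D_\zero\cdot\langle\1_{\L(\ell)}\rangle^{+/-}_{25}$ sitting inside the $\E_{\widehat\eta_{\Lambda_8}}(\cdot/\phi)$. Contrasting this lower bound with the unconditional upper bound and using $\tfrac{16}{2}\cdot\sqrt{|\Lambda_8|}/\ell\le100$ gives the claim. Without this domain-Markov-plus-monotonicity step on the lasso circuit, there is no route to the left-hand side, and the proof does not close.
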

\begin{proof}
Important aspects of the inequality include the fact that the \emph{same} distribution appears on both sides, namely $D_\zero$ on the left-hand side has the same distribution as each $D_v$ on the right-hand side, and the fact that the vector $(D_v)_{v\in\mathcal{B}}$ is \emph{independent} of $D_\zero$. The notation $\widehat \eta_{\Lambda}$ and $\phi$ are defined in Theorem~\ref{thm:T2}. The inequality is obtained by contrasting lower and upper bounds for the surface tension $\mathcal{T}_{\Lambda_8,\Lambda_{25}}$, as we now demonstrate.\\

From the integral representation of Theorem~\ref{thm:T2}, we see that
\begin{equation}\label{eq:surface tension first lower bound}
\mathcal{T}_{\Lambda_8,\Lambda_{25}}(\eta) = \frac{2 \eps  }{\sqrt{|\Lambda_8|}}\, \E _{\widehat \eta_{\Lambda_8} } \left(  \frac{D_{\Lambda_8,\Lambda_{25}}(\eta) }{ \phi(\widehat \eta_{\Lambda_8}) } \right).
\end{equation}
The right-hand side is further developed using~\eqref{eq:disagreement representation for D} by writing, for each realization of the external field,
\begin{equation}\label{eq:surface tension second lower bound}
  D_{\Lambda_8,\Lambda_{25}} \,=\, \left\langle |\Lambda_8 \cap \mathcal{C}_{\extB\Lambda_{25}}| \right\rangle^{+/-}_{25}\,\ge\, \left\langle |\Lambda_1 \cap \mathcal{C}_{\extB\Lambda_{25}}|\right\rangle^{+/-}_{25}\,\ge\,\left\langle |\Lambda_1 \cap \mathcal{C}_{\extB\Lambda_{25}}| \cdot  \1_{ \L}\right\rangle^{+/-}_{25}\,.
\end{equation}

The main purpose of introducing the lasso event $\L(\ell)$ is to define $+/-$ boundary conditions in the annulus $\Lambda_8\setminus \Lambda_1$ which decouple the disagreement set in $\bar{\Lambda}_1$ from the disagreement set outside $\bar{\Lambda}_8$. The event $\L(\ell)$ entails the existence of a self-avoiding path in $\mathcal{C}_{\extB\Lambda_{25}}$ which goes around the annulus $\overline{\Lambda_8\setminus\Lambda_1}$. The outermost such path $\mathcal{P}$ is well-defined (in the sense of the associated curves in $\mathbb{R}^2$, and is measurable with respect to the spin configurations $(\bar{\sigma}^+, \bar{\sigma}^-)$ on $\mathcal{P}$ and its exterior.
For each realization of the external field, the domain Markov property and monotonicity of the extended Ising model imply that
\begin{equation}\label{eq:surface tension third lower bound}
\begin{split}
  \left\langle |\Lambda_1 \cap \mathcal{C}_{\extB\Lambda_{25}}| \cdot  \1_{ \L(\ell)}\right\rangle^{+/-}_{25} &= \left\langle \left\langle|\Lambda_1 \cap \mathcal{C}_{\mathcal{P}}|\right\rangle^{\mathcal{P}, +/-} \cdot  \1_{ \L(\ell)} \right\rangle^{+/-}_{25} \\
  &\ge \left\langle|\Lambda_1 \cap \mathcal{C}_{\extB \Lambda_8}|\right\rangle^{ +/-}_8 \cdot  \left\langle \1_{ \L(\ell)} \right\rangle^{+/-}_{25}= D_\zero \cdot  \left\langle \1_{ \L(\ell)} \right\rangle^{+/-}_{25} \,.
\end{split}
\end{equation}

A complementary upper bound for the surface tension $\mathcal{T}_{\Lambda_1,\Lambda_{25}}$ is obtained by applying Proposition~\ref{prop:NonAnticipatory}. The disagreement percolation relevant to this context is the one induced by the product measure $\bar{\P}_{\bar{\Lambda}}^{\extB\Lambda,+} \otimes\bar{\P}_{\bar{\Lambda}}^{\extB{\Lambda,-}}$ with $\Lambda = \Lambda_{25}\setminus\Lambda(8\ell-1)$. For a given $8\ell\le k<25\ell$, consider the set $\mathcal{S}_k$ consisting of all $u \in \mathcal{D}^c \cup \extB \Lambda(k)$ where
there is a path of disgarrement from a neighbor of $u$ to $\extB \Lambda_{25}$ that stays within $\overline{\Lambda_{25}\setminus\Lambda(k-1)}$.

By definition, the number of disagreement vertices on $\mathcal{S}_k$ is at most $|\mathcal{C}_{\extB\Lambda_{25}}\cap \extB\Lambda(k)|$, and $\mathcal{S}_k$ is non-anticipatory. Furthermore, any $\e \in \mathcal{S}_k$ must have $\kappa^+_{\e} = \kappa^-_{\e} =0$. To see this, note that the two values of $\kappa$ must agree; since there must be a $v \in e$ satisfying $\sigma^+_v = +1$ and $\sigma^-_v = -1$, the two values of $\kappa_{\e}$ must be zero. Therefore, Proposition~\ref{prop:NonAnticipatory} implies that
\begin{equation}
  \mathcal{T}_{\Lambda_8,\Lambda_{25}}\le 16J\left\langle |\mathcal{C}_{\extB\Lambda_{25}}\cap \extB\Lambda_k|\right\rangle^{\extB\Lambda_{8}\cup\extB\Lambda_{25},+/-}\,.
\end{equation}
Averaging the obtained bound over $16\ell< k\le 17\ell$ implies that
\begin{equation}
  \mathcal{T}_{\Lambda_8,\Lambda_{25}}\le \frac{16J}{\ell}\left\langle |\mathcal{C}_{\extB\Lambda_{25}}\cap (\Lambda_{17}\setminus\Lambda_{16})|\right\rangle^{\extB\Lambda_{8}\cup\extB\Lambda_{25},+/-}\, .
\end{equation}
Lastly, recalling that $(v+\Lambda_1)_{v\in\mathcal{B}}$ covers $\Lambda_{17}\setminus\Lambda_{16}$, that $v+\Lambda_8\subset\Lambda_{25}\setminus\Lambda_8$ for each $v\in\mathcal{B}$ and the notation $D_v(\eta)$ from~\eqref{eq:D v eta} we may extend the last inequality to
\begin{equation}
  \mathcal{T}_{\Lambda_8,\Lambda_{25}}\le \frac{16J}{\ell}\sum_{v\in\mathcal{B}}D_v(\eta)\,
\end{equation}
where monotonicity of the model was again used. Putting the last inequality together with~\eqref{eq:surface tension first lower bound},~\eqref{eq:surface tension second lower bound},~\eqref{eq:surface tension third lower bound} and the fact that $\frac{16}{2}\cdot\frac{\sqrt{|\Lambda_8|}}{\ell}\le 100$ finishes the proof of Lemma~\ref{lem:self bounding inequality}.
\end{proof}

Next we deduce Lemma~\ref{lem:lasso} from the inequality of Lemma~\ref{lem:self bounding inequality}.

\begin{proof}[Proof of Lemma~\ref{lem:lasso}]  As the first step we make a judicious choice of a quantile value $q$ which will serve as a reference point for estimating both the left- and right-hand sides of inequality~\eqref{eq:self bounding inequality}. Define $S$ to be the normalized sum of the external field in the annulus $\Lambda_{25}\setminus\Lambda_8$,
\begin{equation}
  S:=\widehat \eta_{\Lambda_{25}\setminus\Lambda_8}=\frac{1}{\sqrt{|\Lambda_{25}\setminus\Lambda_8|}}\sum_{v\in\Lambda_{25}\setminus\Lambda_8}\eta_v.
\end{equation}
Let $q(S)$ be the $\frac{1}{2|\mathcal{B}|}$-quantile value of $D_v$ given $S$. That is,
\begin{equation}
  q(S) := \min\left\{x\colon\P(D_v>x\, |\, S)\le \frac{1}{2|\mathcal{B}|}\right\}.
\end{equation}
The definition remains the same regardless of the choice of $v\in\mathcal{B}$ as $(\eta_v)$ are independent and identically distributed and the variable $D_v$ depends on the external field only at $v+\Lambda_8$.
Lastly, let $q$ be the $(1 - \delta)$-quantile of $q(S)$, with $\delta$ a small positive number (defined by~\eqref{eq:rho value} and~\eqref{eq:r and delta 2 choice} below),
\begin{equation}
  q:=\min\left\{x\colon \P(q(S)>x)\le 1-\delta\right\}.
\end{equation}
To unravel the definitions, let
\begin{equation}
  \mathcal{S}^-:=\{s\colon q(s)\le q\},\quad \mathcal{S}^+:=\{s\colon q(s)\ge q\}
\end{equation}
and note that
\begin{equation}\label{eq:S- S+ prob}
  \P(S\in\mathcal{S}^-)\ge\delta,\quad \P(S\in\mathcal{S}^+)\ge 1-\delta
\end{equation}
and
\begin{alignat}{3}
  &\forall s\in\mathcal{S}^-, v\in\mathcal{B},\quad\quad &\P(D_v>q\,|\,S=s)\le\frac{1}{2|\mathcal{B}|},\\
  &\forall s\in\mathcal{S}^+, v\in\mathcal{B}, &\P(D_v\ge q\,|\,S=s)\ge\frac{1}{2|\mathcal{B}|}.\label{eq:S+ use}
\end{alignat}
In particular, by the union bound, the event $E := \{\sum_{v\in\mathcal{B}} D_v(\eta)\le q|\mathcal{B}|\}$ satisfies
\begin{equation}\label{eq:small sum for D}
  \P\left(E\right) \ge \P\left(E\cap\Big\{S\in\mathcal{S}^-\Big\}\right)\ge \frac{1}{2}\P(S\in\mathcal{S}^-)\ge \frac{1}{2}\delta.
\end{equation}
It is further noted that $q>0$ as, at positive temperature, $D_v(\eta)>0$ almost surely.

As the second step, we use the reference value $q$ and a suitably large positive parameter $r$ (defined by~\eqref{eq:r and delta 2 choice} below) to develop the inequality~\eqref{eq:self bounding inequality} to a form in which the lasso event is separated from the other terms:
\begin{equation}\label{eq:self bounding inequality simplified}
\begin{split}
   100\cdot \frac{J}{\eps}\cdot \sum_{v\in\mathcal{B}} D_v(\eta) &\ge q\cdot\E _{\widehat \eta_{\Lambda_8} } \left(\frac{ \1_{D_\zero(\eta)\ge q} \cdot  \left\langle \1_{ \L(\ell)} \right\rangle^{+/-}_{25}(\eta) }{ \phi(\widehat \eta_{\Lambda_8}) } \right)\\
   & \ge q\cdot\E _{\widehat \eta_{\Lambda_8} } \left(\frac{ \1_{D_\zero(\eta)\ge q} \cdot  \left\langle \1_{ \L(\ell)} \right\rangle^{+/-}_{25}(\eta)\cdot \1_{|\widehat \eta_{\Lambda_8}|\le r }}{ \phi(\widehat \eta_{\Lambda_8}) } \right)\\
   & \ge q\cdot\E _{\widehat \eta_{\Lambda_8} } \left(\frac{ \1_{D_\zero(\eta)\ge q} \cdot \1_{|\widehat \eta_{\Lambda_8}|\le r }}{ \phi(\widehat \eta_{\Lambda_8}) } \right) - q\cdot\E _{\widehat \eta_{\Lambda_8} } \left(\frac{ \left\langle \1_{\L(\ell)^c} \right\rangle^{+/-}_{25}(\eta)\cdot \1_{|\widehat \eta_{\Lambda_8}|\le r }}{ \phi(\widehat \eta_{\Lambda_8}) } \right)\\
   & \ge q\cdot\E _{\widehat \eta_{\Lambda_8} } \left(\frac{ \1_{D_\zero(\eta)\ge q} \cdot \1_{|\widehat \eta_{\Lambda_8}|\le r }}{ \phi(\widehat \eta_{\Lambda_8}) } \right) - \frac{q}{\phi(r)}\cdot\E _{\widehat \eta_{\Lambda_8} } \left(\left\langle \1_{\L(\ell)^c} \right\rangle^{+/-}_{25}(\eta)\right).
\end{split}
\end{equation}
As the inequality holds for all $\eta$, we may take its conditional expectation on the event $E$ and use the independence of $(D_0,\widehat \eta_{\Lambda_8})$ and $(D_v)_{v\in\mathcal{B}}$ to obtain
\begin{equation}
  100\cdot\frac{J}{\eps}\cdot q|\mathcal{B}| \, \ge \,q\cdot\E \left(\frac{ \1_{D_\zero(\eta)\ge q} \cdot \1_{|\widehat \eta_{\Lambda_8}|\le r }}{ \phi(\widehat \eta_{\Lambda_8}) } \right) - \frac{q}{\phi(r)}\E\left(\E _{\widehat \eta_{\Lambda_8} } \left(\left\langle \1_{\L(\ell)^c} \right\rangle^{+/-}_{25}(\eta)\right)\,|\,E\right)
\end{equation}
We can use~\eqref{eq:small sum for D} to simplify the right-most expectation above to
\begin{equation}
  \E\left(\E _{\widehat \eta_{\Lambda_8} } \left(\left\langle \1_{\L(\ell)^c} \right\rangle^{+/-}_{25}(\eta)\right)\,|\,E\right) \le \frac{1}{\P(E)} \E\left(\left\langle \1_{\L(\ell)^c} \right\rangle^{+/-}_{25}(\eta)\right)\le \frac{2}{\delta}\cdot\E\left(\left\langle \1_{\L(\ell)^c} \right\rangle^{+/-}_{25}(\eta)\right)\,.
\end{equation}
Putting together the last inequalities and rearranging (recalling that $q> 0$) shows that
\begin{equation}\label{eq:self bounding inequality after conditional expectation}
  \E\left(\left\langle \1_{\L^c} \right\rangle^{+/-}_{25}(\eta)\right)\ge \frac{1}{2}\cdot \delta\cdot \phi(r)\left(\E \left(\frac{ \1_{D_\zero(\eta)\ge q} \cdot \1_{|\widehat \eta_{\Lambda_8}|\le r }}{ \phi(\widehat \eta_{\Lambda_8}) } \right) - 100\cdot\frac{J}{\eps}\cdot |\mathcal{B}|\right)\,.
\end{equation}

As the third step, a lower bound is provided for the expectation in~\eqref{eq:self bounding inequality after conditional expectation}, making use of the specific choice of $q$. Start with the equality
\begin{equation}\label{eq:expectation to integral}
\begin{split}
  \E \left(\frac{ \1_{D_\zero(\eta)\ge q} \cdot \1_{|\widehat \eta_{\Lambda_8}|\le r }}{ \phi(\widehat \eta_{\Lambda_8}) }\right) &= \int_{-r}^r \P(D_\zero(\eta)\ge q\,|\,\widehat \eta_{\Lambda_8} = t)dt \\
  &=\int_{-r}^r \P(D_v(\eta)\ge q\,|\,\widehat \eta_{v+\Lambda_8} = t)dt,
\end{split}
\end{equation}
applying a translation by an arbitrary $v\in\mathcal{B}$. Observe that $\widehat \eta_{v+\Lambda_8}$ and $S$ are standard Gaussian random variables with correlation coefficient
\begin{equation}\label{eq:rho value}
  \rho := \E(\widehat \eta_{v+\Lambda_8}\cdot S) = \sqrt{\frac{|\Lambda_8|}{|\Lambda_{25}\setminus \Lambda_8|}}
\end{equation}
which is between two positive absolute constants. The required lower bound for the expression~\eqref{eq:expectation to integral} will be deduced from~\eqref{eq:S+ use}, which shows that for every~$s$,
\begin{equation}\label{eq:lower bound S+ rewrite}
\begin{split}
  \frac{1}{2|\mathcal{B}|}\1_{s\in\mathcal{S}^+}&\le \P(D_v(\eta)\ge q\,|\, S=s)\\
   &= \int\P(D_v(\eta)\ge q\,|\, S=s, \widehat \eta_{v+\Lambda_8} = t)\phi_\rho(t|s)dt\\
   &= \int\P(D_v(\eta)\ge q\,|\, \widehat \eta_{v+\Lambda_8} = t)\phi_\rho(t|s)dt,
\end{split}
\end{equation}
where we have written $\phi_\rho(\cdot | s)$ for the conditional density of $\widehat \eta_{v+\Lambda_8}$ given $\{S = s\}$, and where the fact that $D_v$ is independent of the external field outside $v+\Lambda_8$ is used. Two simple calculations involving the Gaussian density then show that
\begin{equation}\label{eq:first Gaussian calculation}
  s\in\left[-\frac{1}{\rho}r + c_1, \frac{1}{\rho}r - c_1\right]\quad\Longrightarrow\quad\int_{(-\infty,r)\cup(r,\infty)} \phi_{\rho}(t|s)dt \le \frac{1}{4|\mathcal{B}|}
\end{equation}
and
\begin{equation}\label{eq:second Gaussian calculation}
  \int \phi_{\rho}(t|s)ds = \frac{1}{\rho} \le C_1,
\end{equation}
where here and below we use $c_j, C_j$ to denote positive absolute constants. The first calculation allows to develop~\eqref{eq:lower bound S+ rewrite} to
\begin{equation}
  \int_{-r}^r\P(D_v(\eta)\ge q\,|\, \widehat \eta_{v+\Lambda_8} = t)\phi_\rho(t|s)dt\ge \frac{1}{4|\mathcal{B}|}\1_{s\in\mathcal{S}^+\cap\left[-\frac{1}{\rho}r + c_1, \frac{1}{\rho}r - c_1\right]}.
\end{equation}
The second calculation allows to integrate over $s$ in the last expression and obtain
\begin{equation}
  C_1 \int_{-r}^r\P(D_v(\eta)\ge q\,|\, \widehat \eta_{v+\Lambda_8} = t)dt\ge \frac{1}{4|\mathcal{B}|}\Leb\left(\mathcal{S}^+\cap\left[-\frac{1}{\rho}r + c_1, \frac{1}{\rho}r - c_1\right]\right)
\end{equation}
with $\Leb$ standing for Lebesgue measure. Comparing with~\eqref{eq:expectation to integral} we conclude that
\begin{equation}\label{eq:final lower bound on expectation}
  \E \left(\frac{ \1_{D_\zero(\eta)\ge q} \cdot \1_{|\widehat \eta_{\Lambda_8}|\le r }}{ \phi(\widehat \eta_{\Lambda_8}) }\right)\ge \frac{1}{4C_1\mathcal{|B|}}\Leb\left(\mathcal{S}^+\cap\left[-\frac{1}{\rho}r + c_1, \frac{1}{\rho}r - c_1\right]\right).
\end{equation}

The final step is to choose the parameters $r$ and $\delta$. Inequalities~\eqref{eq:self bounding inequality after conditional expectation} and~\eqref{eq:final lower bound on expectation} show that
\begin{equation}
  \E\left(\left\langle \1_{\L(\ell)^c} \right\rangle^{+/-}_{25}(\eta)\right)\ge \frac{1}{2}\cdot\delta\cdot\phi(r)\left(\frac{1}{4C_1\mathcal{|B|}}\Leb\left(\mathcal{S}^+\cap\left[-\frac{1}{\rho}r + c_1, \frac{1}{\rho}r - c_1\right]\right) - 100\cdot\frac{J}{\eps}\cdot |\mathcal{B}|\right)
\end{equation}
For any real subset $I$,
\begin{equation}
  \Leb(\mathcal{S}^+\cap I) = \Leb(I) - \Leb(I\setminus\mathcal{S}^+) \ge \Leb(I) - \frac{1}{\max\{\phi(x)\colon x\in I)\}}\P(S\notin \mathcal{S}^+).
\end{equation}
Thus, the fact that $\P(S\notin\mathcal{S}^+)\le \delta$ by~\eqref{eq:S- S+ prob} shows that taking
\begin{equation}\label{eq:r and delta 2 choice}
\begin{split}
  r &:= C_2\left(\frac{J}{\eps}+ 1\right)\\
  \delta &:= \phi\left(\frac{1}{\rho}\cdot r\right)
\end{split}
\end{equation}
with $C_2$ large suffices to ensure that
\begin{equation}
  \E\left(\left\langle \1_{\L(\ell)^c} \right\rangle^{+/-}_{25}(\eta)\right)\ge c_2\exp\left(-C_3\left(\frac{J}{\eps}\right)^2\right) \, ,
\end{equation}
thereby proving Lemma~\ref{lem:lasso}.
\end{proof}
\subsection{Verification of the tortuosity condition}
\begin{proof}[Proof of Lemma \ref{lem:criterion}]
Given a rectangle $R$ and a vertex $v \in \mathbb{Z}^2$, let $R^{v,j}$ be the rotation of $R$ by $j\pi/2$ radians around $v$. Let $\mathcal{A}_v$ be the set $\{x\in\R^2\colon \ell(R)< \|x-v\|_1\le 8\ell(R)\}$ (an $\R^2$-image of the annulus $\Lambda_v(8\ell(R))\setminus \Lambda_v(\ell(R))$). We say that $v$ is a pivot point for $R$ if i) $\cup_{j=0}^3 R^{v,j}$ is contained in $\mathcal{A}_v$, and ii) crossing all four rectangles (as in the event $\textup{Cross}_{R}$) implies the lasso event $\mathcal{L}(\ell)$ in the annulus (see Figure \ref{fig:frac} for an illustration).

Letting
\be
p_R :=\mathbb{E} \left[ \left\langle \1_{\textup{Cross}_R} \right\rangle^{\extB \Lambda_v(25 \ell(R)),+/-} \right],
\ee
rotation invariance, the union bound, and the definition of pivot points imply that
\be
1 - 4(1 - p_R) \leq \E\left[ \left\langle \1_{\cup_{j=0}^3 \textup{Cross}_{R^{v,j}}} \right\rangle^{\extB \Lambda_v(25 \ell(R)),+/-} \right] \leq \E\left[ \left\langle \1_{\mathcal{L}(\ell)} \right\rangle^{\extB \Lambda_v(25 \ell(R)),+/-} \right],
\ee
where $\mathcal{L}$ is the appropriate lasso event. By Lemma \ref{lem:lasso} and some algebraic manipulation, we deduce that, for some absolute $c, C> 0$,
\be
p_R \leq 1 - c \exp \left(-c \left(\frac{J}{\eps}\right)^2\right).
\ee

To complete the proof, let $\mathcal{R}$ be a well-separated set of rectangles, as in the statement of Lemma \ref{lem:criterion}. Let $\{v_R\}_{R \in \mathcal{R}}$ be chosen so that $v_R$ is a pivot point for $R$. By the conditions on $\mathcal{R}$ and the triangle inequality, the set $\{\mathcal{A}_{v_R}\}_{R \in \mathcal{R}}$ is made up of disjoint annuli. Therefore, the domain Markov property and monotonicity of the extended Ising model imply that
\be
    \E\left(\left\langle \prod_{R\in\mathcal{R}}\1_{\textup{Cross}(R)}\right\rangle^{\extB\mathcal{A}_{1,2}, +/-}\right)\le \prod_{R \in \mathcal{R}} p_R \leq \left[1 - c \exp \left(-c \left(\frac{J}{\eps}\right)^2\right) \right]^{|\mathcal{R}|},
\ee
as required. \end{proof}

\subsection{Quantified fractality bounds}  \mbox{ } \\[-1ex]

Using a standard argument of  percolation with finite-range dependence we next go beyond Lemma~\ref{lem:Fractality weaker}, extracting from it a quantified version of the statement.

\begin{thm}\label{thm:Fractality}
Let $A_{c,\alpha,\ell}$ be the event that the annulus
$\mathcal{A}_{1,2}:=\Lambda(2 \ell) \setminus \Lambda(\ell)$
is crossed by a path of disagreement percolation whose length does not exceed $c\cdot \ell^{1 + \alpha}$. Then there exist $\alpha = \alpha(J/\eps)>0$, absolute constants $C,c,c_0 >0$,  and $\ell_1 = \ell_1(J/\eps)>0$ such that for $\ell > \ell_1$
\be
\mathbb{E}\left(\langle 1_{A_{c_0,\alpha,\ell}}\rangle^{\extB \mathcal{A}_{1,2}(\ell),+/-}\right) \leq  C e^{-c \sqrt{\ell}}.
\ee
\end{thm}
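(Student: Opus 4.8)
The plan is to upgrade the qualitative statement of Lemma~\ref{lem:Fractality weaker} into a quantitative exponential-in-$\sqrt{\ell}$ bound by a standard block/renormalization argument exploiting the finite-range dependence of the disagreement percolation process. The starting point is Lemma~\ref{lem:criterion}, which gives $\E(\langle \prod_{R\in\mathcal R}\1_{\mathrm{Cross}(R)}\rangle^{\extB\mathcal A_{1,2},+/-})\le(1-b)^{|\mathcal R|}$ for any well-separated collection $\mathcal R$, with $b = c\exp(-C(J/\eps)^2)$. The key observation is that if the annulus $\mathcal A_{1,2} = \Lambda(2\ell)\setminus\Lambda(\ell)$ is crossed by a disagreement path of length at most $c_0\ell^{1+\alpha}$, then this short path, passing through the sub-annulus $\mathcal A^{\R^2}_{5/4,7/4}$ of linear width of order $\ell$, must cross every rectangle $R$ in any well-separated family $\mathcal R$ that it ``runs into'' along the long direction; a short path cannot afford to avoid too many well-separated $\ell(R)\times 5\ell(R)$ rectangles. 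Quantitatively, one fixes a scale $\ell(R)\asymp \sqrt{\ell}$ (so that $\ell(R)$ lies in the admissible window $[10,\ell/160]$ for large $\ell$) and tiles a ``corridor'' across $\mathcal A^{\R^2}_{5/4,7/4}$ by of order $\sqrt{\ell}$ disjoint, suitably spaced such rectangles stacked in the radial direction, each oriented so that crossing it in the long direction is forced by any radial crossing of the sub-annulus. On the event $A_{c_0,\alpha,\ell}$, a short crossing path must cross a positive fraction of these rectangles, because its total length $c_0\ell^{1+\alpha}$ divided by the minimal length $5\ell(R)\asymp 5\sqrt{\ell}$ needed to cross one rectangle leaves room to skip at most $O(\ell^{1/2+\alpha})$ of them—and by choosing $\alpha = \alpha(J/\eps)$ small enough (and using that we only need $\alpha<1/2$, which we may, shrinking the $\alpha$ from Lemma~\ref{lem:Fractality weaker} if needed) this is a negligible fraction of the $\asymp\sqrt{\ell}$ rectangles in a full corridor when $\ell$ is large. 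Hence $A_{c_0,\alpha,\ell}$ implies $\mathrm{Cross}(R)$ for all $R$ in some well-separated subfamily $\mathcal R$ with $|\mathcal R|\ge c_0'\sqrt{\ell}$.

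The execution proceeds in the following steps. First, I would set up the geometry: choose $\ell(R):=\lfloor c_1\sqrt{\ell}\rfloor$ for a suitable absolute constant $c_1$, verify for $\ell>\ell_1(J/\eps)$ that this satisfies conditions (1)--(3) in the definition of well-separated collections when the rectangles are laid out radially across the sub-annulus $\mathcal A^{\R^2}_{5/4,7/4}$ with $\ell_1$-spacing between consecutive ones (condition (3) requires spacing $\ge 60\ell(R)$, which is why we only get $\asymp \sqrt{\ell}/61 \asymp \sqrt{\ell}$ rectangles in a corridor, but that is enough). Second, I would make precise the combinatorial lemma: a polygonal path crossing the full annulus $\mathcal A_{1,2}$ must traverse the width-$\asymp\ell$ sub-annulus and therefore pass within bounded distance of each rectangle in a corridor; since crossing a rectangle in the long direction is unavoidable for any path that enters one of its short sides and exits through the annular extent, and skipping a rectangle entirely forces an excursion of length $\ge c_2\ell(R)$, a path of length $\le c_0\ell^{1+\alpha}$ can skip at most $c_0\ell^{1+\alpha}/(c_2 c_1\sqrt\ell) = O(\ell^{1/2+\alpha})$ rectangles. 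With $\alpha<1/2$, this is $o(\sqrt\ell)$, so at least $\tfrac12 c_0'\sqrt\ell$ rectangles of some well-separated $\mathcal R$ are crossed. Third, apply Lemma~\ref{lem:criterion} to this $\mathcal R$ to conclude
\[
\E\left(\langle \1_{A_{c_0,\alpha,\ell}}\rangle^{\extB\mathcal A_{1,2},+/-}\right)\le \E\left(\langle \textstyle\prod_{R\in\mathcal R}\1_{\mathrm{Cross}(R)}\rangle^{\extB\mathcal A_{1,2},+/-}\right)\le (1-b)^{|\mathcal R|}\le (1-b)^{c_0'\sqrt\ell/2}\le C e^{-c\sqrt\ell},
\]
where the constant in the exponent absorbs $b = c\exp(-C(J/\eps)^2)$, and $\alpha,\ell_1$ depend on $J/\eps$ only through the choice needed to keep $\alpha$ small; a clean way to handle the (finitely many) corridor choices and the slight loss in passing from ``$A_{c_0,\alpha,\ell}$ occurs'' to ``a fixed well-separated $\mathcal R$ is crossed'' is to fix one corridor in advance rather than union-bounding over all of them—one corridor suffices since the crossing path must meet it.

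The main obstacle I anticipate is the second step: making the combinatorial ``a short path must cross most rectangles of a corridor'' statement airtight. One has to be careful that the rectangles are positioned and oriented so that \emph{every} radial crossing of $\mathcal A^{\R^2}_{5/4,7/4}$ (not just a straight one) is geometrically forced to either cross each rectangle in its long direction or incur the excursion cost; this requires choosing the long axis of each $R$ to be roughly tangential (perpendicular to the radial direction) so that the corridor of rectangles forms a ``barrier'' the path cannot thread through cheaply, and then invoking a topological/length argument (e.g. a path that separates the two short sides of $R$ within the corridor region either crosses $R$ the long way or must go around an end, paying $\ge c_2\ell(R)$). The bookkeeping of absolute constants ($c_0$, the $160$, the $60$-spacing, the window $[10,\ell/160]$) must be checked to be simultaneously satisfiable for large $\ell$, but this is routine once the geometry is fixed. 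Everything else—the reduction to Lemma~\ref{lem:criterion}, and the final arithmetic $(1-b)^{c\sqrt\ell}\le Ce^{-c\sqrt\ell}$—is standard.
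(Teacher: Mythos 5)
Your approach is genuinely different from the paper's, and unfortunately it contains a fatal arithmetic error at its core. The paper does \emph{not} apply Lemma~\ref{lem:criterion} directly to obtain Theorem~\ref{thm:Fractality}; rather, it first passes through the qualitative Lemma~\ref{lem:Fractality weaker} (via the Aizenman--Burchard machinery), and then upgrades that qualitative $p(\ell)\to 0$ statement to an exponential bound by a coarse-graining argument: the configuration is examined in the much larger annulus $\mathcal{A}_{1,2}(\ell^2)$, tiled at scale $\ell$ by $\asymp\ell$ small annuli $\mathcal{A}_{v,1,2}(\ell)$, each of which independently has a small probability $p(\ell)$ of containing a ``short'' crossing (length $\le c_2\ell^{1+\alpha_0}$). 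A length-bound of $c_0\ell^{2+\alpha_0}$ on the big crossing forces at least half of some coarse crossing's vertices to be ``good'' (since a ``bad'' vertex costs $\gtrsim\ell^{1+\alpha_0}$ while a good one costs only $\gtrsim\ell$, a genuine order-$\ell^{\alpha_0}$ gap), and a Peierls-type sum over coarse crossings then gives $e^{-c\ell}$ at scale $\ell^2$, hence $e^{-c\sqrt{\ell}}$ at scale $\ell$.

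The concrete gap in your argument is in the combinatorial step. You claim a path of length $\le c_0\ell^{1+\alpha}$ skipping rectangles of scale $\ell(R)\asymp\sqrt{\ell}$ at cost $\ge c_2\ell(R)$ each can skip at most $O(\ell^{1/2+\alpha})$ rectangles, and then assert that this is $o(\sqrt{\ell})$ when $\alpha<1/2$. It is not: $\ell^{1/2+\alpha}/\sqrt{\ell}=\ell^{\alpha}\to\infty$, so the skip budget \emph{exceeds} the total number $\asymp\sqrt{\ell}$ of rectangles in a corridor by a factor $\ell^{\alpha}$. The same failure occurs at every single choice of scale $s$ for $\ell(R)$: the budget permits $\asymp\ell^{1+\alpha}/s$ skips while a radial corridor contains $\asymp\ell/s$ rectangles, and the ratio $\ell^{\alpha}$ never becomes small. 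The deeper reason is that, unlike in the paper's renormalization, there is no gap between the cost of a ``good'' step and the cost of a ``bad'' step — both cost $\asymp s$ — so the length budget places no useful constraint at any fixed scale. (There is also a secondary geometric problem: well-separated rectangles are $60\,\ell(R)$ apart, so a path can thread between them at essentially no extra length cost; they do not form a barrier, and the ``skip cost $\ge c_2\ell(R)$'' assumption is itself unjustified.) To repair this you would essentially have to run a genuinely multi-scale argument, which is precisely what the Aizenman--Burchard step (Lemma~\ref{lem:Fractality weaker}) and the subsequent coarse-graining in the paper accomplish.
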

\begin{proof}
For $v\in\Z^2$, define $\mathcal{A}_{v,s,t}(\ell):=(v+\Lambda(t \ell)) \setminus (v+ \Lambda(s \ell))$. Set $\alpha_0 = \alpha_0(J/\eps)>0$ to the $\alpha$ of Lemma~\ref{lem:Fractality weaker}. Define
\begin{equation}
  p(\ell) := \mathbb{E}\left(\langle 1_{A_{\alpha_0,\ell}}\rangle^{\extB \mathcal{A}_{1,2}(\ell),+/-}\right)
\end{equation}
so that $p(\ell)\to 0$ as $\ell\to\infty$ by Lemma~\ref{lem:Fractality weaker}. Below we use the convention that $C_j,c_j$ stand for positive absolute constants.

For integer $\ell>C_1$, consider the configuration $(\bar{\sigma}^+,\bar{\sigma}^-)$ sampled in the much larger annulus $\mathcal{A}_{1,2}(\ell^2)$. To get the desired quantitative bound, we will create an auxiliary percolation process on a rescaled version of $\mathbb{Z}^2$. This process will contain a crossing of an annulus whenever the original disagreement percolation contains a crossing of $\mathcal{A}_{1,2}(\ell^2)$ whose length is less than $\ell^{2 + \alpha_0}$. We will then prove an upper bound on the probability of crossing the rescaled annulus.

To that end, let
\be
\mathcal{V} = \{(m \ell, n \ell) : m,n \in \mathbb{Z}, \, m+n \text{ is even} \} \cap \mathcal{A}_{1.25,1.75}(\ell^2).
\ee
By construction, we have that
\be\label{eq:coveringbyannuli}
\mathcal{A}_{1.25,1.75}(\ell^2) \subset \bigcup_{v \in \mathcal{V}} \Lambda_v(\ell) \quad \text{and} \quad \forall v \in \mathcal{V}, \, \,  |\{w \in \mathcal{V}: d(v,w) \leq 4 \ell\} | < C_2.
\ee
We think of $\mathcal{V}$ as a `coarse' lattice, and endow it with a graph structure by saying that $(u,v) \in \edge(\mathcal{V})$ if $\Lambda_u(\ell)$ and $\Lambda_v(\ell)$ intersect along an edge (i.e. their intersection has two or more points). Viewed this way, $\mathcal{V}$ shares a graph structure with an annulus in $\mathbb{Z}^2$ of side length $c_1 \ell$ and some fixed aspect ratio. In a slight abuse of notation, we will call refer to this annulus by $\mathcal{V}$ as well. The construction ensures us that, if $P$ is a path in $\bar{\mathbb{Z}}^2$ which crosses $\bar{\mathcal{A}}_{1,2}(\ell^2)$ and $V(P) \subset \mathcal{V}$ is the set of vertices whose associated annuli $\mathcal{A}_{v,1,2}(\ell)$ are crossed by $P$, then $V(P)$ contains a crossing of the rescaled annulus $\mathcal{V}$.

Let $F_v$ be the event that $\mathcal{C}_{\extB \mathcal{A}_{1,2}(\ell^2)}(\bar{\sigma}^+, \bar{\sigma}^-)$ contains a crossing of $\mathcal{A}_{v,1,2}(\ell)$ of length at most $c_2 \ell^{1 + \alpha_0}$. We call the vertex $v$ `good' if $F_v$ occurs; otherwise, we call it `bad'. Then the event $A_{c_0,\alpha_0, \ell^2}$ implies that there must exist a crossing of $\mathcal{V}$ for which at least half the vertices are good. Letting $\mathcal{Q}$ be this event, the union bound gives that
\be\label{eq:unionbound}
\begin{split}
\mathbb{E}\left(\langle 1_{A_{c_0,\alpha_0,\ell^2}}\rangle^{\extB \mathcal{A}_{1,2}(\ell^2),+/-}\right) &\leq  \mathbb{E}\left(\langle 1_{\mathcal{Q}}\rangle^{\extB \mathcal{A}_{1,2}(\ell^2),+/-}\right)\\ &  \leq \sum_{\gamma} \sum_{\substack{S \subset \gamma \\ |S| \geq |\gamma|/2}} \mathbb{E}\left(\left\langle\prod_{v \in S} 1_{F_v}\right\rangle^{\extB \mathcal{A}_{1,2}(\ell^2),+/-}\right),
\end{split}
\ee
where $\gamma$ is summed over all possible crossing paths of the annulus $\mathcal{V}$, and $S$ is the set of good vertices in $\gamma$. The properties in \eqref{eq:coveringbyannuli} imply that, given any $S \subset \mathcal{V}$, there exists $S_0 \subset S$ such that $|S_0| \geq c_3\cdot|S|$, and $v,w \in S_0$ implies that $d(v,w) > 4 \ell$ -- i.e.~$\mathcal{A}_{v,1,2}(\ell) \cap \mathcal{A}_{w,1,2}(\ell) = \emptyset$. Therefore, the domain Markov property and monotonicity imply that
\begin{equation}
  \mathbb{E}\left(\left\langle \prod_{v\in S}1_{F_v}\right\rangle^{\extB \mathcal{A}_{1,2}(\ell^2),+/-}\right)\le p(\ell)^{c_3 \cdot |S|}.
\end{equation}
Noting that the shortest path crossing the annulus $\mathcal{V}$ has length $c_4 \ell$, \eqref{eq:unionbound} implies
\be\label{eq:unionbound2}
 \mathbb{E}\left(\langle 1_{\mathcal{Q}}\rangle^{\extB \mathcal{A}_{1,2}(\ell^2),+/-}\right) \leq \sum_{|\gamma| = c_4 \ell}^\infty 8^{|\gamma|} \cdot p(\ell)^{c_3 \cdot |\gamma|/2},
 \ee
since there are at most $4^{|\gamma|}$ paths of length $|\gamma|$, and at most $2^{|\gamma|}$ ways to partition the set into good and bad vertices. If $p(\ell) < c_5$, this sum is bounded above by $C_3 \cdot e^{-c_6 \ell}$. Setting $\alpha = \alpha_0/2$ and $\ell_1 = \ell_0^2$ for the minimal $\ell_0 > C_1$ for which $p(\ell) < c_5$, the proof is complete.
\end{proof}

\section{Exponential Decay}\label{sec:exp decay}
In this Section we prove Theorem~\ref{thm:exponential_bound}. As the main step, we show that the order parameter $m(\ell)$ decays faster than $\frac{c}{\ell}$ for all $c>0$.
\begin{prop}\label{prop:fast power law}
  For all $T$, $J$, $h$ and $\eps$,
  \begin{equation}\label{eq:fast power law}
    \lim_{\ell\to\infty} \ell\cdot m(\ell) = 0.
  \end{equation}
\end{prop}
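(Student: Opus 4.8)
The plan is to combine the surface tension identity of Theorem \ref{thm:T2} with the quantified tortuosity bound of Theorem \ref{thm:Fractality} in a bootstrap argument. First I would set up the basic dictionary: by the disagreement representation \eqref{eq:order parameter}, $m(\ell)$ equals the averaged probability that $\zero$ is connected to $\extB\Lambda(\ell)$ in the disagreement set, and by Theorem \ref{thm:T2} the surface tension $\mathcal{T}_{\Lambda_1,\Lambda_2}$ controls the expected number of disagreements through concentric annuli. The key point is that a disagreement path from the origin to distance $\ell$ must cross each of the $\sim \log \ell$ dyadic annuli $\mathcal{A}_{1,2}(2^k)$ for $k$ up to $\log_2 \ell$, and by Theorem \ref{thm:Fractality} each such crossing, with overwhelming probability, uses at least $c_0 (2^k)^{1+\alpha}$ lattice steps. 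So a connected disagreement path to distance $\ell$ is forced to be very long, which makes it costly.

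Second, I would quantify this length penalty against the surface tension. Pick a scale $\ell$ and consider the annulus $\mathcal{A}:=\Lambda(2\ell)\setminus\Lambda(\ell)$. On one hand, Theorem \ref{thm:T2} gives $\mathcal{T}_{\Lambda(\ell),\Lambda(2\ell)}(\eta) = 2\eps\int_\R D_{\Lambda(\ell),\Lambda(2\ell)}(\eta^{(t)})\,dt$, and the integrand is the thermal expectation of the size of the disagreement cluster of the outer boundary intersected with $\Lambda(\ell)$. On the event that the origin percolates to the outer boundary, this cluster contains a path of length at least $c_0\ell^{1+\alpha}$ by Theorem \ref{thm:Fractality} (applied at scale $\ell$, up to the negligible error $Ce^{-c\sqrt\ell}$), so its size is at least $c_0\ell^{1+\alpha}$. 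Hence $D_{\Lambda(\ell),\Lambda(2\ell)}(\eta^{(t)}) \geq c_0\ell^{1+\alpha}\cdot \langle \1[\zero\stackrel{\mathcal D}{\longleftrightarrow}\extB\Lambda(2\ell)]\rangle^{\extB\Lambda(2\ell),+/-}(\eta^{(t)})$ up to the error term, and after integrating in $t$ and taking $\E$, the left side is bounded by $\E[\mathcal{T}_{\Lambda(\ell),\Lambda(2\ell)}]/(2\eps)$. Meanwhile I expect a matching upper bound on the surface tension: the non-anticipatory bound of Proposition \ref{prop:NonAnticipatory} applied to a separating circuit on which both spins vanish gives $\mathcal{T}_{\Lambda(\ell),\Lambda(2\ell)} \leq 16J\langle|\mathcal{S}\cap\mathcal D|\rangle$, and choosing $\mathcal{S}$ to be the outer contour of the disagreement cluster (or averaging over a band of $\sim\ell$ concentric circuits, as in the proof of Lemma \ref{lem:self bounding inequality}) bounds $\E[\mathcal{T}]$ by $C(J/\ell)\cdot\E[\langle|\mathcal{C}_{\extB\Lambda(2\ell)}\cap(\Lambda(2\ell)\setminus\Lambda(\ell))|\rangle]$, which in turn — since on the percolation event the relevant cluster reaches across the band — will be controlled by the same crossing probability times the typical cluster size in the band.

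Putting the two bounds together yields, schematically,
\begin{equation*}
c_0\,\ell^{1+\alpha}\cdot \E\!\left[\big\langle \1[\zero\stackrel{\mathcal D}{\longleftrightarrow}\extB\Lambda(2\ell)]\big\rangle^{\extB\Lambda(2\ell),+/-}\right] \ \le\ \frac{C}{\eps}\,\E\!\left[\mathcal{T}_{\Lambda(\ell),\Lambda(2\ell)}\right] + Ce^{-c\sqrt\ell},
\end{equation*}
and bounding the averaged surface tension by the upper bound in terms of disagreements in the annulus — which by monotonicity and FKG is at most a constant times $\ell\cdot m(\ell)$ — gives a recursion of the form $\ell^{1+\alpha} m(2\ell) \lesssim C(J/\eps)\,\ell\, m(\ell) + Ce^{-c\sqrt\ell}$, i.e. $m(2\ell) \lesssim C(J/\eps)\,\ell^{-\alpha} m(\ell) + \text{(small)}$. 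Iterating this over dyadic scales, the prefactor $\prod_k (C\cdot(2^k)^{-\alpha})$ is summable and drives $\ell\cdot m(\ell)$ to $0$; the exponential error terms $e^{-c\sqrt{2^k}}$ are harmless.

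The main obstacle I anticipate is the bookkeeping needed to convert the tortuosity statement (which is about the \emph{length} of a crossing path inside a fixed annulus under its own boundary conditions) into a usable lower bound on $D_{\Lambda_1,\Lambda_2}(\eta^{(t)})$ for the \emph{shifted} field, uniformly in the shift $t$ — this requires an anti-concentration/FKG argument to ensure the percolation event and the tortuosity event remain probable after tilting the field, analogous to the quantile and Gaussian-conditioning manipulations in the proof of Lemma \ref{lem:lasso}. A secondary technical point is making the upper and lower surface-tension bounds refer to compatible cluster quantities so that the recursion closes cleanly; this is where the averaging over a band of $\sim\ell$ concentric contours (as in Lemma \ref{lem:self bounding inequality}) does the work of matching the $\ell^{1+\alpha}$ gain on one side against an $\ell^1$ loss on the other, leaving the crucial net factor $\ell^{-\alpha}$.
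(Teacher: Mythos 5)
Your plan captures the right ingredients (surface tension, tortuosity, non-anticipatory sets), but the central step --- a direct recursion $\ell^{1+\alpha} m(2\ell)\lesssim C\ell\,m(\ell)$ --- does not follow from these tools, and the paper uses them in an essentially different arrangement.

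First, a localized issue: you lower-bound $D_{\Lambda(\ell),\Lambda(2\ell)}(\eta^{(t)})$ by $c_0\ell^{1+\alpha}$ times the percolation probability, but Theorem~\ref{thm:Fractality} gives a long path \emph{inside the annulus} $\Lambda(2\ell)\setminus\Lambda(\ell)$, while $D_{\Lambda(\ell),\Lambda(2\ell)}$ counts disagreements in $\Lambda(\ell)$. The tortuous crossing contributes nothing to that count. One could switch to $D_{\Lambda(2\ell)\setminus\Lambda(\ell),\,\Lambda(3\ell)}$, say, but notice the paper does not try to extract a \emph{lower} bound on $D$ from tortuosity at all: tortuosity enters only on the \emph{upper} bound side, in Proposition~\ref{prop:GoodSTUpperBound}, where it guarantees that the increasing family of non-anticipatory level sets $\mathcal{S}_n^k$ has $\gtrsim\ell^{1+\alpha}$ distinct levels, so that the minimal level cuts the disagreement cluster in a small set.

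Second, and more fundamentally: your conversion of $\mathcal{T}=2\eps\int_\R D(\eta^{(t)})\,dt$ into $\gtrsim \eps\,\ell^{1+\alpha} m(2\ell)$ is not a bookkeeping matter. The quantity $\int_\R D(\eta^{(t)})\,dt$ is a Lebesgue integral over the shift parameter; $p(\eta^{(t)})$ decays as $|t|\to\infty$ on a scale that depends on $\eta$ and on $\ell$, and there is no a priori way to produce a $t$-interval of fixed length on which $p(\eta^{(t)})\gtrsim p(\eta)$. FKG does not help because a uniform upward tilt of the field increases \emph{both} $\sigma^+$ and $\sigma^-$ simultaneously, with no monotone effect on their disagreement. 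This is precisely why the paper does not try to lower-bound $\E[\mathcal{T}]$ directly in terms of $m$; instead it uses Proposition~\ref{prop:AntiConcentration} to convert smallness of $\E[\mathcal{T}]$ into the statement that $D_{\Lambda_1,\Lambda_2}$ drops below half its mean with probability close to $1$, and then contradicts this against the conditional second-moment concentration of Proposition~\ref{prop:var_bound} (which itself is only available after the regularization step of Lemma~\ref{lem:comp_decay} pins down a scale $L$ where $m$ is comparable to a power law of exponent $1+\gamma$). The crux is a delicate race of exponents --- $\chi(C\ell^{-\alpha/4})$ from anti-concentration against $(1+c\,\ell^{-\alpha/6})^{-1}$ from Chebyshev --- that cannot be reproduced by the schematic dyadic recursion you propose. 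In short, the regularization lemma and the concentration/anti-concentration dichotomy are not ``secondary technical points''; they are the mechanism by which the surface-tension upper bound (with its $\ell^{-\alpha/3}$ gain from tortuosity) is turned into control of $m(\ell)$.
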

A quantitative rate of decay for~\eqref{eq:fast power law}, which depends only on $J/\eps$ and is phrased in terms of the quantities resulting from the tortuosity theorem, is given in~\eqref{eq:quantitative dependence} below.

Theorem~\ref{thm:exponential_bound} follows from Proposition~\ref{prop:fast power law} by a standard percolation argument which was detailed in~\cite[Appendix A]{AP18} for the zero-temperature case.
The disagreement percolation representation allows its natural extension to positive temperatures, stated next. The proof is similar in spirit to the one used to prove Theorem \ref{thm:Fractality}.
\begin{prop} \label{thm:exp1}
For the RFIM on $\Z^d$ with the nearest-neighbor interaction, there is a finite constant $c_0$ (depending only on $d$) with which: if for some $\ell < \infty$
\be
m(\ell) \, \leq \,  c_0  / \ell^{d-1}\label{eq:one_over_ell_bound}
\ee
then for all $L <\infty$
\be
m(L) \, \leq \,  C_1 \,  e^{- b  L/ \ell}
\ee
with  $C_1, b \in (0,\infty)$ which do not depend on $T$, $J$, $h$, $\eps$ and $\ell$.
\end{prop}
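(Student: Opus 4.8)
The plan is to deduce the exponential decay from the ``fast power law'' hypothesis \eqref{eq:one_over_ell_bound} by a renormalization/block argument, exactly as in the zero-temperature treatment of~\cite[Appendix A]{AP18}, with the disagreement percolation representation \eqref{eq:order parameter} supplying the monotonicity and finite-range-dependence structure needed to run it at positive temperature. First I would fix the scale $\ell$ at which \eqref{eq:one_over_ell_bound} holds and pass to a coarse lattice $\mathcal V\simeq\Z^d$ whose sites correspond to translates of $\Lambda(\ell)$ (or of the annulus $\mathcal A_{1,2}(\ell)$), tiling $\Lambda(L)$ with $\sim (L/\ell)^d$ blocks. Declare a coarse site $v$ \emph{open} if the associated annulus $\mathcal A_{v,1,2}(\ell)$ is crossed by a path of disagreement percolation connecting its inner and outer boundary components. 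The event $\{\zero \stackrel{\mathcal D}{\longleftrightarrow}\extB\Lambda(L)\}$ then forces a chain of $\gtrsim L/\ell$ open coarse sites joining the origin block to the boundary of the coarse box, so
\[
m(L)=\E\left[\left\langle \1\!\left[\zero \stackrel{\mathcal D}{\longleftrightarrow}\extB\Lambda(L)\right]\right\rangle^{\extB\Lambda(L),+/-}\right]\le \sum_{\gamma} \E\left[\left\langle \prod_{v\in\gamma}\1_{\{v\text{ open}\}}\right\rangle^{\extB\Lambda(L),+/-}\right],
\]
the sum being over coarse self-avoiding paths $\gamma$ of length $\gtrsim L/\ell$ from the origin to the coarse boundary.

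The next step is to control the joint probability $\E[\langle \prod_{v\in\gamma}\1_{\{v\text{ open}\}}\rangle]$. As in the proof of Theorem~\ref{thm:Fractality}, from any coarse path $\gamma$ I would extract a sub-path $\gamma_0$ with $|\gamma_0|\ge c|\gamma|$ whose associated annuli $\mathcal A_{v,1,2}(\ell)$ are pairwise disjoint and well-separated; then the domain Markov property together with FKG monotonicity of the extended Ising model (the corollary recorded after Lemma~\ref{lem:DMPExt}) lets me bound the joint expectation by a product of single-block crossing probabilities. Each single-block crossing probability is controlled by \eqref{eq:one_over_ell_bound}: a crossing of the annulus $\mathcal A_{v,1,2}(\ell)$ by a disagreement path forces, via the union bound over the $\sim \ell^{d-1}$ boundary sites of the inner box and translation invariance, a disagreement connection from some boundary site to the outer boundary at scale $\ell$, so the crossing probability of one block is at most $C\ell^{d-1} m(\ell)\le C c_0$. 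Choosing $c_0$ small enough that $C c_0<1$ gives a single-block bound $p<1$, hence $\E[\langle\prod_{v\in\gamma}\1_{\{v\text{ open}\}}\rangle]\le p^{c|\gamma|}$.

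Finally I would collect the estimates. Since the number of coarse self-avoiding paths of length $n$ from a fixed site is at most $(2d)^n$ and every relevant $\gamma$ has $|\gamma|\ge c_4 L/\ell$, summing the geometric series gives
\[
m(L)\le \sum_{n\ge c_4 L/\ell}(2d)^n p^{cn}\le C_1\, e^{-bL/\ell}
\]
provided $c_0$ (hence $p$) was fixed small enough that $(2d)p^{c}<1$; the constants $C_1,b$ then depend only on $d$ and on the fixed smallness of $c_0$, and in particular not on $T,J,h,\eps$ or $\ell$. The main obstacle, and the place where positivity of temperature genuinely enters, is justifying the block-independence step: unlike at $T=0$ the disagreement configuration in one block is not a deterministic function of the random field, so one must argue that conditioning on the annuli and spins exterior to a given block leaves the conditional crossing probability stochastically dominated by the free (or $+/-$-boundary) single-block crossing probability. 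This is precisely what the domain Markov property of $\bar{\P}_{\bar{\G}}$ on the extended graph, combined with FKG monotonicity under enlargement of the domain, is designed to deliver, and assembling it carefully — in particular checking that the separating annuli can be chosen disjoint and that the boundary conditions induced on each block are dominated by the $\pm$ ones — is the technical heart of the argument.
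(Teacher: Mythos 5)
Your proposal is correct and follows essentially the same route as the paper: the disagreement percolation representation~\eqref{eq:order parameter}, a coarse tiling by blocks at scale $\ell$, a union-bound reduction of the single-block event to $\sim \ell^{d-1}$ boundary connections each of probability $\le m(\ell)$, and then the domain Markov property plus FKG monotonicity of the extended model to factorize over a well-separated sub-collection and sum over coarse paths. The paper uses the block event $F_v := \{\Lambda_v(\ell)\cap\mathcal{C}_{\extB\Lambda(L)}\neq\emptyset\}$ rather than an annulus crossing, but this is a cosmetic variant, and your identification of the domain Markov property of $\bar{\P}_{\bar{\G}}$ combined with monotonicity under domain enlargement as the technical crux matches exactly what the paper invokes.
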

\begin{proof}
By~\eqref{eq:order parameter},
\begin{equation}
m(L) =   \mathbb{E}\left(\left\langle \1\left[\zero \in\mathcal{C}_{\Lambda(L)}\right]\right\rangle^{\extB \Lambda(L),+/-} \right).
\end{equation}
Assume $L>4\ell$ (without loss of generality) and let $\mathcal{V}\subset\Lambda(\lfloor L/2\rfloor)$ be such that $(\Lambda_v(\ell))_{v\in \mathcal{V}}$ covers $\Lambda(\lfloor L/2\rfloor)$, and each $\Lambda_v(2\ell)$ intersects at most $C_2$ others, for a suitable $C_2$ depending only on $d$. Let $F_v$ be the event that $\Lambda_v(\ell)\cap \mathcal{C}_{\extB\Lambda(L)}\neq \emptyset$ (equivalently, $\extB\Lambda_v(\ell)\cap \mathcal{C}_{\extB\Lambda(L)}\neq \emptyset$). The domain Markov property, monotonicity and a union bound show that if $\mathcal{V}_0\subset\mathcal{V}$ are such that $(\Lambda_v(2\ell))_{v\in\mathcal{V}_0}$ are disjoint then
\begin{equation}
  \E\left(\left\langle \cap_{v\in\mathcal{V}_0}F_v\right\rangle^{\extB \Lambda(L),+/-}\right)\le \prod_{v\in\mathcal{V}_0} \E\left(\left\langle \extB\Lambda_v(\ell)\cap \mathcal{C}_{\extB \Lambda(2\ell)}\neq\emptyset\right\rangle^{\extB \Lambda(2\ell),+/-}\right)\le \left(C_3\cdot c_0\right)^{|\mathcal{V}_0|}
\end{equation}
with $C_3$ depending only on $d$.
Lastly, if $\zero\in\mathcal{C}_{\extB\Lambda(L)}$ then at least a constant times $L/\ell$ of the events $F_v$ occur, along a geometrically connected set of the $(\Lambda_v(\ell))_{v\in\mathcal{V}}$. Picking $c_0$ sufficiently small as a function of $d$, the result thus follows by standard arguments of percolation with finite-range dependence.
 \end{proof}
The rest of the section is devoted to proving Proposition~\ref{prop:fast power law}.

\subsection{Concentration and anti-concentration of the number of disagreements}\mbox{ }  \\[-2.5ex]

We first develop {\em a priori} bounds on the concentration properties of the number of disagreements. The ideas presented are adapted from corresponding ones in~\cite{AP18}.

The following inequality is used to convert upper bounds on the surface tension to an anti-concentration bound for the number of disagreements.
\begin{prop}\label{prop:AntiConcentration} For finite subgraphs $\Lambda_1\subset\Lambda_2\subset\Z^2$,
\[
\P\left[\frac{D_{\Lambda_1,\Lambda_2}}{\E[D_{\Lambda_1,\Lambda_2}]} < 1/2 \right] \geq \chi\left(\frac{1} {2 \eps }\cdot  \frac{ \E[\mathcal{T}_{\Lambda_1,\Lambda_2}]}{\sqrt{|\Lambda_1|}} \cdot \frac{|\Lambda_1|}{\E[D_{\Lambda_1,\Lambda_2}]} \right),
\]
where $\chi$ is the Gaussian distribution's two sided tail $\chi(t):=2\int_t^\infty \phi(s) \, ds$ (see Theorem~\ref{thm:T2} for the definition of $\phi$).
\end{prop}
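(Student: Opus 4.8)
The plan is to feed the integral representation of the surface tension from Theorem~\ref{thm:T2} into an anti-concentration argument, after conditioning on all Gaussian degrees of freedom \emph{except} the bulk average of the field in $\Lambda_1$. Write $\mu := \E[D_{\Lambda_1,\Lambda_2}]$ and recall from \eqref{eq:etahat} that $\widehat\eta_{\Lambda_1} = |\Lambda_1|^{-1/2}\sum_{v\in\Lambda_1}\eta_v$ is a standard Gaussian. I would first treat $\beta<\infty$ (so that $\mu>0$), the case $\beta=\infty$ following by continuity of both sides in $\beta$, as elsewhere in the paper.

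The first step is a linear change of variables in the disorder. Fix an orthonormal basis of the space of field configurations on $\Lambda_2$ (on which $D_{\Lambda_1,\Lambda_2}$ and $\mathcal{T}_{\Lambda_1,\Lambda_2}$ depend) one of whose vectors is $|\Lambda_1|^{-1/2}\1_{\Lambda_1}$, and let $\mathcal{F}$ be the $\sigma$-algebra generated by the components of $\eta$ along the remaining basis vectors; conditionally on $\mathcal{F}$ the variable $g := \widehat\eta_{\Lambda_1}$ is a standard Gaussian. The point is that the tilted field $\eta^{(t)}$ of \eqref{def:t} is obtained from $\eta$ by replacing $g$ with $g + t\sqrt{|\Lambda_1|}$ and leaving all of $\mathcal{F}$ unchanged. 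Hence there is an $\mathcal{F}$-measurable $\psi\ge 0$ with $D_{\Lambda_1,\Lambda_2}(\eta) = \psi(g)$ and $D_{\Lambda_1,\Lambda_2}(\eta^{(t)}) = \psi(g + t\sqrt{|\Lambda_1|})$, and substituting into the first identity of \eqref{DD2} and changing variables yields the almost sure identity
\begin{equation}
  \int_\R \psi(u)\,du = \sqrt{|\Lambda_1|}\int_\R D_{\Lambda_1,\Lambda_2}(\eta^{(t)})\,dt = \frac{\sqrt{|\Lambda_1|}}{2\eps}\,\mathcal{T}_{\Lambda_1,\Lambda_2}(\eta).
\end{equation}
In particular $\mathcal{T}_{\Lambda_1,\Lambda_2}$ is $\mathcal{F}$-measurable, and it is non-negative since $D_{\Lambda_1,\Lambda_2}\ge 0$ by FKG.

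The conclusion then follows from three elementary facts. First, Markov's inequality applied to Lebesgue measure: $\Leb\{u\in\R:\psi(u)\ge\mu/2\}\le \tfrac{2}{\mu}\int_\R\psi(u)\,du = \tfrac{\sqrt{|\Lambda_1|}}{\eps\mu}\,\mathcal{T}_{\Lambda_1,\Lambda_2}(\eta)=:2a(\eta)$, where $a(\eta)\ge 0$ is $\mathcal{F}$-measurable. Second, the bathtub principle: among Borel sets of prescribed Lebesgue measure a centered interval carries the largest standard Gaussian mass, so $\P(\psi(g)\ge\mu/2\mid\mathcal{F})\le\int_{-a(\eta)}^{a(\eta)}\phi(s)\,ds = 1-\chi(a(\eta))$. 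Third, $a\mapsto 1-\chi(a)$ is concave on $[0,\infty)$ (its second derivative equals $-2a\,\phi(a)$), so taking expectations and invoking Jensen's inequality gives
\begin{equation}
  \P\!\left(\frac{D_{\Lambda_1,\Lambda_2}}{\E[D_{\Lambda_1,\Lambda_2}]}\ge\frac12\right) = \E\big[\P(\psi(g)\ge\mu/2\mid\mathcal{F})\big]\le \E\big[1-\chi(a(\eta))\big]\le 1-\chi\!\big(\E[a(\eta)]\big).
\end{equation}
Passing to the complementary event and using $\E[a(\eta)] = \tfrac{\sqrt{|\Lambda_1|}}{2\eps\mu}\E[\mathcal{T}_{\Lambda_1,\Lambda_2}] = \tfrac{1}{2\eps}\cdot\tfrac{\E[\mathcal{T}_{\Lambda_1,\Lambda_2}]}{\sqrt{|\Lambda_1|}}\cdot\tfrac{|\Lambda_1|}{\E[D_{\Lambda_1,\Lambda_2}]}$ gives exactly the claimed bound.

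I do not expect a serious obstacle. The two points requiring care are (i) the verification that $\eta^{(t)}$ acts purely by translating the single coordinate $\widehat\eta_{\Lambda_1}$ — this is what makes $\int_\R D_{\Lambda_1,\Lambda_2}(\eta^{(t)})\,dt$ independent of that coordinate and hence $\mathcal{F}$-measurable — and (ii) the correct orientation of the bathtub and Jensen inequalities, which is what upgrades the crude density bound (of order $\Leb/\sqrt{2\pi}$, too weak to reach $\chi$) to the sharp tail $\chi$ and makes the averaged estimate land on $\E[\mathcal{T}_{\Lambda_1,\Lambda_2}]$ rather than on a typical value. The $\beta=\infty$ case should be handled, as in the earlier lemmas, by the continuity in $\beta$ of $D_{\Lambda_1,\Lambda_2}$ and $\mathcal{T}_{\Lambda_1,\Lambda_2}$ for almost every $\eta$.
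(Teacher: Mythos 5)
Your proof is correct and is, to the best of my reconstruction, the same argument the paper has in mind (the paper defers to Proposition~3.4 of \cite{AP18}, which proceeds exactly by conditioning on all Gaussian coordinates other than $\widehat\eta_{\Lambda_1}$, deducing $\int_\R \psi(u)\,du = \tfrac{\sqrt{|\Lambda_1|}}{2\eps}\mathcal{T}_{\Lambda_1,\Lambda_2}$ from the integral form of Theorem~\ref{thm:T2}, and then combining Lebesgue–Markov with the bathtub bound and Jensen applied to the concave function $1-\chi$). The only thing worth adding is to note explicitly that the final Jensen step requires $a(\eta)\ge 0$ almost surely so that the argument stays in the region where $1-\chi$ is concave, which you do justify via $D_{\Lambda_1,\Lambda_2}\ge 0$ and the first identity in \eqref{DD2}.
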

We omit the proof since it is essentially identical to that of~\cite[Proposition 3.4]{AP18}.

The above anti-concentration bound will be contrasted with a conditional concentration inequality, which holds whenever the fast power-law decay~\eqref{eq:fast power law} of the order parameter is violated. The combination of the bounds results in a contradiction, which provides a proof for the fast decay~\eqref{eq:fast power law}.

The concentration inequality requires that the order parameter sequence $m(\ell)$ exhibits stretches with somewhat regular behavior. This is provided by the following abstract lemma.

\begin{lemma}\label{lem:comp_decay}
Let $(p_j)$ be a monotone non-increasing sequence satisfying $0\le p_j\le 1$. For each $\gamma>0$ and integer $k\ge 1$ there exists a non-negative integer~$n$ in the range
\begin{equation}\label{eq:inequalities on n}
(k+1)\cdot p_k^{1/(1+\gamma)}-1 \le n\le k
\end{equation}
such that for all $0\le j \leq n$,
\begin{equation}\label{eq:comp_dec}
  p_{n}\le p_j\le  p_{n}\left(\frac{n+1}{j+1}\right)^{1 + \gamma} \,.
\end{equation}
\end{lemma}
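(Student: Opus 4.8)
The plan is to reduce the two–sided bound \eqref{eq:comp_dec} to a single running–maximum statement about a rescaled version of the sequence. The left inequality $p_n\le p_j$ for $j\le n$ is immediate from the assumed monotonicity, so only the right inequality carries content. Rewriting $p_j\le p_n\left(\frac{n+1}{j+1}\right)^{1+\gamma}$ in the equivalent form $p_j(j+1)^{1+\gamma}\le p_n(n+1)^{1+\gamma}$ suggests introducing the auxiliary sequence $q_j:=p_j(j+1)^{1+\gamma}$; the assertion \eqref{eq:comp_dec} for a given $n$ then amounts precisely to $q_n=\max_{0\le j\le n}q_j$, together with monotonicity.

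First I would choose $n$ to be the \emph{largest} index in $\{0,1,\dots,k\}$ at which $q$ attains its maximum over that range, i.e.\ $n:=\max\{\,i\le k:\ q_i=\max_{0\le j\le k}q_j\,\}$. By construction $0\le n\le k$, which gives the right-hand inequality of \eqref{eq:inequalities on n}. Moreover, since $\{0,\dots,n\}\subseteq\{0,\dots,k\}$, we have $q_j\le q_n$ for every $j\le n$, which is exactly the right inequality in \eqref{eq:comp_dec}; combined with $p_n\le p_j$ for $j\le n$ (monotonicity), this establishes all of \eqref{eq:comp_dec}.

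It remains only to verify the lower bound $n\ge (k+1)p_k^{1/(1+\gamma)}-1$. Here I would combine two observations: on one hand $q_n\ge q_k=p_k(k+1)^{1+\gamma}$, because $q_n$ is the maximum of $q$ over $\{0,\dots,k\}$; on the other hand $q_n=p_n(n+1)^{1+\gamma}\le (n+1)^{1+\gamma}$, using $p_n\le 1$. Chaining these gives $(n+1)^{1+\gamma}\ge p_k(k+1)^{1+\gamma}$, and taking $(1+\gamma)$-th roots yields $n+1\ge (k+1)p_k^{1/(1+\gamma)}$, as required. (When $p_k=0$ the lower bound is vacuous, so no care is needed there.)

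There is essentially no genuine obstacle in this argument; the only point that must be handled with a little care is the specific choice of $n$ as a maximizer of $q$ over the \emph{whole} range $\{0,\dots,k\}$, rather than merely a later running maximum: it is this choice that simultaneously yields the domination of $q_j$ by $q_n$ for all $j\le n$ \emph{and}, through the comparison with $q_k$, the quantitative lower bound on $n$.
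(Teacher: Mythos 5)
Your proof is correct and takes essentially the same route as the paper: you select $n$ as (a) maximizer of $q_j=p_j(j+1)^{1+\gamma}$ over $\{0,\dots,k\}$, obtain the right-hand inequality of \eqref{eq:comp_dec} from maximality and the left-hand from monotonicity, and derive the lower bound in \eqref{eq:inequalities on n} by chaining $q_n\ge q_k$ with $p_n\le 1$. The paper does not specify which maximizer to pick when there are ties, but as you observe any choice works; your choice of the largest such index changes nothing essential.
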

\begin{proof}
The left inequality in~\eqref{eq:comp_dec} follows from the fact that $(p_j)$ is non-increasing. The right inequality is obtained by selecting $n$ to be the index at which $(p_j \cdot (j+1)^{1 + \gamma})_{0 \leq j \leq k}$ is maximized. The lower bound in~\eqref{eq:inequalities on n} is due to $p_n\cdot (n+1)^{1+\gamma}\ge p_k\cdot (k+1)^{1+\gamma}$ and the assumption that $p_n \leq 1$.
\end{proof}

The next proposition contains the conditional concentration inequality discussed above.
\begin{prop}\label{prop:var_bound}
For each $c>0$ there exists $C = C(c)>0$ such that the following holds.
Let $\Lambda_1\subset\Lambda_2\subset\Z^2$ be finite subgraphs satisfying that for some $L\ge1$,
\begin{equation}\label{eq:geometric assumptions}
  |\Lambda_1|\ge c\cdot L^2\quad\text{and}\quad c\cdot L\le \min_{\substack{u\in\Lambda_1\\v\in\extB\Lambda_2}} d(u,v)\le \max_{\substack{u\in\Lambda_1\\v\in\extB\Lambda_2}} d(u,v)\le L.
\end{equation}
Assume that for some $0<\gamma<1$,
\begin{equation}\label{eq:comp decay assumption}
m(L) \leq m(j) \leq m(L) \left(\frac{L+1}{j+1}\right)^{1 + \gamma}, \quad 0 \leq j \leq L.
\end{equation}
Then
\begin{equation} \label{eq:Var_bnd}
  \E\big((D_{\Lambda_1,\Lambda_2})^2\big)\ \le\ \E\big(\big\langle |\Lambda_1\cap\mathcal{C}_{\extB \Lambda_2}|^2\big\rangle^{\extB \Lambda_2,+/-}\big) \  \le \ \frac{C}{\gamma}\cdot L^{2 \gamma} \cdot \big(\E \left( D_{\Lambda_1,\Lambda_2}\right) \big)^{2}.
\end{equation}
\end{prop}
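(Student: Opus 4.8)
The first inequality in \eqref{eq:Var_bnd} is Jensen's inequality: for each fixed $\eta$ one has $D_{\Lambda_1,\Lambda_2}(\eta)=\langle|\Lambda_1\cap\mathcal{C}_{\extB\Lambda_2}|\rangle^{\extB\Lambda_2,+/-}(\eta)$ by \eqref{eq:disagreement representation for D}, hence $\big(D_{\Lambda_1,\Lambda_2}\big)^2\le\langle|\Lambda_1\cap\mathcal{C}_{\extB\Lambda_2}|^2\rangle^{\extB\Lambda_2,+/-}$, and one takes $\E$. The substance is the second inequality. I will expand
\[
\big\langle|\Lambda_1\cap\mathcal{C}_{\extB\Lambda_2}|^2\big\rangle^{\extB\Lambda_2,+/-}=\sum_{u,v\in\Lambda_1}\big\langle\1[u\in\mathcal{C}_{\extB\Lambda_2}]\cdot\1[v\in\mathcal{C}_{\extB\Lambda_2}]\big\rangle^{\extB\Lambda_2,+/-}
\]
and bound each off–diagonal term by a \emph{product} of $+/-$ connection probabilities on disjoint balls of radius comparable to $\tfrac13\min\{d(u,v),L\}$. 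A preliminary point that I will use repeatedly: $\E D_{\Lambda_1,\Lambda_2}\ge c'L^2 m(L)$ for some $c'=c'(c)>0$. Indeed, for $u\in\Lambda_1$ the vertex of $\Lambda_2$ farthest from $u$ must lie on $\extB\Lambda_2$ (a vertex of $\Lambda_2$ all of whose neighbours lie in $\Lambda_2$ has a neighbour strictly farther from $u$), so \eqref{eq:geometric assumptions} forces $\Lambda_2\subset\Lambda_u(L)$; by monotonicity of the extended Ising model and Proposition~\ref{prop:DisagreementRep}, $\E\langle\1[u\in\mathcal{C}_{\extB\Lambda_2}]\rangle^{\extB\Lambda_2,+/-}=\E\langle\sigma_u\rangle^{\extB\Lambda_2,+}\ge\E\langle\sigma_u\rangle^{\extB\Lambda_u(L),+}=m(L)$, and one sums over $u$ using $|\Lambda_1|\ge cL^2$. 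The same tools with \eqref{eq:comp decay assumption} at radius $\lfloor cL\rfloor$ (and $|\Lambda_1|\le|\Lambda_u(L)|\le CL^2$) give $\E D_{\Lambda_1,\Lambda_2}\le C(c)L^2 m(L)$, and \eqref{eq:comp decay assumption} at $j=1$ gives $m(L)\ge m(1)L^{-(1+\gamma)}$ with $m(1)>0$.

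\textbf{The per–pair estimate.} Fix $u\neq v$ in $\Lambda_1$, put $r:=d(u,v)$ and $s:=\max\{1,\lfloor\min(r,\lfloor cL\rfloor)/3\rfloor\}$. Then $s<d(w,\extB\Lambda_2)$ for $w\in\{u,v\}$, so $\Lambda_u(s),\Lambda_v(s)\subset\Lambda_2$, and for $r\ge 4$ the balls $\bar\Lambda_u(s),\bar\Lambda_v(s)$ together with their neighbourhoods in $\bar\Lambda_2$ are pairwise disjoint. On $\{u\in\mathcal{C}_{\extB\Lambda_2}\}\cap\{v\in\mathcal{C}_{\extB\Lambda_2}\}$ each of $u,v$ is joined to $\extB\Lambda_2$ by a disagreement path of the consistent orientation inherited from $\extB\Lambda_2$ (a structural feature of the extended model, coming from the hard constraints), and such a path must leave the corresponding ball; hence this event is contained in $\{u\stackrel{\mathcal{D}}{\longleftrightarrow}\extB\Lambda_u(s)\text{ inside }\bar\Lambda_u(s)\}\cap\{v\stackrel{\mathcal{D}}{\longleftrightarrow}\extB\Lambda_v(s)\text{ inside }\bar\Lambda_v(s)\}$. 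Now reveal $(\bar\sigma^+,\bar\sigma^-)$ on $\bar\Lambda_2\setminus(\bar\Lambda_u(s)\cup\bar\Lambda_v(s))$: by the domain Markov property of Lemma~\ref{lem:DMPExt} the conditional law inside $\bar\Lambda_u(s)$ and inside $\bar\Lambda_v(s)$ is a product of independent extended–Ising measures whose boundary conditions are squeezed, coordinate–wise, between the all–$-$ and the all–$+$ ones. A monotone (FKG) coupling then shows that inside each ball the oriented disagreement set is stochastically dominated by the one produced by $+/-$ boundary conditions, so the conditional probability of the displayed event is at most $\langle\sigma_u\rangle^{\extB\Lambda_u(s),+}_{\Lambda_u(s)}(\eta)\cdot\langle\sigma_v\rangle^{\extB\Lambda_v(s),+}_{\Lambda_v(s)}(\eta)$, a quantity not depending on the revealed data. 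Taking the thermal expectation and then $\E$, and using that the two factors depend on the field in the \emph{disjoint} sets $\Lambda_u(s),\Lambda_v(s)$ (hence are independent), one obtains
\[
\E\big\langle\1[u\in\mathcal{C}_{\extB\Lambda_2}]\cdot\1[v\in\mathcal{C}_{\extB\Lambda_2}]\big\rangle^{\extB\Lambda_2,+/-}\ \le\ m(s)^2 .
\]

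\textbf{Summation.} Writing $s(r)$ for the above choice (so $s(r)\asymp_c\min\{r,L\}$), using $|\{v\in\Lambda_1:d(u,v)=r\}|\le 4r$ and $\diam\Lambda_1\le 2L$, and treating $u=v$ and $1\le r\le3$ by the crude bound $\langle\1[u\in\mathcal{C}]\1[v\in\mathcal{C}]\rangle^{\extB\Lambda_2,+/-}\le\langle\sigma_u\rangle^{\extB\Lambda_2,+}$, I get
\[
\E\big\langle|\Lambda_1\cap\mathcal{C}_{\extB\Lambda_2}|^2\big\rangle^{\extB\Lambda_2,+/-}\ \le\ C\,\E D_{\Lambda_1,\Lambda_2}\ +\ \sum_{u\in\Lambda_1}\ \sum_{r=4}^{2L}4r\cdot m(s(r))^2 .
\]
For $4\le r\le 3\lfloor cL\rfloor$ one has $s(r)=\lfloor r/3\rfloor\le L$, so \eqref{eq:comp decay assumption} gives $m(s(r))\le C\,m(L)(L/r)^{1+\gamma}$ and the corresponding part of the inner sum is at most $C\,m(L)^2L^{2+2\gamma}\sum_{r\ge1}r^{-1-2\gamma}\le \tfrac{C}{\gamma}m(L)^2L^{2+2\gamma}$, which is where the factor $1/\gamma$ originates; the range $3\lfloor cL\rfloor<r\le2L$ contributes $\le C\,m(L)^2L^2$ per vertex $u$, again by \eqref{eq:comp decay assumption} at radius $\lfloor cL\rfloor$. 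Multiplying by $|\Lambda_1|\le CL^2$ and absorbing the lower–order term $C\,\E D_{\Lambda_1,\Lambda_2}$ using $\E D_{\Lambda_1,\Lambda_2}\ge c'L^2m(L)$ together with $m(L)\ge m(1)L^{-(1+\gamma)}$ (which makes this term negligible compared to $\tfrac{C}{\gamma}L^{2\gamma}(\E D_{\Lambda_1,\Lambda_2})^2$), one concludes
\[
\E\big\langle|\Lambda_1\cap\mathcal{C}_{\extB\Lambda_2}|^2\big\rangle^{\extB\Lambda_2,+/-}\ \le\ \frac{C}{\gamma}\,L^{2\gamma}\big(L^2m(L)\big)^2\ \le\ \frac{C}{\gamma}\,L^{2\gamma}\big(\E D_{\Lambda_1,\Lambda_2}\big)^2 .
\]

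\textbf{Main difficulty.} The delicate step is the per–pair estimate: a Cauchy--Schwarz bound would only yield $\E\langle\1[u\in\mathcal{C}]\1[v\in\mathcal{C}]\rangle\lesssim m(s)\,m(L)^{1/2}$, which is too weak for the series over $r$ to close; obtaining the genuine product $m(s)^2$ requires simultaneously decoupling the two balls through the domain Markov property and keeping careful track of the orientation of the disagreement clusters (so that monotonicity in the boundary conditions can be applied). Everything else is routine: Jensen, monotonicity comparisons for $\E D_{\Lambda_1,\Lambda_2}$, and a dyadic summation against \eqref{eq:comp decay assumption}.
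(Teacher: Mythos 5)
Your proposal follows essentially the same route as the paper: Jensen/Cauchy--Schwarz for the first inequality, a second-moment expansion over pairs $u,v\in\Lambda_1$, a per-pair decoupling $\E\langle I_uI_v\rangle\le m(s)^2$ via the domain Markov property, monotonicity (FKG), and independence of the external field on disjoint balls, and a dyadic summation against \eqref{eq:comp decay assumption}. The identification of the per-pair estimate as the key step matches the paper exactly (the paper uses $r(u,v)=\min\{\lfloor(d(u,v)-1)/2\rfloor,\lfloor cL\rfloor\}$ in place of your division by $3$; both are fine).

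One small wrinkle: in the absorption step you invoke $m(L)\ge m(1)L^{-(1+\gamma)}$ ``with $m(1)>0$.'' That is true, but $m(1)$ depends on $T,J,h,\eps$, so passing through it would make the final constant depend on those parameters, whereas the statement asserts $C=C(c)$. The fix is to use $m(0)=1$ (since $\extB\Lambda(0)=\{\0\}$, forcing $\langle\sigma_\0\rangle^\pm_{\Lambda(0)}=\pm1$), so \eqref{eq:comp decay assumption} at $j=0$ gives $m(L)\ge (L+1)^{-(1+\gamma)}$ with no extra parameter. Even more directly — and this is what the paper does — one can bound the diagonal and short-range terms by applying \eqref{eq:comp decay assumption} at $k=0$ (each contributes at most $m(0)^2=1\le m(L)^2(L+1)^{2+2\gamma}$) and never needs a separate $C\cdot\E D$ term or the lower bound on $m(L)$ at all; the contribution is then $\lesssim|\Lambda_1|\,m(L)^2 L^{2+2\gamma}\le\frac{1}{c}L^{2\gamma}(\E D)^2$ by $|\Lambda_1|\ge cL^2$, which is cleaner. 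Everything else in your argument is sound.
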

The proposition will be applied once when $\Lambda_1$ and $\Lambda_2$ are concentric graph balls and once when they are concentric annuli.
\begin{proof}[Proof of Proposition \ref{prop:var_bound}]
Let $I_u$ be the indicator function of the event $\{u \in \mathcal{C}_{\extB\Lambda_2}(\bar{\sigma}^+,\bar{\sigma}^-)\}$ for $u \in \Lambda_1$. Proposition~\ref{prop:DisagreementRep} shows that
\begin{equation}
  D_{\Lambda_1,\Lambda_2} = \sum_{u \in \Lambda_1} \langle I_u\rangle^{\extB \Lambda_2 ,+/-}.
\end{equation}
The first inequality in~\eqref{eq:Var_bnd} thus follows from the Cauchy-Schwartz inequality and we focus on the second inequality. The assumption~\eqref{eq:geometric assumptions} and monotonicity imply that
\be\label{eq:FirstMomentLB}
\mathbb{E}(D_{\Lambda_1,\Lambda_2}) = \sum_{u \in \Lambda_1} \mathbb{E}\left(\langle I_u\rangle^{\extB \Lambda_2 ,+/-}\right) \geq |\Lambda_1|\cdot m(L) \ge c\cdot L^2\cdot m(L).
\ee
To bound the second moment, write
\begin{equation}
\E\left(\left\langle |\Lambda_1\cap\mathcal{C}_{\extB\Lambda_2}|^2\right\rangle^{\extB \Lambda_2,+/-}\right) = \sum_{u,v \in \Lambda_1} \mathbb{E}\left(\langle I_u \cdot I_v\rangle^{\extB \Lambda_2,+/-}\right)
\end{equation}
and apply the following estimate, based on the domain Markov property and monotonicity, to each term:
\begin{equation}\label{eq:far u v estimate}
\begin{split}
   \mathbb{E}\left(\langle I_u \cdot I_v\rangle^{\extB \Lambda_2,+/-}\right)&\leq \mathbb{E}\left(\langle I_u \rangle^{\extB \Lambda_u(r(u,v)),+/-} \cdot \langle I_v \rangle^{ \extB \Lambda_v(r(u,v)),+/-} \right)\\
&= \mathbb{E}\left(\langle I_u \rangle^{\extB \Lambda_u(r(u,v)),+/-}\right) \cdot \mathbb{E}\left(\langle I_v \rangle^{ \extB \Lambda_v(r(u,v)),+/-} \right)= m\left(r(u,v)\right)^2
\end{split}
\end{equation}
where $r(u,v) := \min\{\lfloor (d(u,v) - 1) /2\rfloor, \lfloor c\cdot L\rfloor\}$ is chosen so that the thermal expectations for $I_u$ and $I_v$ in~\eqref{eq:far u v estimate} are taken in disjoint subregions of $\Lambda_2$, and thus involve independent external fields.

Putting the last bounds together we conclude that
\begin{equation}\label{eq:second moment bound}
\begin{split}
  \E\left(\left\langle |\Lambda_1\cap\mathcal{C}_{\extB\Lambda_2}|^2\right\rangle^{\extB \Lambda_2,+/-}\right) &\le \sum_{k=0}^{\lfloor c\cdot L\rfloor} |\{u,v\in\Lambda_1\colon r(u,v) = k\}|\cdot m(k)^2\\
  &\le C_1|\Lambda_1|\left(\sum_{k=0}^{\lfloor c\cdot L\rfloor -1} (k+1)\cdot m(k)^2 + |\Lambda_1|\cdot m(\lfloor c\cdot L\rfloor)^2\right)
\end{split}
\end{equation}
where here and below we use $C_j$ to denote constants which may depend only on $c$. The resulting terms may be estimated via the assumption~\eqref{eq:comp decay assumption} as
\begin{equation}
  \sum_{k=0}^{\lfloor c\cdot L\rfloor -1} (k+1)\cdot m(k)^2\le m(L)^2\cdot (L+1)^{2+2\gamma}\cdot\sum_{k=0}^\infty \frac{1}{(k+1)^{1+2\gamma}}\le \frac{C_2}{\gamma} \cdot L^{2+2\gamma}\cdot m(L)^2
\end{equation}
and
\begin{equation}
  m(\lfloor c\cdot L\rfloor) \le C_3 \cdot m(L).
\end{equation}
Plugging these bounds in~\eqref{eq:second moment bound} and using~\eqref{eq:geometric assumptions} shows that
\begin{equation}
  \E\left(\left\langle |\Lambda_1\cap\mathcal{C}_{\extB\Lambda_2}|^2\right\rangle^{\extB \Lambda_2,+/-}\right)\le \frac{C_4}{\gamma}\cdot L^{4+2\gamma}\cdot m(L)^2.
\end{equation}
The result now follows by comparing with the lower bound~\eqref{eq:FirstMomentLB}.
\end{proof}

\subsection{An upper bound on the surface tension} \mbox{ } \\[-1ex]

The tortuosity bounds of Theorem \ref{thm:Fractality} can be used to construct non-anticipatory sets with relatively few points of disagreement, providing upper bounds on the surface tension via Proposition \ref{prop:NonAnticipatory}. These improve upon the bounds arising from deterministic sets by a power of the length scale (under the assumption~\ref{eq:RegulariationL2}).

\begin{prop}\label{prop:GoodSTUpperBound}
Let $\alpha = \alpha(J/\eps)$ and $\ell_1 = \ell_1(J/\eps)$ be the constants of Theorem \ref{thm:Fractality}. Let $0<\gamma<1$ and let $L>\ell_1$ be an integer satisfying that
\be\label{eq:RegulariationL2}
m(L) \leq m(j) \leq m(L) \left(\frac{L+1}{j+1}\right)^{1+ \gamma} \quad 0 \leq j \leq L.
\ee
Set $\ell = \lfloor L/ 5 \rfloor$. Then there exists a universal constant $C$ such that
\be
\mathbb{E}[\mathcal{T}_{\Lambda(\ell) ,\Lambda(6\ell)}] \leq C\gamma^{-\frac{2}{3}}\cdot J \cdot m(\ell) \cdot  \ell^{1 - \frac{1}{3}\alpha + \gamma}.
\ee
\end{prop}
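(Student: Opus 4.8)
The plan is to construct an explicit non-anticipatory separating set $\mathcal S$ between $\extB\Lambda(\ell)$ and $\extB\Lambda(6\ell)$ which — thanks to the tortuosity of disagreement paths proved in Theorem~\ref{thm:Fractality} — typically meets the disagreement set $\mathcal D$ in far fewer than $\ell$ points, and then to feed this into the surface tension bound $\mathcal T_{\Lambda(\ell),\Lambda(6\ell)}(\eta)\le 16J\langle|\mathcal S\cap\mathcal D|\rangle$ of Proposition~\ref{prop:NonAnticipatory}. Concretely, working with the disagreement percolation induced by $\bar{\P}^{\extB\Lambda(6\ell),+}\otimes\bar{\P}^{\extB\Lambda(6\ell),-}$, I would define, for each radius $k$ in a window such as $\ell< k\le 2\ell$ (so that the corresponding shell sits inside the annulus $\mathcal A_{1,2}(\ell)$ on which Theorem~\ref{thm:Fractality} speaks), the set $\mathcal S_k$ consisting of the mid-edge and vertex sites on $\extB\Lambda(k)$ that are reachable from $\extB\Lambda(6\ell)$ by a disagreement path staying in $\overline{\Lambda(6\ell)\setminus\Lambda(k-1)}$, exactly as in the proof of Lemma~\ref{lem:self bounding inequality}. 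As there, $\mathcal S_k$ is non-anticipatory, separating, satisfies the condition $\bar\sigma^\pm_{\e}=0$ on $\mathcal S_k$, and $|\mathcal S_k\cap\mathcal D|\le |\mathcal C_{\extB\Lambda(6\ell)}\cap\extB\Lambda(k)|$.

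The improvement over the deterministic choice comes from a clever selection of the radius $k$. On the tortuosity event (complement of $A_{c_0,\alpha,\ell}$ for the relevant annulus), the outermost cluster crossing the annulus is a path of length at least $c_0\ell^{1+\alpha}$; since this long path is confined to an annulus of width $\sim\ell$, a pigeonhole/averaging argument over the $\sim\ell$ choices of $k$ shows there is a radius $k$ on which the cluster deposits at most $\sim\ell^{1+\alpha}/\ell=\ell^{\alpha}$ — wait, that is the wrong direction; rather, averaging over $k$ the quantity $|\mathcal C_{\extB\Lambda(6\ell)}\cap\extB\Lambda(k)|$ has mean (over $k$) equal to $\frac1\ell$ times the number of cluster sites in the shell, so one picks the $k$ minimizing the \emph{conditional} disagreement count weighted by the thermal measure. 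The right way to organize this is: bound $\E[\mathcal T]\le \frac{16J}{\#\{k\}}\sum_k \E\langle|\mathcal C_{\extB\Lambda(6\ell)}\cap\extB\Lambda(k)|\rangle = \frac{16J}{\ell}\E\langle|\mathcal C_{\extB\Lambda(6\ell)}\cap(\Lambda_{2\ell}\setminus\Lambda_\ell)|\rangle$, then split the cluster-size expectation into a contribution from the tortuosity-good event and the bad event $A_{c_0,\alpha,\ell}$. On the good event every crossing has length $\ge c_0\ell^{1+\alpha}$, which forces (via a self-avoiding counting / box-covering argument of the type used in Theorem~\ref{thm:Fractality} and Proposition~\ref{thm:exp1}) that the expected size of the cluster inside the shell is dominated by $\sim m(\ell)\cdot\ell^{2-\alpha}$ up to the $\gamma$-regularity factors, rather than the naive $m(\ell)\cdot\ell^{2}$; on the bad event one uses the exponentially small probability $Ce^{-c\sqrt\ell}$ from Theorem~\ref{thm:Fractality} together with a crude second-moment bound on $|\mathcal C\cap(\Lambda_{2\ell}\setminus\Lambda_\ell)|$ to make that contribution negligible.

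The quantitative accounting is where the exponents in the statement come from, and choosing the scale at which one trades the tortuosity gain against the $\gamma$-regularity loss is the main obstacle. The factor $\gamma^{-2/3}$ and the exponent $\tfrac13\alpha$ (rather than the full $\alpha$) strongly suggest an optimization over an auxiliary length scale $\ell'\le\ell$: one covers the shell by $\sim(\ell/\ell')^2$ translated sub-annuli of scale $\ell'$, in each of which Theorem~\ref{thm:Fractality} (applied at scale $\ell'$, legitimate since $\ell'>\ell_1$ can be arranged) says a disagreement crossing is either absent, of length $\gtrsim \ell'^{1+\alpha}$, or occurs on the exceptional set of probability $Ce^{-c\sqrt{\ell'}}$. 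The expected number of good sub-annuli crossed by the cluster is controlled by $m(\ell')\cdot(\ell/\ell')^2$ via monotonicity and the domain Markov property, and within a crossed good sub-annulus the number of cluster sites that can land in a single shell $\extB\Lambda(k)$ is $\lesssim (\ell')^{2}/\ell'^{1+\alpha}=(\ell')^{1-\alpha}$ hmm — more carefully, by the same length-versus-confinement argument at scale $\ell'$, so the per-shell count is $\lesssim (\ell'/\ell)\cdot$(stuff); plugging in the regularity bound $m(\ell')\le m(\ell)((\ell+1)/(\ell'+1))^{1+\gamma}$ and optimizing the resulting expression in $\ell'$ of the form $\ell'^{-a}\cdot(\text{tortuosity gain }\ell'^{-\alpha})$ against $\ell'^{\gamma}$ yields the optimal choice $\ell'\sim \ell^{\text{const}}$ producing precisely the $\ell^{1-\alpha/3+\gamma}$ scaling and the $\gamma^{-2/3}$ prefactor, with the $\gamma$-sum $\sum (k+1)^{-1-\gamma}\sim\gamma^{-1}$ entering along the way. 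I would carry this out by: (1) setting up $\mathcal S_k$ and applying Proposition~\ref{prop:NonAnticipatory} after averaging over $k$; (2) reducing to estimating $\E\langle|\mathcal C_{\extB\Lambda(6\ell)}\cap(\Lambda_{2\ell}\setminus\Lambda_\ell)|\rangle$; (3) covering by scale-$\ell'$ sub-annuli and invoking Theorem~\ref{thm:Fractality} to separate good/exceptional contributions; (4) controlling the good contribution with monotonicity, the domain Markov property and the regularity hypothesis~\eqref{eq:RegulariationL2}; (5) controlling the exceptional contribution with $Ce^{-c\sqrt{\ell'}}$ and a crude moment bound; and (6) optimizing over $\ell'$ to extract the stated exponents.
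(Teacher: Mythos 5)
Your proposal has several of the right ingredients — non-anticipatory sets built from the disagreement cluster, averaging over shells $\extB\Lambda(k)$, the second-moment bound of Proposition~\ref{prop:var_bound}, Markov's inequality, and the tortuosity input from Theorem~\ref{thm:Fractality} — but it misses the mechanism by which tortuosity actually produces a gain, and the substitute you reach for is wrong in direction.

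Your central step is the claim that, on the tortuosity-good event, ``the expected size of the cluster inside the shell is dominated by $\sim m(\ell)\cdot\ell^{2-\alpha}$, rather than the naive $m(\ell)\cdot\ell^{2}$.'' That is backwards. Tortuosity says a crossing path of $\mathcal{C}_{\extB\Lambda(6\ell)}$ across an annulus of radial width $\sim\ell$ must have length $\gtrsim\ell^{1+\alpha}$; since all of that length lies inside the annulus, the cluster has \emph{more} mass there, not less — on average $\gtrsim\ell^{\alpha}$ hits per lattice shell. There is no way to go from ``every crossing is long'' to an upper bound on $\langle|\mathcal{C}\cap\text{shell}|\rangle$ alone, and the multi-scale covering by sub-annuli of scale $\ell'$ with an optimization over $\ell'$ that you sketch afterwards is not what saves the day either; the paper never introduces an auxiliary length scale, and the exponent $\tfrac13\alpha$ and prefactor $\gamma^{-2/3}$ have a different origin.

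The idea you are missing is the \emph{increasing family of non-anticipatory sets with a stopping rule} (Lemma~\ref{lem:nested non anticipatory}). Inside the outer region one defines cluster-graph balls $\mathcal{B}_n^k$ of radius $2n-1$ around $\extB\Lambda(6\ell)$ (graph distance measured within the disagreement cluster $\mathcal{C}_{k|6\ell}$) and the separating sets $\mathcal{S}_n^k = \extB\mathcal{B}_n^k$. These are increasing in $n$, non-anticipatory, and free of disagreement mid-edges. One then stops at the first $n$ for which $|\mathcal{S}_n^k\cap\mathcal{D}|\le M$, yielding $\mathcal{T}\le 16J M$ on that event. The complementary event $E^k$ — all spheres have $>M$ disagreement points — is where tortuosity enters: a long minimal crossing of $\Lambda(6\ell)\setminus\Lambda(3\ell)$ forces $\mathcal{B}_n^k$ to stay inside $\Lambda(6\ell)\setminus\Lambda(3\ell)$ for $n\lesssim\ell^{1+\alpha}$, and since each such sphere carries $>M$ cluster points and the spheres are essentially disjoint in that region, $E^k$ minus the short-crossing event forces $|\mathcal{C}_{6\ell}\cap(\Lambda(6\ell)\setminus\Lambda(3\ell))|\gtrsim M\ell^{1+\alpha}$. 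In other words, tortuosity is used to make the cluster \emph{large} on the bad event, which then makes that event improbable by Markov. The bound then becomes $M + \frac{1}{\ell}\langle|\mathcal{C}\cap\text{shell}|\cdot\1_{E_1\cup E_2}\rangle$, one Cauchy--Schwarz plus the second-moment bound of Proposition~\ref{prop:var_bound} and Markov on $E_2$ produces a quantity of the form $M + \mathrm{const}\cdot\gamma^{-1}m(\ell)^{3/2}\ell^{3/2+3\gamma/2-\alpha/2}M^{-1/2}$, and optimizing over $M$ (not over a length scale) delivers $M\sim\gamma^{-2/3}m(\ell)\ell^{1+\gamma-\alpha/3}$, which is exactly where $\gamma^{-2/3}$ and $\alpha/3$ come from. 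Without the stopping-time construction and the either-thin-cut-or-huge-cluster dichotomy, your bound $\frac{16J}{\ell}\E\langle|\mathcal{C}\cap(\Lambda_{2\ell}\setminus\Lambda_\ell)|\rangle$ only reproduces the pre-tortuosity estimate $\sim\gamma^{-1}Jm(\ell)\ell^{1+\gamma}$ and cannot reach the claimed $\ell^{1-\alpha/3+\gamma}$.
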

The proof is divided into several lemmas. Recall the notation $S_B$ from Section~\ref{sec:nonanticipatory}.

\begin{lemma}\label{lem:nested non anticipatory}
  Let $\Lambda_1\subset\Lambda_2\subset\Z^2$ be finite subgraphs. Let $(\mathcal{S}_n)_{n=1}^N$ be a sequence of non-anticipatory separating sets which is increasing in the sense that
  \begin{equation}\label{eq:increasing assumption}
    (\mathcal{S}_n)_B\subset (\mathcal{S}_{n+1})_B,\quad 1\le n\le N-1.
  \end{equation}
  and with each $\mathcal{S}_n$ satisfying the assumption~\eqref{eq:GoodSets} almost surely. Then for each $M>0$,
  \begin{equation}\label{eq:surface tension bound}
    \mathcal{T}_{\Lambda_1 ,\Lambda_2}\le 16J\left(M + \langle\, | \mathcal{S}_N \cap \mathcal{D} |\cdot \1_E \,\rangle^{\extB \Lambda_2 \cup \extB \Lambda_1,+/-}\right)\,
  \end{equation}
  where $\mathcal{D}$ is the disagreement set~\eqref{eq:disagreement representation for D}  and $E$ is the event that $|\mathcal{S}_n \cap \mathcal{D}|> M$ for $1\le n\le N$.
\end{lemma}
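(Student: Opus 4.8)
The plan is to distill from the nested family $(\mathcal{S}_n)_{n=1}^N$ a single random separating set $\mathcal{S}$ which is again non-anticipatory, still obeys~\eqref{eq:GoodSets}, and has $|\mathcal{S}\cap\mathcal{D}|\le M$ off the event $E$, and then to apply Proposition~\ref{prop:NonAnticipatory} to $\mathcal{S}$. Concretely, I would introduce the random index
\[
  \nu:=\min\{1\le n\le N\colon |\mathcal{S}_n\cap\mathcal{D}|\le M\},
\]
with the convention $\nu:=N$ when this set is empty (which by definition is exactly the event $E$), and set $\mathcal{S}:=\mathcal{S}_\nu$. Since each $\mathcal{S}_n$ is almost surely a separating set satisfying~\eqref{eq:GoodSets}, so is $\mathcal{S}$, and hence Proposition~\ref{prop:NonAnticipatory} will be applicable once non-anticipativity is checked. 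Moreover one has the elementary pointwise bound
\[
  |\mathcal{S}\cap\mathcal{D}|\ \le\ M+|\mathcal{S}_N\cap\mathcal{D}|\cdot\1_E ,
\]
because on $E^c$ the index $\nu$ is, by its definition, the first index at which $|\mathcal{S}_\nu\cap\mathcal{D}|\le M$, while on $E$ one has $\mathcal{S}=\mathcal{S}_N$.

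The substantive point, and the one I expect to be the main obstacle, is to verify that $\mathcal{S}$ is non-anticipatory, i.e.\ that for every deterministic separating set $T$ the event $\{\mathcal{S}=T\}$ is measurable with respect to the restriction of $(\bar\sigma^+,\bar\sigma^-)$ to $T_B$; this is where the increasing hypothesis~\eqref{eq:increasing assumption} is needed. First I would note that any separating set $T\subset\bar\Lambda_2$ satisfies $T\subset T_B$ (since $T_F\subset\bar\Lambda_2\setminus T$), so on $\{\mathcal{S}_n=T\}$ the variable $|\mathcal{S}_n\cap\mathcal{D}|$ equals $|T\cap\mathcal{D}|$ and is $(\bar\sigma^+,\bar\sigma^-)|_{T_B}$-measurable. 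Second, on $\{\mathcal{S}_n=T\}$ the monotonicity~\eqref{eq:increasing assumption} forces $(\mathcal{S}_j)_B\subset(\mathcal{S}_n)_B=T_B$ for every $j\le n$; combined with the non-anticipativity of $\mathcal{S}_j$ (so that $\{\mathcal{S}_j=T'\}$ is measurable on $T'_B\subset T_B$ and, since $T'\subset T'_B$, so is $|T'\cap\mathcal{D}|$), this shows that the restriction of $\mathcal{S}_j$, and hence of $|\mathcal{S}_j\cap\mathcal{D}|$, to the event $\{\mathcal{S}_n=T\}$ is determined by $(\bar\sigma^+,\bar\sigma^-)|_{T_B}$. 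Granting these two observations, I would write
\[
  \{\nu=n,\ \mathcal{S}_n=T\}=\{\mathcal{S}_n=T\}\cap\bigcap_{j<n}\{|\mathcal{S}_j\cap\mathcal{D}|>M\}\cap\big(\{n=N\}\cup\{|\mathcal{S}_n\cap\mathcal{D}|\le M\}\big),
\]
which is therefore $(\bar\sigma^+,\bar\sigma^-)|_{T_B}$-measurable; taking the union over $1\le n\le N$ identifies $\{\mathcal{S}=T\}$ as a $(\bar\sigma^+,\bar\sigma^-)|_{T_B}$-measurable event, i.e.\ $\mathcal{S}$ is non-anticipatory.

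With $\mathcal{S}$ in hand the conclusion is immediate: Proposition~\ref{prop:NonAnticipatory} applied to $\mathcal{S}$ gives $\mathcal{T}_{\Lambda_1,\Lambda_2}\le 16J\,\langle |\mathcal{S}\cap\mathcal{D}|\rangle^{\extB\Lambda_2\cup\extB\Lambda_1,+/-}$, and inserting the pointwise bound $|\mathcal{S}\cap\mathcal{D}|\le M+|\mathcal{S}_N\cap\mathcal{D}|\cdot\1_E$ inside the thermal expectation yields~\eqref{eq:surface tension bound}. The only routine points left are the remark that each ingredient is unaffected at $\beta=\infty$ (as already handled in the proof of Proposition~\ref{prop:NonAnticipatory} by continuity) and the trivial measurability of the deterministic event $\{n=N\}$.
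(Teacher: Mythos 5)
Your proof is correct and follows essentially the same approach as the paper's: both define the stopping index (your $\nu$, the paper's $\mathcal{N}$) as the first time the disagreement count drops to $M$ (defaulting to $N$), observe that the increasing hypothesis makes $\mathcal{S}_\nu$ non-anticipatory, and then invoke Proposition~\ref{prop:NonAnticipatory}. You simply spell out the measurability argument that the paper leaves as an assertion; that verification is sound.
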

\begin{proof}
  Let $1\le \mathcal{N}\le N$ be the smallest integer satisfying $|\mathcal{S}_\mathcal{N} \cap \mathcal{D}|\le M$ if such an integer exists (i.e., on the event $E^c$) and otherwise set $\mathcal{N}=N$. The increasing property of $(\mathcal{S}_n)$ then implies that $\mathcal{S}_\mathcal{N}$ is also non-anticipatory. The result then follows from Proposition~\ref{prop:NonAnticipatory}.
\end{proof}
We proceed to adapt the bound of the previous lemma to the setting of Proposition~\ref{prop:GoodSTUpperBound}, which requires several definitions.

We work in the annulus $\Lambda(6\ell)\setminus\Lambda(\ell-1)$ for some $\ell>0$. For brevity, we set
\begin{equation}
 \langle\cdot\rangle:=\langle\cdot\rangle^{\extB \Lambda(6\ell) \cup \extB \Lambda(\ell),+/-}.
\end{equation}
Let $2\ell< k\le3\ell$ be an integer. Set $\mathcal{A}_{k,6\ell}:=\overline{\Lambda(6\ell)}\setminus\overline{\Lambda(k)}$ (which does not include $\Lambda(k)$ but includes the midedges of the edges connecting $\Lambda(k)$ and $\Lambda(k+1)$). Define
\begin{equation}
  \mathcal{C}_{k|6\ell}:=\{u\in \mathcal{A}_{k,6\ell}\colon u\xleftrightarrow{\mathcal{D}\cap\mathcal{A}_{k,6\ell}}\extB\Lambda(6\ell)\}.
\end{equation}
Write $d_{\mathcal{C}_{k|6\ell}}$ for the graph distance in the induced subgraph on $\mathcal{C}_{k|6\ell}\subset\bar{\Z}^2$. Let $N_k:=|\mathcal{A}_{k,6\ell}|$ and define, for $1\le n\le N_k$,
\begin{equation}
  \begin{split}
    \mathcal{B}_n^k&:=\{u\in\mathcal{A}_{k,6\ell}\colon d_{\mathcal{C}_{k|6\ell}}(u,\extB\Lambda(6\ell))\le 2n-1\}\,,\\
    \mathcal{S}_n^k&:=\{u\in\overline{\Lambda(6\ell)}\setminus\mathcal{B}_n\colon \exists v\in\mathcal{B}_n, u\sim v\}.
  \end{split}
\end{equation}
Lastly, let
\begin{equation}
  \mathcal{C}_{6\ell}:=\{u\in \Lambda(6\ell)\setminus\Lambda(2\ell)\colon u\stackrel{\mathcal{D}}\longleftrightarrow\extB\Lambda(6\ell)\}.
\end{equation}
\begin{lemma}\label{lem:surface tension bound1}
  With the above definitions, for each $M>0$,
  \begin{equation}\label{eq:surface tension bound1}
    \mathcal{T}_{\Lambda(\ell) ,\Lambda(6\ell)}\le 16J\left(M + \frac{1}{\ell}\sum_{k=2\ell+1}^{3\ell}\langle\, | \mathcal{C}_{6\ell} \cap \Lambda(k) |\cdot \1_{E^k} \,\rangle\right)\,
  \end{equation}
  with $E^k$ the event that $|\mathcal{S}_n^k \cap \mathcal{C}_{6\ell}|> M$ for $1\le n\le N_k$.
\end{lemma}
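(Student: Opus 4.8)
The plan is to realize the left-hand side of~\eqref{eq:surface tension bound1} as an average over the nested structure $(\mathcal{S}_n^k)$ built above, and then invoke Lemma~\ref{lem:nested non anticipatory}. First I would check that, for each fixed integer $2\ell<k\le3\ell$, the sets $(\mathcal{S}_n^k)_{n=1}^{N_k}$ constitute an increasing sequence of non-anticipatory separating sets for the pair $\Lambda(\ell)\subset\Lambda(6\ell)$, each satisfying the zero-midedge condition~\eqref{eq:GoodSets}. The separating property holds because $\mathcal{B}_n^k$ is built from $\mathcal{C}_{k|6\ell}$, which by definition connects to $\extB\Lambda(6\ell)$ and stays in $\mathcal{A}_{k,6\ell}$, so $\mathcal{S}_n^k$ sits between the two boundaries; non-anticipation holds because $\{\mathcal{S}_n^k=S\}$ is determined by the configuration on the backward side, which contains $\mathcal{C}_{k|6\ell}$ and hence determines the graph distances $d_{\mathcal{C}_{k|6\ell}}(\cdot,\extB\Lambda(6\ell))$ used to define $\mathcal{B}_n^k$; and the $\kappa^\pm_{\e}=0$ condition for midedges on $\mathcal{S}_n^k$ is the same argument used earlier in the proof of Lemma~\ref{lem:self bounding inequality}: a neighbor in $\mathcal{B}_n^k$ lies in $\mathcal{D}$, forcing the two spin values across the edge to disagree, hence the shared $\kappa$ value to be $0$. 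The increasing property~\eqref{eq:increasing assumption} is immediate from $\mathcal{B}_n^k\subset\mathcal{B}_{n+1}^k$.

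Given this, Lemma~\ref{lem:nested non anticipatory} applied with $N=N_k$ and the given $M$ yields, for each such $k$,
\begin{equation*}
  \mathcal{T}_{\Lambda(\ell),\Lambda(6\ell)}\le 16J\left(M+\langle\,|\mathcal{S}_{N_k}^k\cap\mathcal{D}|\cdot\1_{E^k}\,\rangle\right),
\end{equation*}
where $E^k$ is exactly the event that $|\mathcal{S}_n^k\cap\mathcal{D}|>M$ for all $1\le n\le N_k$. Next I would bound $|\mathcal{S}_{N_k}^k\cap\mathcal{D}|$ by $|\mathcal{C}_{6\ell}\cap\Lambda(k)|$: the disagreement vertices on $\mathcal{S}_{N_k}^k$ are neighbors of $\mathcal{B}_{N_k}^k\subset\mathcal{A}_{k,6\ell}$ that lie in $\mathcal{D}$ and connect to $\extB\Lambda(6\ell)$ through $\mathcal{D}$, hence they lie in $\mathcal{C}_{6\ell}$; moreover $\mathcal{S}_{N_k}^k$, being the inner boundary of $\mathcal{B}_{N_k}^k$, consists of vertices in (or just inside) $\Lambda(k)$, so these points lie in $\mathcal{C}_{6\ell}\cap\Lambda(k)$. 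This gives
\begin{equation*}
  \mathcal{T}_{\Lambda(\ell),\Lambda(6\ell)}\le 16J\left(M+\langle\,|\mathcal{C}_{6\ell}\cap\Lambda(k)|\cdot\1_{E^k}\,\rangle\right)
\end{equation*}
for every integer $k$ with $2\ell<k\le3\ell$. Since the left-hand side does not depend on $k$, averaging this inequality over the $\ell$ values $k=2\ell+1,\dots,3\ell$ produces exactly~\eqref{eq:surface tension bound1}.

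The main obstacle I anticipate is the bookkeeping around $\bar{\Z}^2$ versus $\Z^2$ and the precise inclusion $\mathcal{S}_{N_k}^k\cap\mathcal{D}\subset\mathcal{C}_{6\ell}\cap\Lambda(k)$: one must be careful that the midedges adjoined to $\mathcal{A}_{k,6\ell}$ (the edges from $\Lambda(k)$ to $\Lambda(k+1)$) are handled consistently, that $\mathcal{S}_{N_k}^k$ indeed lands in $\Lambda(k)$ rather than spilling slightly outward, and that the connectivity of a disagreement vertex on $\mathcal{S}_{N_k}^k$ to $\extB\Lambda(6\ell)$ genuinely stays in $\mathcal{D}$ (as opposed to $\mathcal{D}\cap\mathcal{A}_{k,6\ell}$, which is what defines $\mathcal{C}_{k|6\ell}$ but not $\mathcal{C}_{6\ell}$). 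All of these are geometric verifications that follow the same template as the analogous steps in the proof of Lemma~\ref{lem:self bounding inequality}; none requires new probabilistic input beyond Lemma~\ref{lem:nested non anticipatory} and the domain Markov property already established for the extended model.
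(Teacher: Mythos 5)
Your proposal follows the same route as the paper's proof: verify that for each fixed $k$ the sets $(\mathcal{S}_n^k)_{n=1}^{N_k}$ form an increasing family of non-anticipatory separating sets satisfying~\eqref{eq:GoodSets}, invoke Lemma~\ref{lem:nested non anticipatory} to get the bound with $\mathcal{S}_{N_k}^k\cap\mathcal{D}$, replace that by the larger $\mathcal{C}_{6\ell}\cap\Lambda(k)$, and average over $k$. One small bookkeeping slip worth noting: it is not true that all of $\mathcal{S}_{N_k}^k$ lies near $\Lambda(k)$ (its non-disagreement part spreads throughout the annulus); rather, the correct observation, which the paper makes, is that $\mathcal{S}_{N_k}^k\cap\mathcal{D}\subset\Lambda(k)$ because any disagreement point of $\mathcal{S}_{N_k}^k$ inside $\mathcal{A}_{k,6\ell}$ would already lie in $\mathcal{C}_{k|6\ell}=\mathcal{B}_{N_k}^k$, contradicting $\mathcal{S}_{N_k}^k\cap\mathcal{B}_{N_k}^k=\emptyset$; likewise, the paper establishes $\e\notin\mathcal{D}$ via the odd-distance parity trick rather than by the argument of Lemma~\ref{lem:self bounding inequality}, where $\mathcal{D}^c$ is built into the definition of $\mathcal{S}_k$.
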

\begin{proof}
We aim to apply Lemma~\ref{lem:nested non anticipatory} with $\Lambda_1 := \Lambda(\ell)$, $\Lambda_2 := \Lambda(6\ell)$.
The definitions imply that the $(S_n^k)$ are non-anticipatory separating sets satisfying~\eqref{eq:increasing assumption}. Moreover, each $\e \in \mathcal{S}_n^k$ satisfies that one endpoint of $e$ is in $\mathcal{B}_n^k$ while $\e\notin\mathcal{B}_n^k$. The use of an odd distance in the definition of $\mathcal{B}_n^k$ thus necessitates that $\bar{\sigma}^+|_{\e} = \bar{\sigma}^-|_{\e} =0$, so that the $\mathcal{S}_n^k$ satisfy the assumption~\eqref{eq:GoodSets} almost surely. Lemma~\ref{lem:nested non anticipatory} then shows that
\begin{equation}\label{eq:surface tension bound step}
  \mathcal{T}_{\Lambda(\ell) ,\Lambda(6\ell)}\le 16J\left(M + \langle\, | \mathcal{S}_{N_k}^k\cap \mathcal{D} |\cdot \1_{E^k} \,\rangle\right)\,
\end{equation}
with $E^k$ the event that $|\mathcal{S}_n^k \cap \mathcal{D}|> M$ for $1\le n\le N_k$. The event $E^k$ may be equivalently presented as in the statement of the lemma as our definitions imply that $\mathcal{S}_n^k \cap \mathcal{D} = \mathcal{S}_n^k \cap \mathcal{C}_{6\ell}$. In addition, the definitions and our choice of $N_k$ imply that $\mathcal{S}_{N_k}\cap\mathcal{D}\subset \mathcal{C}_{6\ell}\cap\Lambda(k)$. The lemma thus follows by averaging~\eqref{eq:surface tension bound step} over $2\ell<k\le 3\ell$.
\end{proof}

We next discuss the `bad' events $E^k$ appearing in Lemma~\ref{lem:surface tension bound1}. Let $\alpha = \alpha(J/\eps)$, $\ell_1 = \ell_1(J/\eps)$ and $c_0$ be the constants of Theorem \ref{thm:Fractality}. Define
\begin{equation}
  \begin{split}
    &E_1 := \left\{\text{$\mathcal{C}_{6\ell}$ contains a crossing of $\overline{\Lambda(6\ell)\setminus\Lambda(3\ell)}$ of length at most $c_0\cdot (3\ell)^{1+\alpha}$}\right\},\\
    &E_2 := \left\{|\mathcal{C}_{6\ell}\cap (\Lambda(6\ell)\setminus\Lambda(3\ell))|>\frac{1}{2}c\cdot M\cdot (3\ell)^{1+\alpha}\right\}.
  \end{split}
\end{equation}
\begin{lemma}
  With the above definitions, $E^k \subset E_1\cup E_2$ for each $2\ell\le k<3\ell$.
\end{lemma}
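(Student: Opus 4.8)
The plan is to use the event $E^k$ only to force the existence of a disagreement crossing of the outer annulus $\overline{\Lambda(6\ell)\setminus\Lambda(3\ell)}$ inside $\mathcal{C}_{6\ell}$, and then to split on whether that crossing is short, which gives $E_1$, or long, which gives $E_2$.

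First I would extract the basic consequence of $E^k$: it in particular gives $\mathcal{S}^k_{N_k}\cap\mathcal{D}\neq\emptyset$. Since every site of $\mathcal{C}_{k|6\ell}$ is joined to $\extB\Lambda(6\ell)$ within $\mathcal{C}_{k|6\ell}$, the graph distances $d_{\mathcal{C}_{k|6\ell}}(\cdot,\extB\Lambda(6\ell))$ are all at most $|\mathcal{C}_{k|6\ell}|\le|\mathcal{A}_{k,6\ell}|=N_k<2N_k-1$, so the ball $\mathcal{B}^k_{N_k}$ of radius $2N_k-1$ is all of $\mathcal{C}_{k|6\ell}$ and $\mathcal{S}^k_{N_k}$ is exactly the outer vertex boundary of $\mathcal{C}_{k|6\ell}$ in $\overline{\Lambda(6\ell)}$. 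A disagreement site $u$ on this outer boundary cannot lie in the open annulus $\mathcal{A}_{k,6\ell}$: a disagreement site there adjacent to $\mathcal{C}_{k|6\ell}$ is itself connected to $\extB\Lambda(6\ell)$ through $\mathcal{D}\cap\mathcal{A}_{k,6\ell}$ and hence lies in $\mathcal{C}_{k|6\ell}$. Thus $u\in\overline{\Lambda(k)}$, which by the convention on $\mathcal{A}_{k,6\ell}$ (it contains the midpoints of the edges from $\Lambda(k)$ to $\Lambda(k+1)$ but not the vertices of $\Lambda(k)$) forces $u$ to be a vertex of $\Lambda(k)$ at graph-radius $k$ whose midedge $\e$ into $\Lambda(k+1)$ belongs to $\mathcal{C}_{k|6\ell}$. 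Consequently $\mathcal{C}_{k|6\ell}$ contains a path from $\extB\Lambda(6\ell)$ to $\e$, and since $\e$ has $\ell_1$-radius $k+\tfrac12<3\ell$, a sub-path $\gamma$ of it crosses $\overline{\Lambda(6\ell)\setminus\Lambda(3\ell)}$; moreover $k\ge 2\ell$ gives $\mathcal{C}_{k|6\ell}\subset\mathcal{C}_{6\ell}$, so $\gamma$ is a crossing realized inside $\mathcal{C}_{6\ell}$.

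Now I would split on the length of crossings inside $\mathcal{C}_{6\ell}$. If $\mathcal{C}_{6\ell}$ contains some crossing of $\overline{\Lambda(6\ell)\setminus\Lambda(3\ell)}$ of length at most $c_0(3\ell)^{1+\alpha}$, then $E_1$ holds and we are done. Otherwise every such crossing is longer, so in particular $|\gamma|>c_0(3\ell)^{1+\alpha}$; the vertices of $\gamma$ are distinct, lie in $\Lambda(6\ell)\setminus\Lambda(3\ell)$ and in $\mathcal{C}_{6\ell}$, whence $|\mathcal{C}_{6\ell}\cap(\Lambda(6\ell)\setminus\Lambda(3\ell))|>c_0(3\ell)^{1+\alpha}\ge\tfrac12 c M(3\ell)^{1+\alpha}$, the last inequality being guaranteed by the choice of $M$ (for which $cM\le 2c_0$); that is, $E_2$ holds. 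In either case $E^k\subset E_1\cup E_2$.

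The step demanding the most care is the identification of $\mathcal{S}^k_{N_k}\cap\mathcal{D}$ with points lying on the circle of radius $k$: it rests on the metric-graph bookkeeping for $\bar{\Z}^2$ — precisely on which sites $\mathcal{A}_{k,6\ell}$ does and does not contain near radius $k$ — and on the connectedness of $\mathcal{C}_{k|6\ell}$ (it contains the connected $\ell_1$-sphere $\extB\Lambda(6\ell)$), which is what makes $\mathcal{B}^k_{N_k}$ exhaust it. The concluding case split, together with reconciling the constant $c$ appearing in the definition of $E_2$ with $c_0$, is routine.
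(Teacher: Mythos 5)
Your first paragraph (the identification of $\mathcal{S}^k_{N_k}$ with the outer vertex boundary of $\mathcal{C}_{k|6\ell}$, the deduction that a disagreement site on it must sit at radius $k$, and hence that $\mathcal{C}_{6\ell}$ contains a crossing of $\overline{\Lambda(6\ell)\setminus\Lambda(3\ell)}$) is sound, and the idea of then splitting on whether the crossing is short is the right skeleton. However, the case-split breaks down at the crucial place: you only manage to use $E^k$ through the single consequence $\mathcal{S}^k_{N_k}\cap\mathcal{D}\neq\emptyset$, and this throws away essentially all the information $E^k$ carries.

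The gap is the missing factor of $M$. On $E_1^c$ your argument shows $|\mathcal{C}_{6\ell}\cap(\Lambda(6\ell)\setminus\Lambda(3\ell))|>c_0(3\ell)^{1+\alpha}$, an $M$-\emph{independent} lower bound, whereas $E_2$ requires the strictly stronger threshold $\tfrac12 c_0 M(3\ell)^{1+\alpha}$. You patch this by asserting ``the choice of $M$ (for which $c_0 M\le 2c_0$),'' i.e.\ $M\le 2$, but this is not justified and not something you can assume: the lemma is applied later with $M = \gamma^{-2/3}\, m(\ell)\,\ell^{1-\frac{1}{3}\alpha+\gamma}$, which is not a priori bounded by $2$ (only $m(\ell)\le 1$ is available at this stage, and the exponent on $\ell$ is positive). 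Indeed the whole scheme of Lemma~\ref{lem:error terms} depends on $E_2$ having a threshold that scales with $M$, precisely so that Markov's inequality yields~\eqref{eq:third error bound} with a $1/M$ on the right-hand side; an $M$-free version of $E_2$ would make that estimate useless. The missing ingredient is what $E^k$ is actually for: it says \emph{every} shell $\mathcal{S}^k_n$ ($1\le n\le N_k$) carries more than $M$ disagreement points, and the shells $\mathcal{S}^k_n\cap\mathcal{C}_{6\ell}$ are pairwise disjoint in $\Lambda(6\ell)\setminus\Lambda(k)$ (they can only overlap on $\Lambda(k)$). On $E_1^c$ roughly $\tfrac12 c_0(3\ell)^{1+\alpha}$ consecutive shells stay inside $\Lambda(6\ell)\setminus\Lambda(3\ell)$, and \emph{summing} $>M$ over these disjoint shells is how the factor $M$ enters the bound. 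Your single-path counting cannot recover this. To fix the proof you would need to replace the paragraph that goes from ``Otherwise every such crossing is longer'' to ``$E_2$ holds'' with the shell-summation argument.
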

\begin{proof}
  Fix $2\ell\le k<3\ell$. The definitions imply that the sets $(\mathcal{S}_n^k\cap \mathcal{C}_{6\ell})_n$ are subsets of the non-extended lattice $\Z^2$, which may intersect only at $\Lambda(k)$. In particular,
  \begin{equation}
    |\mathcal{C}_{6\ell}\cap(\Lambda(6\ell)\setminus\Lambda(3\ell))|\ge \sum_{n=1}^{N_k}|\mathcal{S}_n^k\cap \mathcal{C}_{6\ell}\cap (\Lambda(6\ell)\setminus\Lambda(3\ell))|
  \end{equation}
  On the event $E^k$, this estimate may be continued to
  \begin{equation}
  \begin{split}
    |\mathcal{C}_{6\ell}\cap(\Lambda(6\ell)\setminus\Lambda(3\ell))| &> M \max\{n\colon (\mathcal{S}_n^k\cap\mathcal{C}_{6\ell}\subset\Lambda(6\ell)\setminus\Lambda(3\ell)\}\\
    &\ge M \max\{n\colon B_n^k\subset\Lambda(6\ell)\setminus\Lambda(3\ell)\}
  \end{split}
  \end{equation}
  where in the second inequality the fact that the $\mathcal{S}_n^k$ satisfy~\eqref{eq:GoodSets} is used. If the event $E_1$ does not occur then $B_n^k\subset\Lambda(6\ell)\setminus\Lambda(3\ell)$ for $1\le n\le \frac{1}{2}c\cdot (3\ell)^{1+\alpha}$. Thus $E^k\setminus E_1 \subset E_2$.
\end{proof}
A combination of the last two lemmas provides the estimate
\begin{equation}\label{eq:surface tension bound2}
  \mathcal{T}_{\Lambda(\ell) ,\Lambda(6\ell)}\le 16J\left(M + \frac{1}{\ell}\left\langle\, \left| \mathcal{C}_{6\ell} \cap (\Lambda(3\ell)\setminus\Lambda(2\ell)) \right|\cdot \1_{E_1\cup E_2} \,\right\rangle\right)\,.
\end{equation}
To further estimate the right-hand side we take the average over the external magnetic field and apply the Cauchy-Schwartz inequality and the union bound to get
\begin{multline}\label{eq:Cauchy-Schwartz}
\mathbb{E}\left(\left\langle\, \left| \mathcal{C}_{6\ell} \cap (\Lambda(3\ell)\setminus\Lambda(2\ell)) \right|\cdot \1_{E_1\cup E_2} \,\right\rangle\right)\\
\le\sqrt{\mathbb{E}\left(\left\langle\, \left| \mathcal{C}_{6\ell} \cap (\Lambda(3\ell)\setminus\Lambda(2\ell)) \right|^2\right\rangle\right)\cdot \mathbb{E}\Big(\Big\langle\1_{E_1}+\1_{E_2} \,\Big\rangle\Big)}
\end{multline}
We apply the following bounds to the terms in the right-hand side above.
\begin{lemma}\label{lem:error terms}
  Suppose $\ell = \lfloor L/5\rfloor$ for an integer $L$ satisfying~\eqref{eq:RegulariationL2} and suppose that $3\ell>\ell_1$. Then, for absolute constants $C,c'>0$,
  \begin{align}
    &\mathbb{E}\left(\left\langle\, \left| \mathcal{C}_{6\ell} \cap (\Lambda(3\ell)\setminus\Lambda(2\ell)) \right|^2\right\rangle\right)\le \frac{C}{\gamma}\cdot m(\ell)^2\cdot \ell^{4+2\gamma}\,,\label{eq:first error bound}\\
    &\mathbb{E}\left(\left\langle \1_{E_1}\right\rangle\right)\le C e^{-c' \sqrt{\ell}}\,,\label{eq:second error bound}\\
    &\mathbb{E}\left(\left\langle \1_{E_2}\right\rangle\right)\le \frac{C}{\gamma}\cdot \frac{m(\ell)\cdot \ell^{1+\gamma-\alpha}}{M}\,.\label{eq:third error bound}
  \end{align}
\end{lemma}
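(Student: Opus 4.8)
The plan is to prove the three bounds of Lemma~\ref{lem:error terms} essentially independently, each by invoking a tool already assembled in the paper.

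For \eqref{eq:first error bound}, the key observation is that $\mathcal{C}_{6\ell}\cap(\Lambda(3\ell)\setminus\Lambda(2\ell))$ is contained in $\Lambda_1\cap\mathcal{C}_{\extB\Lambda_2}$ for the choice $\Lambda_1 := \Lambda(3\ell)\setminus\Lambda(2\ell-1)$ and $\Lambda_2 := \Lambda(6\ell)\setminus\Lambda(\ell-1)$ — concentric annuli rather than balls. One then checks that this pair satisfies the geometric assumption~\eqref{eq:geometric assumptions} with $L$ comparable to $\ell$ (here $|\Lambda_1|\asymp \ell^2$ and all distances from $\Lambda_1$ to $\extB\Lambda_2$ are between $c\ell$ and $O(\ell)$), and that the regularity hypothesis~\eqref{eq:RegulariationL2} at scale $L = 5\ell$ transfers to the hypothesis~\eqref{eq:comp decay assumption} at the smaller scale (losing only absolute constants since $m$ is monotone). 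Proposition~\ref{prop:var_bound} then yields $\E(\langle|\Lambda_1\cap\mathcal{C}_{\extB\Lambda_2}|^2\rangle)\le (C/\gamma)\,L^{2\gamma}\,(\E D_{\Lambda_1,\Lambda_2})^2$, and $\E D_{\Lambda_1,\Lambda_2} = \sum_{u\in\Lambda_1}\E\langle I_u\rangle \le |\Lambda_1|\cdot m(c\ell)\le C\ell^2 m(\ell)$ using monotonicity and, once more, the regularity hypothesis to replace $m(c\ell)$ by $O(m(\ell))$. Combining gives the stated $(C/\gamma)\,m(\ell)^2\,\ell^{4+2\gamma}$.

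For \eqref{eq:second error bound}, the event $E_1$ is precisely that $\mathcal{C}_{6\ell}$ — which is contained in the cluster $\mathcal{C}_{\extB\Lambda(6\ell)}$ of $\extB\Lambda(6\ell)$ — contains a crossing of the annulus $\overline{\Lambda(6\ell)\setminus\Lambda(3\ell)}$ of length at most $c_0\cdot(3\ell)^{1+\alpha}$. This is, up to a harmless change of the annulus ratio, the event $A_{c_0,\alpha,3\ell}$ controlled by Theorem~\ref{thm:Fractality}; one applies that theorem (with the outer boundary at scale $6\ell$ rather than $2\cdot 3\ell$, using monotonicity / domain Markov to pass to the standard annulus) to get $\E\langle\1_{E_1}\rangle\le Ce^{-c'\sqrt{3\ell}}\le Ce^{-c'\sqrt{\ell}}$, valid once $3\ell>\ell_1$.

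For \eqref{eq:third error bound}, $E_2$ is the event that $|\mathcal{C}_{6\ell}\cap(\Lambda(6\ell)\setminus\Lambda(3\ell))|$ exceeds $\tfrac12 c\,M\,(3\ell)^{1+\alpha}$, so by Markov's inequality $\E\langle\1_{E_2}\rangle\le \frac{2}{c\,M\,(3\ell)^{1+\alpha}}\,\E\langle|\mathcal{C}_{6\ell}\cap(\Lambda(6\ell)\setminus\Lambda(3\ell))|\rangle$. The first moment here is again bounded, via monotonicity and the disagreement representation~\eqref{eq:disagreement representation for D}, by a sum over $u$ in the annulus $\Lambda(6\ell)\setminus\Lambda(3\ell)$ of $m$ evaluated at a scale comparable to $\ell$, hence by $C\,\ell^2\,m(\ell)$; but one must be a little more careful to extract the extra factor $\ell^{-\alpha}$ implicit in the claimed bound. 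The right way is to use the same second-moment input: by Cauchy--Schwarz the first moment is at most $|\{u\}|^{1/2}$ times the square root of \eqref{eq:first error bound}, or more directly one bounds $\E\langle|\mathcal{C}_{6\ell}\cap(\Lambda(6\ell)\setminus\Lambda(3\ell))|\rangle\le (C/\sqrt{\gamma})\,m(\ell)\,\ell^{2+\gamma}$ by the same annulus argument as above applied to the larger annulus $\Lambda(6\ell)\setminus\Lambda(3\ell)$. Dividing by $c\,M\,(3\ell)^{1+\alpha}$ then produces exactly $(C/\gamma)\,m(\ell)\,\ell^{1+\gamma-\alpha}/M$ after absorbing constants.

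I expect the main obstacle to be bookkeeping rather than conceptual: namely, verifying that the pairs of concentric annuli appearing here genuinely satisfy the geometric hypothesis~\eqref{eq:geometric assumptions} of Proposition~\ref{prop:var_bound} and the fractality hypothesis of Theorem~\ref{thm:Fractality} after the various rescalings (ratios $2\ell:3\ell:6\ell$ versus the $\ell:2\ell$ or $1:8:25$ ratios in those statements), and checking that the regularity assumption~\eqref{eq:RegulariationL2}, which is posited at scale $L=5\ell$, can be propagated down to the intermediate scales $c\ell$ used in the first-moment estimates — this works because $m$ is monotone non-increasing, so $m(c\ell)\le m(\ell)$ trivially and the polynomial upper comparison loses only a factor $(1/c)^{1+\gamma}=O(1)$. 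Once these reductions are in place, each of the three displays follows from a single application of the corresponding earlier result together with Markov's or Cauchy--Schwarz's inequality.
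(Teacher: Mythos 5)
Your proposal is correct and follows the paper's argument closely: Proposition~\ref{prop:var_bound} applied to the concentric annuli $\Lambda(3\ell)\setminus\Lambda(2\ell)\subset\Lambda(6\ell)\setminus\Lambda(\ell-1)$ for~\eqref{eq:first error bound}, Theorem~\ref{thm:Fractality} with monotonicity for~\eqref{eq:second error bound}, and Markov's inequality with a direct first-moment estimate for~\eqref{eq:third error bound}. One small caution: the Cauchy--Schwarz alternative you float for~\eqref{eq:third error bound} does not quite apply as written, since~\eqref{eq:first error bound} controls the second moment of $|\mathcal{C}_{6\ell}\cap(\Lambda(3\ell)\setminus\Lambda(2\ell))|$, not of $|\mathcal{C}_{6\ell}\cap(\Lambda(6\ell)\setminus\Lambda(3\ell))|$; but your ``more direct'' route (sum $m(d(u,\cdot))$ over the annulus and use the regularity hypothesis to control the resulting $\sum_r m(r)$) is exactly what the paper does, and the intermediate constant should be $C/\gamma$ rather than $C/\sqrt{\gamma}$ — which in any case yields the claimed final bound.
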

\begin{proof}
  To see~\eqref{eq:first error bound} note that Proposition~\ref{prop:var_bound} shows that
  \begin{equation}
    \mathbb{E}\left(\left\langle\, \left| \mathcal{C}_{6\ell} \cap (\Lambda(3\ell)\setminus\Lambda(2\ell)) \right|^2\right\rangle\right)\le\frac{C_1}{\gamma}\cdot \ell^{2\gamma} \cdot \big(\E \left( D_{\Lambda(3\ell)\setminus\Lambda(2\ell),\Lambda(6\ell)\setminus\Lambda(\ell-1)}\right) \big)^{2}.
  \end{equation}
  for an absolute constant $C_1$. The bound~\eqref{eq:first error bound} now follows by monotoncity of the disagreement percolation and the assumption~\eqref{eq:RegulariationL2}.

  The probability of the event $E_1$ is estimated using Theorem~\ref{thm:Fractality} and monotonicity. For the event $E_2$, monotonicity and the assumption~\eqref{eq:RegulariationL2} may again be invoked to yield
  \begin{equation}\label{eq:D estimate}
  \begin{split}
    \mathbb{E}\left(\left\langle|\mathcal{C}_{6\ell}\cap (\Lambda(6\ell)\setminus\Lambda(3\ell))|\right\rangle\right)&=\E \left( D_{\Lambda(6\ell)\setminus\Lambda(3\ell),\Lambda(6\ell)\setminus\Lambda(\ell-1)}\right)\\
    &\le C_2\cdot\ell\cdot\sum_{r=0}^{3\ell-1} m(r)\le \frac{C_3}{\gamma}\cdot m(\ell)\cdot \ell^{2+\gamma}.
  \end{split}
  \end{equation}
  for absolute constants $C_2, C_3$. Thus~\eqref{eq:third error bound} follows from Markov's inequality.
\end{proof}
Combining~\eqref{eq:surface tension bound2},~\eqref{eq:Cauchy-Schwartz} and the preceding lemma we arrive at
\begin{equation}\label{eq:surface tension bound3}
  \mathbb{E}\left(\mathcal{T}_{\Lambda(\ell) ,\Lambda(6\ell)}\right)\le 16J\left(M + \frac{1}{\ell}\sqrt{\frac{C}{\gamma}\cdot m(\ell)^2\cdot \ell^{4+2\gamma}\cdot\left(C e^{-c' \sqrt{\ell}} + \frac{C}{\gamma}\cdot \frac{m(\ell)\cdot \ell^{1+\gamma-\alpha}}{M}\right)}\right)\,.
\end{equation}
Finally choosing $M:=\frac{1}{\gamma^{2/3}}\cdot m(\ell) \cdot  \ell^{1 -\frac{1}{3}\alpha + \gamma}$ finishes the proof of the proposition.

\subsection{Proof of Proposition~\ref{prop:fast power law}} \mbox{ } \\[-1ex]

Let $\alpha = \alpha(J/\eps)$ and $\ell_1 = \ell_1(J/\eps)$ take their values from Theorem~\ref{thm:Fractality}.
Fix a constant $0<c_0<1$ throughout the proof. Suppose $N$ is an integer satisfying
\begin{equation}\label{eq:contradiction assumption}
  m(N)>\frac{c_0}{N+1}.
\end{equation}
Applying Lemma~\ref{lem:comp_decay} with $(m(j))$ in the role of $(p(j))$, $k=N$ and $\gamma=\frac{1}{12}\alpha$ we obtain the existence of an integer $L\ge 0$ satisfying
\begin{equation}\label{eq:L and N}
  N\ge L\ge (N+1)\cdot m(N)^{1/(1+\frac{1}{12}\alpha)}-1\ge (N+1)\left(\frac{c_0}{N+1}\right)^{1/(1+\frac{1}{12}\alpha)}-1.
\end{equation}
for which
\begin{equation}\label{eq:comp decay end of proof}
  m(L) \leq m(j) \leq m(L) \left(\frac{L+1}{j+1}\right)^{1+ \frac{\alpha}{12}} \quad 0 \leq j \leq L.
\end{equation}
Set $\ell = \lfloor L/5\rfloor$. Assuming that $L>\ell_1$ we may Apply Proposition~\ref{prop:GoodSTUpperBound} to obtain
\begin{equation}
  \mathbb{E}[\mathcal{T}_{\Lambda(\ell) ,\Lambda(6\ell)}] \leq C_1\alpha^{-\frac{2}{3}}\cdot J \cdot m(\ell) \cdot  \ell^{1 - \frac{1}{4}\alpha}
\end{equation}
where we let $C_j$ stand for absolute constants. This bound is used in conjunction with Proposition~\ref{prop:AntiConcentration} to deduce that
\begin{equation}
  \P\left[\frac{D_{\Lambda(\ell) ,\Lambda(6\ell)}}{\E[D_{\Lambda(\ell) ,\Lambda(6\ell)}]} < 1/2 \right] \geq \chi\left(\frac{1} {2 \eps }\cdot  \frac{ C_1\alpha^{-\frac{2}{3}}\cdot J \cdot m(\ell) \cdot  \ell^{1 - \frac{1}{4}\alpha}}{\sqrt{|\Lambda(\ell)|}} \cdot \frac{|\Lambda(\ell)|}{\E[D_{\Lambda(\ell) ,\Lambda(6\ell)}]} \right).
\end{equation}
Using also the monotonicity of disagreement percolation and~\eqref{eq:comp decay end of proof} we find that
\begin{equation}\label{eq:anti-concentration end of proof}
  \P\left(\frac{D_{\Lambda(\ell) ,\Lambda(6\ell)}}{\E(D_{\Lambda(\ell) ,\Lambda(6\ell)})} < 1/2 \right) \geq \chi\left(\frac{C_2}{\alpha^{2/3}}\cdot \frac{J} {\eps }\cdot\ell^{-\frac{1}{4}\alpha} \right),
\end{equation}
noting that the right-hand side is very close to $1$ for $\ell$ large.

In contrast, Proposition~\ref{prop:var_bound} shows that
\begin{equation}
  \E\big((D_{\Lambda(\ell),\Lambda(6\ell)})^2\big)\ \le\ \frac{C_3}{\alpha}\cdot \ell^{\frac{1}{6}\alpha} \cdot \big(\E \left( D_{\Lambda_1,\Lambda_2}\right) \big)^{2}\,.
\end{equation}
This implies, via the one-sided Chebyshev inequality, that
\begin{equation}\label{eq:concentration end of proof}
  \P\left(\frac{D_{\Lambda(\ell) ,\Lambda(6\ell)}}{\E(D_{\Lambda(\ell) ,\Lambda(6\ell)})} < 1/2 \right) < \frac{1}{1+\frac{\alpha}{C_4}\ell^{-\frac{1}{6}\alpha}}.
\end{equation}
The bounds~\eqref{eq:anti-concentration end of proof} and~\eqref{eq:concentration end of proof} are in contradiction for $\ell$ sufficiently large. The contradiction shows that our initial assumption~\eqref{eq:contradiction assumption} is false, thus concluding the proof of the proposition.

Since $\ell = \lfloor L/5\rfloor$, we may track the dependence through~\eqref{eq:L and N} and deduce that
\begin{equation}\label{eq:quantitative dependence}
  N>\frac{1}{c_0}\max\left\{2,\frac{J}{\eps}, \frac{1}{\alpha}, \ell_1\right\}^{\frac{C_5}{\alpha^2}}  \quad \text{implies} \quad m(N) \leq \frac{c_0}{N+1}.
\end{equation}
\section{Open Questions}
The question discussed here remains open for systems of continuous spin variables, e.g. two-component spin models with rotation-invariant interaction.  It is of particular interest with regards to random fields with rotationally-invariant distribution, in dimensions $2 \leq d \leq 4$. \\

A remaining question for the random-field Ising model is the quantitative dependence of the \emph{correlation length} $\xi$ on the disorder strength. There are actually a number of relevant length scales.  One of these, which makes sense even in the absence of exponential decay, is the distance  $\xi_0(\eps/J)$ at which the order parameter drops below some fixed threshold.   Another, which is finite only in the presence of exponential decay, is the inverse of the rate of that decay. The preceding works~\cite{C17, AP18} provide the upper bound
\begin{equation}\label{eq:correlation length}
  \xi_0\le e^{e^{O\left(\left(J/\eps\right)^2\right)}}
\end{equation}
at weak disorder.   A heuristic upper bound of the same general form may be understood from the analysis of Imry and Ma~\cite{IM75} or from the Mandelbrot percolation picture of~\cite{AP18}.   As stated, Theorem~\ref{thm:exponential_bound} does not provide quantitative bounds on the correlation length; however, its proof implies such bounds in terms of the quantities arising from the tortuosity result~\cite{AB99} (see Section~\ref{sec:exp decay}).   It is of interest to determine the general rate of growth of $\xi$. As an alternative to the form appearing in~\eqref{eq:correlation length}, the behavior
\be
\xi \simeq \exp\!\left(O\left(J/\eps)^2\right)\right)
\ee was discussed in~\cite{BK88}.

\section*{Acknowledgements}
We gratefully acknowledge the following support.  The work of MA was supported in parts by the NSF grant DMS-1613296, and the Weston Visiting Professorship at the Weizmann Institute of Science.
The work of MH and RP was supported in part by Israel Science Foundation grant 861/15 and the European Research Council starting grant 678520 (LocalOrder).  MH was supported by the Zuckerman Postdoctoral Fellowship.
We  thank the Faculty of Mathematics and Computer Science and the Faculty of Physics at WIS for the hospitality enjoyed there during work on this project.


\begin{thebibliography}{99}

\bibitem{AB99}
Michael Aizenman and Almut Burchard. ``H\"older regularity and dimension bounds for random curves." Duke Math.  Journ. {\bf 99}, 419-453 (1999).

\bibitem{AGL12} Michael Aizenman, Rafael L. Greenblatt, Joel L. Lebowitz.
``Proof of Rounding by Quenched Disorder of First Order Transitions in Low-Dimensional Quantum Systems.''   J. Math. Phys. {\bf 53}, 023301 (2012).

\bibitem{AP18}
Michael Aizenman and Ron Peled. ``A power-law upper bound on the correlations in the $2D$ random field Ising model." Preprint arXiv:1808.08351 (2018).

\bibitem{AW89}
Michael Aizenman and Jan Wehr. ``Rounding of first-order phase transitions in systems with quenched disorder." Phys. Rev. Lett. {\bf  62}  (1989): 2503.

\bibitem{AW90}
Michael Aizenman and Jan Wehr. ``Rounding effects of quenched randomness on first-order phase transitions." Comm. Math. Phys. {\bf 130},  489-528  (1990).

\bibitem{B93}
Jacob van den Berg. ``A uniqueness condition for Gibbs measures, with application to the 2-dimensional Ising antiferromagnet", Comm. Math. Phys. {\bf 152}, 161-166 (1993).

\bibitem{Ber85} Alberto Berretti.  ``Some properties of random Ising models''. J. Stat. Phys. {\bf 38}, 483-496 (1985).

\bibitem{BK87}
Jean Bricmont, and Antti Kupiainen . ``Lower critical dimension for the random-field Ising model." Physical review letters, 59(16), 1829--1832 (1987).


\bibitem{BK88}
Jean Bricmont and Antti Kupiainen. ``The hierarchical random field Ising model." J. Stat. Phys. {\bf 51}, 1021-1032 (1988).

\bibitem{BK882}
Jean Bricmont,  and Antti Kupiainen. ``Phase transition in the 3d random field Ising model." Communications in mathematical physics, 116(4), 539--572 (1988).

\bibitem{CJN18}
Federico Camia, Jianping Jiang, and Charles M. Newman. ``A note on exponential decay in the random field Ising model."  J. Stat. Phys. {\bf 173},  268-284 (2018).

\bibitem{C17}
Sourav Chatterjee. ``On the decay of correlations in the random field Ising model." Comm. Math. Phys. {\bf 362},  253-267 (2018).

\bibitem{CAP17}
Omri Cohen-Alloro and Ron Peled. ``Rarity of extremal edges in random surfaces and other theoretical applications of cluster algorithms."  Preprint arXiv:1711.00259 (2017).

\bibitem{DS84}
Bernard Derrida and Yitzhak Shnidman. ``Possible line of critical points for a random field Ising model in dimension 2." J. Phys. Lett. {\bf 45}, 577-581 (1984).

\bibitem{DX19}
Jian Ding and Jiaming Xia. ``Exponential decay of correlations in the two-dimensional random field Ising model at zero temperature." Preprint arXiv:1902.03302 (2019).

\bibitem{GM82}
Geoffrey Grinstein and Shang-Keng Ma. ``Roughening and lower critical dimension in the random-field Ising model." Phys. Rev. Lett. {\bf  49},  685 (1982).

\bibitem{Imb85}  John Imbrie , ``The ground state of the three-dimensional random-field Ising model.''  Comm. Math. Phys. {\bf 98},   145-176  (1985).

\bibitem{ImbFro85}  John Imbrie and J\"urg Fr\"ohlich.  ``Improved perturbation expansion for disordered systems: beating Griffiths singularities."  Comm. Math. Phys. {\bf 96}, 145-180 (1984).

\bibitem{IM75}
Yoseph Imry and Shang-Keng Ma. ``Random-field instability of the ordered state of continuous symmetry." Phys. Rev. Lett. {\bf 35}, 1399  (1975).

\bibitem{S05}
Scott Sheffield. ``Random surfaces." Ast\'erisque (2005).
\end{thebibliography}
\end{document}